\let\G\relax
\let\E\relax
\let\define\relax
\DeclareMathOperator{\R}{\mathcal{R}}
\DeclareMathOperator{\G}{\mathcal{G}}
\DeclareMathOperator{\E}{\mathbb{E}}
\DeclareMathOperator{\T}{\mathcal{T}}
\DeclareMathOperator{\M}{\mathcal{M}}
\DeclareMathOperator{\lo}{\text{lower}}
\DeclareMathOperator{\TIME}{TIME}
\DeclareMathOperator*{\argmax}{arg\,max}
\DeclareMathOperator*{\argmin}{arg\,min}
\DeclareMathOperator{\define}{\coloneqq}
\DeclareMathOperator{\poly}{poly}
\newtheorem{theorem}{\textbf{Theorem}}
\newtheorem{lemma}{\textbf{Lemma}}
\newtheorem{corollary}{\textbf{Corollary}}
\theoremstyle{definition}
\newtheorem{definition}{\textbf{Definition}}
\newtheorem{problem}{\textbf{Problem}}
\newtheorem{remark}{\textbf{Remark}}
\newtheorem{claim}{\textbf{Claim}}
\newcommand{\RNum}[1]{\uppercase\expandafter{\romannumeral #1\relax}}
\begin{document}
%
\title{On Multi-Cascade Influence Maximization: Model, Hardness and Algorithmic Framework}
%
%
%

\author{Guangmo Tong,~\IEEEmembership{Member,~IEEE,}
        Ruiqi Wang,
        and Zheng Dong,~\IEEEmembership{Member,~IEEE}
\thanks{G. Tong and R. Wang are with the Department
of Computer and Information Science, University of Delaware, Newark,
DE, 19711, USA,  e-mail: \{amotong, wangrq\}@udel.edu.}
\thanks{Z. Dong is with Wayne State University, Detroit, MI ,48202, USA. e-mail: dong@wayne.edu.}}

%
%

\markboth{Journal of \LaTeX\ Class Files,~Vol.~14, No.~8, August~2015}%
{Shell \MakeLowercase{\textit{et al.}}: Bare Demo of IEEEtran.cls for IEEE Journals}
%



\maketitle

\begin{abstract}
This paper studies the multi-cascade influence maximization problem, which explores strategies for launching one information cascade in a social network with multiple existing cascades. With natural extensions to the classic models, we first propose the independent multi-cascade model where the diffusion process is governed by the so-called activation function. We show that the proposed model is sufficiently flexible as it generalizes most of the existing cascade-based models. We then study the multi-cascade influence maximization problem under the designed model and provide approximation hardness under common complexity assumptions, namely Exponential Time Hypothesis and $NP \subseteq DTIME(n^{\poly \log n})$. Given the hardness results, we build a framework for designing heuristic seed selection algorithms with a testable data-dependent approximation ratio. The designed algorithm leverages upper and lower bounds, which reveal the key combinatorial structure behind the multi-cascade influence maximization problem. The performance of the framework is theoretically analyzed and practically evaluated through extensive simulations. The superiority of the proposed solution is supported by encouraging experimental results, in terms of effectiveness and efficiency. 
\end{abstract}

\begin{IEEEkeywords}
multi-cascade influence maximization, approximation hardness, data-dependent approximation ratio.
\end{IEEEkeywords}

%
\IEEEpeerreviewmaketitle

\section{Introduction}
\IEEEPARstart{I}{nfluence} Maximization (\textbf{IM}) considers the problem of selecting seed nodes for an information cascade such that the total influence can be maximized. IM finds applications in various domains, such as viral marketing and epidemic control, and it thus has drawn tremendous attention \cite{li2018influence,zhang2014recent,sun2011survey,aslay2018influence}. In the seminal work of Kempe, Tardos and Kleinberg \cite{kempe2003maximizing}, two basic operational models, Independent Cascade (\textbf{IC}) model and Linear Threshold (\textbf{LT}) model, were proposed for modeling information diffusion, where the former focuses on the peer-to-peer communication while the latter concerns the accumulative influence from neighbors. Motivated by IM, more realistic scenarios were studied, among which an important one is the multi-cascade influence maximization where we assume that there are multiple existing cascades. Indeed, one can find examples that multiple companies compete for brand awareness on Facebook or that misinformation and its debunking information fight against each other on Twitter. In this paper, we study the \textit{Multi-cascade Influence Maximization} (\textbf{MIM}) problem. Our work consists of four parts: model design, hardness analysis, algorithm design, and experimental study.

\textbf{Model Design.} The start point is to build a multi-cascade diffusion model where the key setting is to define the correlation between different cascades during the diffusion process. Under the classic IC model for single-cascade diffusion, each activated user has one chance to activate each of their inactive neighbors, which defines the manner in which the information propagates between a pair of users. When there are multiple cascades, we need to further define the outcome when two or more cascades reach one user at the same time. Following such a prototype, we aim at designing multi-cascade diffusion models in this paper. On the one hand, there exist a few models for multi-cascade diffusion, but they are either competition-oblivious \cite{datta2010viral} by assuming that the diffusion of different cascades is independent, or they are designed based on particular settings such as the order-based models (e.g., \cite{li2014polarity,chen2011influence,tong2018distributed}) and priority-based models (e.g., \cite{budak2011limiting,tong2018misinformation}). These models are sufficiently simple so as to admit constant approximation algorithms for MIM, but they fail to capture the complex correlations between cascades. On the other hand, sophisticated models such as deep neural networks are extremely expressive, while they completely cloak the combinatorial structure. In this paper, we propose the concept of \textit{activation function}, which in general specifies how a user is activated if they receive multiple cascades simultaneously. Based on the activation function, we build the Independent Multi-Cascade (\textbf{IMC}) model. The activation function is not required to follow a certain form, and it can be flexibly realized for a specific application or learned from real data. As shown later, the proposed IMC model generalizes a few existing multi-cascade diffusion models.

\textbf{Hardness Analysis.} The IM problem is well understood credited to the recent advances starting from Kempe \textit{et al.} \cite{kempe2003maximizing}. More specifically, IM is shown to be NP-hard, and it admits constant approximation algorithms due to its submodularity. Even though its objective function is \#P-hard to compute \cite{chen2010scalable}, near-optimal algorithms have been designed and then improved significantly \cite{borgs2014maximizing,tang2014influence,tang2015influence,nguyen2016stop}. MIM has also been studied in the existing works, but very few hardness results are known in addition to being NP-hard to solve optimally. In this paper, we focus on the approximation hardness of MIM. Based on the IMC model, we consider the MIM problem under two formulations \textbf{Max-MIM} and \textbf{Min-MIM}, where Max-MIM maximizes the influence of the new cascade while Min-MIM minimizes the users that are not aware of the new cascade. For the Max-MIM problem, we prove that no  $n^{1/\poly \log \log n}$-approximation exists unless the Exponential Time Hypothesis (\textbf{ETH}) is false. For the Min-MIM problem, we prove that it cannot be approximated within a factor of $\Omega(2^{\log^{1-\epsilon}n^3-\log n})$ unless $NP \subseteq DTIME(n^{\poly \log n})$. 

\textbf{Algorithm Design.} The IM problem is currently solved by two main techniques: submodular maximization and reverse sampling. The MIM problem can be submodular under simple settings (e.g., two-cascade diffusion case), but it is not true for the general case, as observed in earlier works (e.g., \cite{lu2015competition,tong2018misinformation}). Therefore, existing techniques do not trivially apply to the multi-cascade case, and our objective herein is to design scalable algorithms for solving MIM under the general IMC model. Given the hardness results, we focus on heuristic algorithms and propose the Reverse Sandwich (\textbf{RS}) algorithm designed through a novel reverse sandwich sampling method. The key findings in our framework are the upper and lower bounds of the objective function which are simple but applicable to any activation function. More importantly, the derived bounds are submodular and therefore can be effectively approximated. Given certain parameters $\epsilon \in (0,1)$ and $l>1$, with probability at least $1-n^{-l}$ the proposed algorithm runs in $O(\frac{(m+n)(l+k)\ln n}{\epsilon^2})$ and provides an approximation ratio of $(1-\epsilon)\cdot \gamma(\Psi_l)-\epsilon$ where $\gamma(\Psi_l)$ is data-dependent and upper bounded by $1-1/e$.\footnote{It is called as data-dependent if it relies on the structure of the graph rather than solely on the number of nodes and edges. The formal definition of $\gamma(\Psi_l)$ is provided in later Sec. \ref{alg: framework}.} The ratio $\gamma(\Psi_l)$  can be evaluated for each produced solution and therefore provides real-time testable performance bound.

\textbf{Experiments.} We experimentally examine the performance of the proposed algorithm with different choices of activation functions on real-world datasets. The results have demonstrated the superiority of the RS algorithm in terms of both efficiency and effectiveness. In particular, the data-dependent ratio $\gamma(\Psi_l)$ is consistently close to $1-1/e$, implying that the approximation ratio is practically near-constant. Furthermore, the RS algorithm outperforms the simple greedy algorithm and other baselines by a significant margin, and it is scalable to handle large datasets. 

\textbf{Roadmap.} We survey the related work in Sec. \ref{sec: relate}. The system model and problem formulation are given in Sec. \ref{sec: pre}. In Sec. \ref{sec: hardness}, we analyze the hardness of MIM. The algorithmic framework is given in Sec. \ref{sec: framework}. Further discussions are put into Sec. \ref{sec: discuss}. In Sec. \ref{sec: exp}, we present the experimental results. Sec. \ref{sec: con} concludes this paper. The proofs can be found in the appendix.

%

\section{Related Work}
\label{sec: relate}

\textbf{Influence Maximization and Reverse Sampling.} IM was proposed by Kempe \textit{et al.} \cite{kempe2003maximizing} where the IC and LT model were formulated for information diffusion. It is shown in \cite{kempe2003maximizing} that IM as a combinatorial optimization problem is monotone submodular and therefore the simple greedy algorithm provides a $(1-1/e)$-approximation \cite{feige1998threshold}. However, the objective function of IM under the classic models is \#P-hard to compute \cite{chen2010scalable} and efficient heuristics were designed later by various researchers (e.g., \cite{li2018influence, zhang2014recent}). One breakthrough to overcome the \#P-hardness was made by Borgs \textit{et al.} \cite{borgs2014maximizing} where the reverse sampling method was designed to efficiently estimate the objective function. Based on the reverse sampling method, advanced stochastic algorithms for solving IM were designed (\cite{tang2014influence,tang2015influence,nguyen2016stop}) which have significantly reduced the running time without sacrificing the performance bound. Even though these techniques do not directly apply to MIM, our algorithm for MIM in this paper still benefits from the reverse sampling method as well as its improvements.

\textbf{Multi-cascade Diffusion.} As an immediate generalization of IM, multi-cascade influence maximization has also been extensively studied. Earlier literature has worked on the case when there are two information cascades (e.g., \cite{budak2011limiting,he2012influence,fan2013least,lin2015analyzing,tong2017efficient,li2014polarity, li2018dominated}) in which the objective function remains monotone and submodular and therefore constant approximations are obtainable. Later in \cite{lu2015competition}, the authors considered the comparative influence diffusion where one node can adopt multiple cascades, and showed that the MIM problem is not submodular. When multiple cascades exist, the influence maximization problem can be elusive as even not being monotone \cite{tong2018distributed}, showing that carelessly selecting seed nodes can be harmful. However, to the best of our knowledge, there is no hardness of approximation known for MIM under any of the existing multi-cascade diffusion models, which is the gap we attempt to fill in this paper. For multi-cascade diffusion, there exists literature concerning game-theoretical analysis  (e.g., \cite{molinero2015cooperation, clark2011maximizing, fazeli2012game, tzoumas2012game, apt2011diffusion, li2015getreal}), seed minimization problems (e.g., \cite{shirazipourazad2012influence, zhu2016minimum, bozorgi2017community,lu2013bang}), threshold-based models  (e.g., \cite{apt2014social,molinero2015cooperation,pathak2010generalized,he2012influence, borodin2010threshold,bozorgi2017community,lin2015learning, litou2017influence}) or temporal models \cite{zarezade2017correlated}, which are however not closely related to our work. In this paper, we focus on cascade-based diffusion models where  a user can be activated for at most once.

\textbf{Sandwich Approximation Strategy.} We in this paper leverage the sandwich approximation strategy proposed by Lu \textit{et al.} \cite{lu2015competition}, which utilizes the upper and lower bounds of the objective function. Such a method have been adopted in a few following works (e.g., \cite{garimella2017balancing, wang2017activity, tong2018misinformation}) where the bounds are model specific and furthermore cannot produce scalable algorithms. Different from the existing works, our bounds in this paper are observed during the reverse sampling phase, and they apply to the IMC model with arbitrary activation functions.

%
%
%
%
%
%
%
\section{Preliminaries}
\label{sec: pre}

\subsection{Diffusion Model}
In this section, we present the Independent Multi-Cascade (IMC) model which is an extension to the classic IC model \cite{kempe2003maximizing}. The social network is given by a directed graph $G=(V,E)$ where associated with each ordered pair of nodes $u$ and $v$ there is a propagation probability $p_{(u,v)} \in (0,1]$. Let $n=|V|$ and $m=|E|$ be number of nodes and edges, respectively. For a graph $G$, we use the notation $V(G)$ (resp., $E(G)$) to denote its node set (resp., edge set). We use $C$ to denote the set of the cascades and speak of a node $v$ as being \textit{$c$-active} for some cascade $c \in C$ if $v$ is activated by cascade $c$. 
For a certain time step $t$, we use  $\pi_t(v)$ to denote the state of node $v$. If $v$ is not activated by any cascade, we define $\pi_t(v)=\emptyset$. Each cascade $c \in C$ starts to spread from its seed set $\tau(c) \subseteq V$, and the nodes in $\tau(c)$ are the first being $c$-active. When a node $v$ is $c$-active for some cascade $c \in C$, they have one chance to activate each inactivate out-neighbor $v$ with a success probability of $p_{(u,v)}$ and $v$ will also be $c$-active if the activation succeeds. We assume that one node can be activated for at most once, and therefore they will be typically activated by the cascade arriving first. However, when there are multiple cascades, one node can be activated by neighbors belonging to different cascades at the same time. The core part of a multi-cascade model is to define the behavior of the diffusion process in such a scenario. For such purposes, we propose the following concept.

\begin{definition}[\textbf{Activation Function}]
\label{def: pri_function}
Let $V \times C$ be the space consisting of pairs of node and cascade. For each node $v$, there is an \textit{activation function} $F_v: 2^{V \times C} \rightarrow C$. The activation function controls the diffusion process as follows:
\begin{itemize}
\item (a) When a node $v$ is selected as a seed node by a set of cascades $C^{'} \subseteq C$, $v$ will be $F_v(O)$-active where $O=\{(v,c)|c\in C^{'}\} \subseteq V \times C$ and $F_v(O) \in C^{'}$, which defines the initial states.
\item  (b) When $v$ is for the first time activated by a set of neighbors $A$ at a certain time step $t$, $v$ will be $F_v(O)$-active where $O=\{(u,\pi_{t-1}(u))|u\in A\} \subseteq V \times C$ and $F_v(O) \in \{\pi_{t-1}(u)| u\in A\}$, which defines the cascade behavior during the diffusion.
\end{itemize}
 
\end{definition}

Incorporating activation function into the IC model, we are ready to present the diffusion process under the IMC model.

\begin{definition}[\textbf{Diffusion Process}] Given the seed sets of the cascades, the diffusion process goes as follows:
\begin{itemize}
\item Step $0$. If a node $v$ is selected as a seed node by a set $C^{'}$ of cascades, its state is determined according to Def. \ref{def: pri_function} (a). Other nodes will remain to be $\emptyset$-active.
\item Step $t$. Each node $u$ activated in time $t-1$ to activate each of $u$'s inactivate out-neighbors, say $v$, with a success probability of $p_{(u,v)}$. If a node $v$ is successfully activated by one or more neighbors, their state is determined according to Def. \ref{def: pri_function} (b).
\item The process terminates when no node can be further activated.
\end{itemize}

\end{definition}

\begin{figure}[t]
\begin{center}
\includegraphics[width=0.40\textwidth]{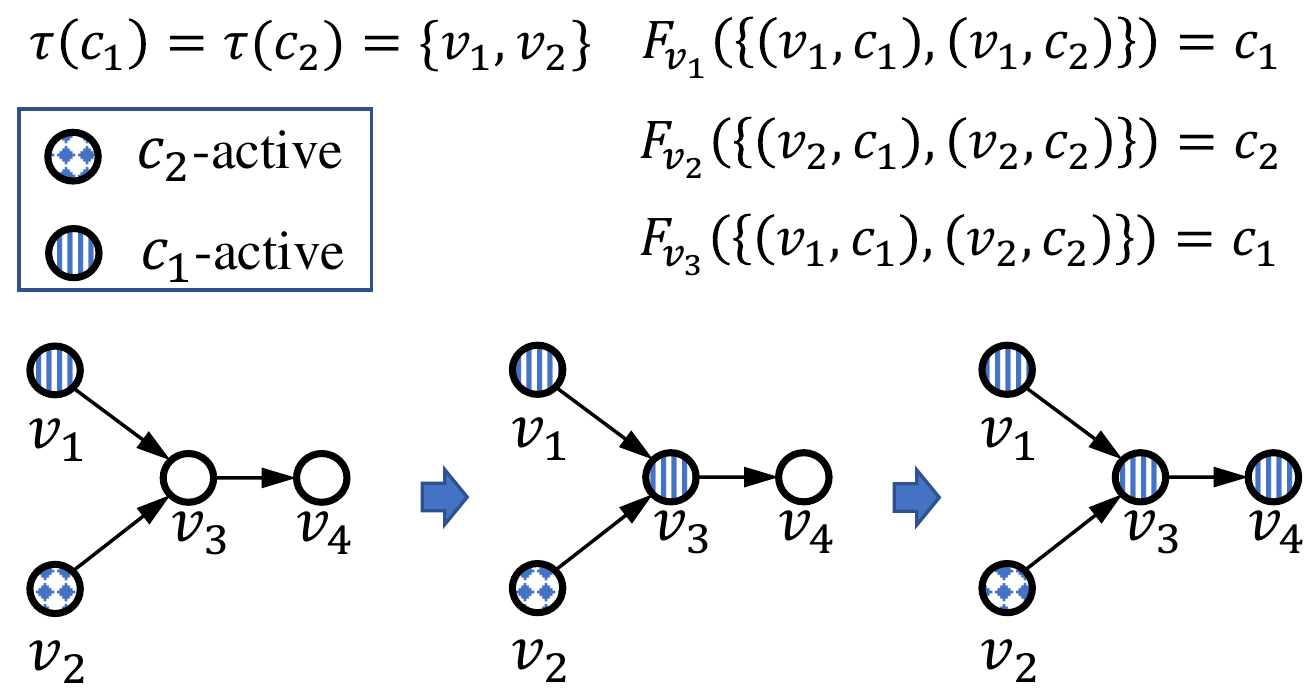} 
\end{center} 
\caption{Diffusion Process Example.  In this example, we have two cascades with the same seed set $\{v_1,v_2\}$ and we assume the propagation probability is equal to 1 for each edge. According to the activation function given in the figure, $v_1$ and $v_2$ will be $c_1$-active and $c_2$-active, respectively, and, $v_3$ and $v_4$ will then become $c_1$-active.}
\label{fig: example}
\vspace{-3mm}
\end{figure}

A toy example for illustration is given in Fig. \ref{fig: example}

\begin{remark}
It is possible to consider probabilistic activation functions, where $F_v$ specifies a probability distribution over the coming cascades. In this paper, we will focus on deterministic activation functions for the convenience of analysis, and our analysis applies to the probabilistic case as well. In another issue, the diffusion process generally exhibits competition as we assume a node never changes their state, but the activation function can also be implemented to describe the cooperation between cascades to a certain degree. For example, for a node $v$ with in-neighbors $v_1, v_2$ and $v_3$, we may have the activation function for three cascades $c_1, c_2$ and $c_3$, such that $F_v(\{(v_1,c_1), (v_2,c_2)\})=c_1$ and  $F_v(\{(v_1,c_1), (v_2,c_2), (v_3,c_3)\})=c_2$, showing that the exposure of $c_3$ to $v$ can help cascade $c_2$ win against $c_1$. In the context of economics, $c_3$ might be a complementary good to $c_2$. Note that the activation function is not required to be defined over the entire $2^{V \times C}$ as not all the subsets are valid inputs. Finally, an important setting is that $F_v(O) \in \{\pi_{t-1}(u)| u\in A\}$, implying that $v$ must be activated by the cascade that reaches them. In other models (e.g., \cite{li2014polarity,li2015getreal}), it is allowed that  $F_v(O) \notin \{\pi_{t-1}(u)| u\in A\}$ to which the analysis in this paper does not apply.
\end{remark}

\textbf{Special Models.} The IMC model reduces to the classic IC model when there is only one cascade. For the multi-cascade case, the activation function in its general form does not assume any specific rule and thus it generalizes many of the existing models, such as the order-based models \cite{li2014polarity,chen2011influence,tong2018distributed} and cascade-priority-based models \cite{budak2011limiting,tong2018misinformation} of which the settings can be equivalently described by the activation function. We herein mention two types of activation functions that are commonly seen in real social networks.

\begin{definition}[Cascade-based Activation Function]
\label{def: cascade_function}
A special class of activation functions can be obtained by solely considering the characteristics of the cascades and ignoring the information on the neighbor side. 
For such a case, the activation function can be simplified as $F_v: 2^C \rightarrow C$ taking only the coming cascades as the input. We may further simplify it by assuming a total order of the cascades at each node, and define $F_v: C \rightarrow \{1,...,|C|\}$ with $F_v(c_1)\neq F_v(c_2)$ for two cascades $c_1$ and $c_2$, meaning that $v$ will be activated by the cascade with the largest $F_v(c)$. Such a setting is termed as \textit{total-order-based activation function}.
\end{definition}

\begin{definition}[Neighbor-based Activation Function]
\label{def: neighbor_function}
Another type of activation function is designed with the consideration of the degree of friendship. It is intuitive that a node will be convinced by the best friend if they receive different cascades from multiples friends, and therefore the activation functions can be implemented as a total order among the in-neighbors. Such activation functions generalize the models in \cite{li2014polarity,tong2018distributed}. In addition, it is intuitive to assume that a user will be activated by the cascade that has been adopted by the majority or weighted majority of the neighbors.
\end{definition}

\subsection{Problem Formulation}
Given the seed sets of the existing cascades and an integer $k \in \mathbb{Z}^+$, our goal is to launch a new cascade with $k$ seed nodes selected from a candidate set $V_c \subseteq V$ such that its total influence is maximized. Without loss of generality, we assume $V_c = V$ unless otherwise stated. We use $C_e=\{c_1,...,c_L\}$ to denote the existing $L\in \mathbb{Z}^+$ cascades with known seed sets $\{\tau(c_i)\}_{i=1}^{L}$ and use $c_{new}$ to denote the newly introduced cascade. Therefore, the total cascade set is $C=\{c_1,...,c_L, c_{new}\}$. We assume that the activation function $F_v$ defined over $C$ for each $v$ is known to us, and, we use $f(S)$ to denote the expected influence of $c_{new}$ when $S \subseteq V$ is selected as its seed set. The considered problem is formally described as follows.

\begin{problem}[\textbf{Multi-cascade Influence Maximization (MIM) Problem}]
\label{problem: main}
Compute 
\[S_{opt} \define \argmax_{\{S \subseteq V_c, |S|= k\}}f(S),\]
or equivalently,
\[S_{opt} \define \argmin_{\{S \subseteq V_c, |S|= k\}} \overline{f}(S),\]
where $\overline{f}(S)$ is the expected number of the nodes that are not $c_{new}$-active. We use the notation $\M$ to denote an instance of MIM, and denote the problems with the above two objectives as Max-MIM and Min-MIM, respectively. 
\end{problem}

\begin{remark}
Some applications such as viral marketing are more suitable to be considered as Max-MIM because the profit is counted in terms of the influenced nodes, while in other scenarios, such as propagating critical information, Min-MIM is more appropriate because the loss therein is usually measured by the number of the users not receiving the information. These two problems are equivalent in terms of the optimal solution while they may have different inapproximability. Due to the linearly of expectation, it useful to decompose $f(S)$ as $f(S)=\sum_{v \in V} f_v(S)$ where $f_v(S)$ is the probability that node $v$ is $c_{new}$-active. 
\end{remark}


\section{Hardness Analysis}  
\label{sec: hardness}
This section establishes the hardness of Min-MIM and Max-MIM. We omit to show their NP-hardness as stronger hardness results will be provided.

\subsection{Hardness of Max-MIM}
The hardness of Max-MIM is derived from the Densest k-Subgraph (\textbf{DkS}) problem.
\begin{problem}[Densest k-subgraph]
Given an undirected graph $\overline{G}=(\overline{V},\overline{E})$ and $\overline{k}\in \mathbb{Z}^+$, find the subgraph of $\overline{k}$ nodes with the maximum number of edges. 
\end{problem}

\begin{lemma} 
\label{lemma: max_mim}
DkS can be approximated within a factor of $2 \cdot \alpha$ if Max-MIM can be approximated within a factor of $\alpha$ for each $\alpha>1$.
\end{lemma}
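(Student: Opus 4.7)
The plan is to reduce DkS to Max-MIM through an explicit gadget construction. Given a DkS instance $(\overline{G} = (\overline{V}, \overline{E}), \overline{k})$, I will build an MIM instance $\M$ as follows. The node set of $G$ consists of $\overline{V}$, one additional root $r$, and, for every edge $(u,v) \in \overline{E}$, a block of $\overline{k}$ parallel ``edge-copies'' $e_{uv}^1, \dots, e_{uv}^{\overline{k}}$; I install directed edges $u \to e_{uv}^j$, $v \to e_{uv}^j$, and $r \to e_{uv}^j$ with propagation probability $1$ each. One existing cascade $c_1$ is introduced with seed set $\{r\}$ and the $c_{new}$-budget is set to $k = \overline{k}$. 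The crucial design choice is the activation function at every edge-copy $e_{uv}^j$: I declare $F_{e_{uv}^j}(O) = c_{new}$ whenever $O$ contains both $(u, c_{new})$ and $(v, c_{new})$, and $F_{e_{uv}^j}(O) = c_1$ otherwise; activation functions at the remaining nodes are immaterial since those nodes never face competing cascades.

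Next I would verify the identity
\[ f(S) = k + \overline{k}\cdot |E_S| \]
for every $S \subseteq \overline{V}$ with $|S| = k$, where $|E_S|$ denotes the number of edges of $\overline{G}$ whose both endpoints lie in $S$. This is read directly off the IMC process: at time $1$ the root $r$ delivers $c_1$ to every edge-copy, and an edge-copy $e_{uv}^j$ additionally receives $c_{new}$ from both of its $\overline{V}$-neighbors precisely when $\{u,v\} \subseteq S$, in which case the activation function awards it to $c_{new}$; in every other scenario at most one $c_{new}$-message reaches $e_{uv}^j$ and $c_1$ wins. Vertices of $\overline{V} \setminus S$ have no incoming edges in $G$ and remain $\emptyset$-active, while $r$ is locked as $c_1$-active throughout. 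One then checks that placing any $c_{new}$-seed on $r$ or on an edge-copy is dominated by seeding inside $\overline{V}$, so an optimal seed set may be assumed to lie in $\overline{V}$.

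The final step is the transfer of approximation ratios. Letting $E^*$ denote the optimum of the DkS instance, the Max-MIM optimum on $\M$ equals $k + \overline{k} E^*$, so any $\alpha$-approximate $S$ satisfies $k + \overline{k}\cdot|E_S| \ge (k + \overline{k}E^*)/\alpha$, which rearranges to $|E_S| \ge E^*/\alpha - (\alpha-1)/\alpha$. Whenever $E^* \ge 2(\alpha-1)$ this already gives $|E_S| \ge E^*/(2\alpha)$. In the opposite regime $E^* < 2(\alpha-1)$ the target $E^*/(2\alpha)$ drops strictly below $1$, so any fallback $k$-subset containing the two endpoints of a single edge of $\overline{G}$ (or any $k$-subset when $\overline{G}$ has no edges at all) already clears the bound; returning the better of the converted Max-MIM solution and this trivial fallback yields a $2\alpha$-approximation for DkS.

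The hardest part will be stating the edge-count identity for $f(S)$ cleanly under the lockstep IMC semantics. Two invariants need attention: first, $r$ must deliver $c_1$ to every edge-copy at the same step at which $u$ and $v$ may deliver $c_{new}$, which is secured by the unit propagation probabilities and by the fact that $r$ has out-edges only to edge-copies; second, whenever $u$ (resp.\ $v$) lies in $S$ its state at time $0$ must equal $c_{new}$, which is immediate from Def.~\ref{def: pri_function}(a). Once these structural invariants are in place, the transfer-of-ratios argument is routine algebra.
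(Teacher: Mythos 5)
Your reduction is correct and follows the same template as the paper's --- one node per DkS vertex, an edge gadget that becomes $c_{new}$-active iff both endpoints are seeded, and a blocking cascade that wins every tie at the gadgets, yielding $f(S)=k+(\mathrm{const})\cdot|E_S|$ --- but it differs in two substantive details. First, the gadget: the paper uses two existing cascades (one seeded at every vertex node $x_v$, one at a hub $z$) together with a purely cascade-based activation function ($F_y(\{c_2,c_{new}\})=c_{new}$, $F_y(\{c_1,c_2,c_{new}\})=c_2$), so the hardness already holds for the restricted subclass of Def.~\ref{def: cascade_function}; your single blocking cascade from the root $r$ with a neighbor-aware rule at the edge-copies is simpler but relies on the more general, neighbor-based form of the activation function. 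Second, the additive $k$: the paper keeps one gadget per edge and disposes of the $+k$ term by restricting WLOG to candidates with $f^*(S)\ge k$ (citing that a $k$-subgraph with at least $k$ edges is computable in polynomial time), which immediately gives $2\alpha\, f^*(S^*)\ge f^*(S^*_{opt})$; you instead amplify each edge into $\overline{k}$ parallel copies so that the subtractive loss becomes $(\alpha-1)/\alpha<1$, and a single-edge fallback then covers the small-$E^*$ regime. Both devices are sound; yours trades an external citation for a case split and a polynomially larger instance.

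The one point you should tighten is the candidate set. You argue only that the \emph{optimum} may be assumed to lie in $\overline{V}$, but the $\alpha$-approximate seed set returned by the Max-MIM oracle may contain $r$ or edge-copies, in which case the identity $f(S)=k+\overline{k}\cdot|E_S|$ no longer applies to it. The clean fix, which the paper adopts, is to set $V_c=\overline{V}$ (the problem formulation explicitly permits a restricted candidate set); alternatively, note that $f(S)\le k+\overline{k}\cdot|E_{S\cap\overline{V}}|$ for every $S$, so projecting the returned set onto $\overline{V}$ and padding to size $k$ preserves your inequality.
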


\begin{proof}
See Appendix \ref{proof: lemma: max_mim}.
\end{proof}

There are several existing hardness results \cite{feige2002relations, alon2011inapproximability, bhaskara2012polynomial} of DkS and we herein mention one recent ETH-hardness given by Manurangsi \cite{manurangsi2017almost}.
\begin{lemma}[\cite{manurangsi2017almost}]
\label{lemma: dks}
Unless the Exponential Time Hypothesis (ETH) is false, there exists no polynomial-time approximation algorithm for the DkS problem with a ratio of $n^{1/\poly \log \log n}$.
\end{lemma}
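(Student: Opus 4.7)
The plan is to follow the gap-amplification pipeline due to Manurangsi that reduces $3$-SAT under ETH to DkS with an almost-polynomial gap. The overall architecture has three layers: (i) start from ETH for $3$-SAT and convert it to a gap version of a $2$-CSP / Label Cover; (ii) apply a birthday-repetition reduction to obtain a DkS instance; (iii) tune parameters so that an approximation better than $n^{1/\poly \log \log n}$ for DkS would yield a sub-exponential algorithm for $3$-SAT and hence refute ETH.

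First, I would invoke ETH together with the sparsification lemma of Impagliazzo--Paturi--Zane to assume a $3$-SAT instance on $n$ variables with $O(n)$ clauses that cannot be decided in $2^{o(n)}$ time. A standard PCP-style transformation (or a two-query Label Cover construction) then promotes this to a $2$-CSP with a constant completeness-versus-soundness gap, which serves as the starting hard gap problem.

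The main combinatorial step is the birthday-repetition reduction to DkS. Given the $2$-CSP and a budget parameter $\overline{k}$, I would build a graph whose vertices are pairs consisting of a random subset $S$ of variables of size roughly $\sqrt{\overline{k}}$ together with a partial assignment to $S$ that satisfies every constraint contained in $S$; two such vertices are joined by an edge when their partial assignments agree on the intersection of their variable sets. By the birthday paradox, two random subsets of size $\sqrt{\overline{k}}$ intersect with non-trivial probability, so consistency becomes a meaningful constraint. In the completeness case, a globally satisfying assignment restricts to $\overline{k}$ pairwise consistent vertices, producing a dense subgraph with $\binom{\overline{k}}{2}$ edges. In the soundness case, I would argue that any choice of $\overline{k}$ vertices contains only a tiny fraction of $\binom{\overline{k}}{2}$ edges, since the underlying partial assignments cannot be globally consistent on too many pairs without violating the $2$-CSP soundness.

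Finally, by carefully setting $\overline{k}$ and the subset size, the construction yields a DkS instance of polynomial size whose completeness-to-soundness ratio is $n^{1/\poly \log \log n}$, while the reduction itself runs in sub-exponential time. The main obstacle is the soundness analysis for birthday repetition: unlike classical parallel repetition there is no plug-and-play theorem, so one has to directly bound the number of accidentally consistent pairs among $\overline{k}$ random subsets via moment inequalities and concentration arguments tailored to the Label Cover structure. A second delicate point is the parameter balancing; the $\poly \log \log n$ factor in the exponent arises precisely from optimizing the interplay between reduction size, running time, and the gap produced by the birthday-repetition step.
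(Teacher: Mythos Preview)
The paper does not prove this lemma at all: it is quoted verbatim as a result from \cite{manurangsi2017almost} and used as a black box to transfer hardness to Max-MIM via Lemma~\ref{lemma: max_mim}. So there is nothing in the paper to compare your proposal against; you have written a proof sketch where the paper simply cites.

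That said, your sketch is a faithful high-level outline of Manurangsi's actual argument: ETH $\to$ gap $2$-CSP via PCP, then birthday repetition to produce a DkS instance with an almost-polynomial gap, with the $\poly\log\log n$ exponent emerging from the parameter trade-off between instance size and gap. Your identification of the soundness analysis of birthday repetition as the hard step is also accurate. For the purposes of this paper, however, none of this is needed---the lemma is invoked, not reproved.
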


According to Lemma \ref{lemma: max_mim}, the approximation hardnesses of DkS immediately apply to Max-MIM, motivating us to design data-dependent approximation strategies.


\subsection{Hardness of Min-MIM}
The hardness result for the Min-MIM problem is established on a reduction from the k-positive-negative partial set cover (k$\pm$PSC) problem.

\begin{problem}[k$\pm$PSC problem]
An instance of k$\pm$PSC is a triplet $(X,Y,\Phi)$ with an integer $\overline{k} \in \mathbb{Z}^+$, where $X$ and $Y$ are two sets of elements with $X \cap Y=\emptyset$, and $\Phi=\{{\phi}_1,...,{\phi}_q\} \subseteq 2^{X \cup Y}$ is a collection of $q \in \mathbb{Z}^+$subsets over $X \cup Y$. For each ${\Phi}^* \subseteq \Phi$, its cost is defined as \[cost({\Phi}^*)=|X\setminus (\cup_{\phi \in {\Phi}^*} \phi)|+|Y\cap (\cup_{\phi \in {\Phi}^*} \phi)|.\] The k$\pm$PSC problem seeks for a ${\Phi}^* \subseteq \Phi$ with $|{\Phi}^*|=\overline{k}$ such that the cost is minimized.
\end{problem}

The following hardness of k$\pm$PSC follows fairly directly from Miettinen \cite{miettinen2008positive}. 
\begin{lemma}[\cite{miettinen2008positive}]
\label{lemma: kpsc}
Unless $NP \subseteq DTIME(n^{\poly \log n})$, there exists no polynomial-time approximation algorithm for the $k\pm$PSC problem with a ratio of $\Omega(2^{\log^{1-\epsilon}q})$ for each $\epsilon>0$.
\end{lemma}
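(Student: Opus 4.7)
The plan is to derive the stated $k\pm$PSC hardness from Miettinen's inapproximability result for the unconstrained positive--negative partial set cover via a size-preserving reduction. Miettinen's theorem concerns the variant in which $\Phi^{*} \subseteq \Phi$ carries no cardinality constraint; the goal is still to minimize $|X \setminus \bigcup_{\phi \in \Phi^{*}} \phi| + |Y \cap \bigcup_{\phi \in \Phi^{*}} \phi|$, and it rules out an $\Omega(2^{\log^{1-\epsilon} q})$-approximation under the hypothesis $NP \subseteq DTIME(n^{\poly \log n})$. The task is to transfer that gap to the cardinality-constrained instance $(X,Y,\Phi,\overline{k})$ used in the MIM reduction.

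First, I would reduce an unconstrained instance $(X, Y, \Phi)$ to a polynomial family of $k\pm$PSC instances by padding. Augment $\Phi$ with $q$ copies of the empty set to form $\Phi'$ of size at most $2q$; since $\emptyset$ contributes neither to the positive uncovering term nor to the negative penalty, any subcollection of $\Phi'$ has the same cost as its projection onto the original $\Phi$. Then, for each $\overline{k} \in \{1, 2, \ldots, 2q\}$, form the $k\pm$PSC instance $(X, Y, \Phi', \overline{k})$. The duplicated $\emptyset$ sets are valid elements of $2^{X \cup Y}$ (treating $\Phi'$ as a multiset, or equivalently attaching phantom indices that leave $X \cup Y$ untouched).

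Second, I would argue that an $\alpha$-approximation for $k\pm$PSC yields an $\alpha$-approximation for the unconstrained problem. For any unconstrained optimal $\Phi^{*}_{\text{opt}}$ of size $\overline{k}^{*} \leq q$, padding with additional copies of $\emptyset$ produces a feasible $k\pm$PSC solution of identical cost for the choice $\overline{k} = \overline{k}^{*}$. Conversely, every feasible $k\pm$PSC solution projects to an unconstrained solution of no greater cost. Hence $\min_{\overline{k}}$ of the $k\pm$PSC optima equals the unconstrained optimum, and running the claimed approximation on each of the $2q$ instances and returning the cheapest output loses at most the factor $\alpha$. This contradicts Miettinen's bound if $\alpha = o\bigl(2^{\log^{1-\epsilon} q}\bigr)$.

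The main obstacle is bookkeeping the quantitative gap so that the $\Omega(2^{\log^{1-\epsilon} q})$ rate survives the reduction. The number of sets at most doubles, so $q' \leq 2q$ and $\log^{1-\epsilon}(q') = (1 + o(1)) \log^{1-\epsilon} q$, leaving the bound intact for every $\epsilon > 0$; the polynomial-factor enumeration over $\overline{k}$ still fits inside $DTIME(n^{\poly \log n})$. A secondary subtlety is that Miettinen's statement bounds approximation in terms of the number of sets rather than the universe size, so one must confirm that the instances produced by his underlying reduction (from Min-Rep or Label Cover) already scale so that $q$ is polynomially related to the original input size; this is explicit in his construction, which closes the argument.
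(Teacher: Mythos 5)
Your argument is correct. The paper gives no proof of this lemma at all --- it is stated as a citation, with the remark that it ``follows fairly directly from Miettinen,'' whose theorem concerns the \emph{unconstrained} positive--negative partial set cover. Your padding-plus-enumeration reduction (add copies of $\emptyset$, try every $\overline{k}$, return the cheapest output, and observe that for $\overline{k}$ equal to the size of an unconstrained optimum the two optima coincide) is exactly the standard bridge the authors leave implicit, and your bookkeeping --- the factor-of-two blow-up in $q$ being absorbed by the ``for each $\epsilon>0$'' quantifier, and the polynomial overhead of the enumeration staying inside $DTIME(n^{\poly\log n})$ --- is sound. The only cosmetic remark is that the padding is strictly needed only for the degenerate case where the unconstrained optimum is the empty collection (since the paper requires $|\Phi^*|=\overline{k}$ with $\overline{k}\in\mathbb{Z}^+$); for all other cases, enumerating $\overline{k}\in\{1,\dots,q\}$ over the original $\Phi$ already suffices.
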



Now we relate Min-MIN and k$\pm$PSC.
\begin{lemma}
\label{lemma: reduction}
k$\pm$PSC is approximable to within a factor of $\Omega(\alpha(q) \cdot q)$ if Min-MIM is approximable to within a factor of $\alpha(|V_c|)$ for each function $\alpha(\cdot)>1$.
\end{lemma}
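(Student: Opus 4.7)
The plan is to build a polynomial-time reduction from k$\pm$PSC to Min-MIM. Given a k$\pm$PSC instance $(X,Y,\Phi,\overline{k})$ with $q=|\Phi|$, I will construct a Min-MIM instance $\M$ with $|V_c|=q$ whose objective satisfies $\overline{f}(S)=C+\mathrm{cost}_{\mathrm{PSC}}(\Phi_S)$ for a fixed constant $C=O(q)$, where $\Phi_S=\{\phi_i:u_i\in S\}$. Under the standard assumption $\mathrm{cost}_{\mathrm{PSC}}(\Phi^*_{\mathrm{PSC}})\ge 1$, this additive gap of $O(q)$ translates a Min-MIM $\alpha(q)$-approximation into an $O(\alpha(q)\cdot q)$-approximation for k$\pm$PSC.

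First I would build the selection layer. Place a candidate seed $u_i\in V_c$ for each $\phi_i\in\Phi$ with budget $k=\overline{k}$; for each $x\in X$ add a target node $v_x$ with an edge $u_i\to v_x$ (probability $1$) whenever $x\in\phi_i$; and for each subset $\phi_i$ add a ``counter'' node $w^i$ receiving edges from $u_i$ and from a single existing-cascade seed $s$ with $\tau(c_1)=\{s\}$. The activation function at $w^i$ prefers $c_{new}$, so $w^i$ becomes $c_{new}$-active iff $u_i\in S$ and $c_1$-active otherwise. This gives the per-subset selection signal while contributing exactly $q-k$ to $\overline{f}(S)$.

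Next I would install the $Y$-penalty gadget. Introduce a hub $h$ with $u_i\to h$ for every $i$, a length-two $c_1$-chain $s\to s'$, and for each $y\in Y$ a single ``reward'' node $r_y$ with in-neighbors $h$, $s'$, and every $w^i$ with $y\in\phi_i$. With these edges all arrivals at $r_y$ occur simultaneously at step $2$. Define
\[
F_{r_y}(O)=\begin{cases} c_{new} & \lvert\{(v,c)\in O: c=c_{new}\}\rvert=1,\\ c_1 & \text{otherwise},\end{cases}
\]
which is a legitimate activation function because $(s',c_1)\in O$ is always present. The hub $h$ always contributes one $c_{new}$-source and each selected $u_i$ with $y\in\phi_i$ contributes an extra $c_{new}$-source via $w^i$, so $r_y$ is $c_{new}$-active iff $y\notin T_S:=\bigcup_{u_i\in S}\phi_i$; the contribution of $r_y$ to $\overline{f}(S)$ is therefore exactly $\I[y\in T_S]$.

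Summing all contributions ($q-k$ from the $u_i$'s, $q-k$ from the $w^i$'s, $2$ from $s$ and $s'$, $|X\setminus T_S|$ from the $v_x$'s, $|Y\cap T_S|$ from the $r_y$'s) yields $\overline{f}(S)=2(q-k)+2+\mathrm{cost}_{\mathrm{PSC}}(\Phi_S)$. Using $\Phi^*_{\mathrm{PSC}}$ as a feasible Min-MIM solution gives $\overline{f}(\Phi^*_{\mathrm{PSC}})\le C+\mathrm{cost}_{\mathrm{PSC}}(\Phi^*_{\mathrm{PSC}})$, so an $\alpha(q)$-approximate Min-MIM solution $S$ satisfies $C+\mathrm{cost}_{\mathrm{PSC}}(\Phi_S)\le \alpha(q)\bigl(C+\mathrm{cost}_{\mathrm{PSC}}(\Phi^*_{\mathrm{PSC}})\bigr)$, and rearranging with $C=2(q-k)+2=O(q)$ produces $\mathrm{cost}_{\mathrm{PSC}}(\Phi_S)=O(\alpha(q)\cdot q)\cdot\mathrm{cost}_{\mathrm{PSC}}(\Phi^*_{\mathrm{PSC}})$. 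The main obstacle is the sign of the $Y$-contribution: a direct edge-based reduction yields cost $|Y\setminus T_S|$ (coverage always helps $c_{new}$), exactly opposite to k$\pm$PSC. Fixing this requires the full generality of IMC's activation functions; the count-dependent rule at $r_y$, paired with a guaranteed baseline $c_{new}$-source from $h$, is what flips every additional $c_{new}$-arrival into a $c_1$-blockade and realizes the $|Y\cap T_S|$ penalty.
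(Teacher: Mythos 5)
Your reduction is correct, but it takes a genuinely different route from the paper's. Both constructions arrange for $\overline{f}(S)$ to equal an additive constant of $\Theta(q)$ plus the k$\pm$PSC cost of the selected subsets, and both then run the same final algebra (each implicitly needing $cost(\Phi^*_{opt})\geq 1$ to absorb the additive constant into the multiplicative ratio). Both also identify the same crux: realizing the penalty $|Y\cap T_S|$ rather than the coverage quantity $|Y\setminus T_S|$. The difference is in how that sign flip is achieved. The paper stays inside the restricted class of total-order-based activation functions: it introduces a second existing cascade $c_2$ and a relay chain $\phi_i\to a_i\to b_i\to y_j$ so that $b_i$ becomes $c_2$-active exactly when $\phi_i$ is selected, and the priority order $c_2>c_{new}>c_1$ at $y_j$ then lets the covering sets block the $c_{new}$ signal arriving from a separate hub $e_4$. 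You instead exploit the full generality of the activation function: a single existing cascade suffices, and $r_y$ uses a rule that depends on the \emph{number} of $c_{new}$-arrivals, with the hub $h$ supplying a guaranteed baseline $c_{new}$-source so that any additional $c_{new}$-arrival (i.e., $y$ being covered) flips the outcome to $c_1$. Your gadget is shorter and uses fewer cascades, and the details check out (all arrivals at $r_y$ are synchronized at step $2$; outputting $c_1$ is legitimate because $(s',c_1)$ is always present in $O$; the per-group accounting sums to $2(q-k)+2+cost(\Phi_S)$). The trade-off is that your version establishes hardness only for the general IMC model, whereas the paper's establishes the stronger fact that Min-MIM is already inapproximable under the natural total-order-based subclass; since the lemma as stated concerns general Min-MIM, your proof suffices for it.
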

\begin{proof}
See Appendix \ref{proof: lemma: reduction}.
\end{proof}

With the above two lemmas, we have the following hardness result showing the strong inapproximability of Min-MIM.
\begin{theorem}
\label{theorem: hardness}
Min-MIM cannot be approximated within a factor of $\Omega(2^{\log^{1-\epsilon}n^3-\log n})$ unless $NP$ belongs to $DTIME(n^{\poly \log n})$.\footnote{$DTIME(n^{\poly \log n})$ consists of the decision problems that can be solved in $O(n^{\poly \log n})$. Note that here $n$ is not the number of the nodes of the social network graph.}
\end{theorem}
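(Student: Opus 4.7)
My plan is to derive Theorem \ref{theorem: hardness} by contradiction, combining Lemma \ref{lemma: reduction} (the reduction from k$\pm$PSC to Min-MIM) with the inapproximability of k$\pm$PSC provided by Lemma \ref{lemma: kpsc}. Fix $\epsilon > 0$ and assume that there exists a polynomial-time approximation algorithm for Min-MIM achieving ratio $\alpha(n) = O\!\bigl(2^{\log^{1-\epsilon} n^3 - \log n}\bigr)$, where $n = |V_c|$. By Lemma \ref{lemma: reduction}, this would yield a polynomial-time algorithm for k$\pm$PSC with ratio $O(\alpha(|V_c|) \cdot q)$. The goal will then be to show that this composite ratio is eventually smaller than $\Omega\!\bigl(2^{\log^{1-\epsilon'} q}\bigr)$ for some $\epsilon' \in (0,\epsilon)$, contradicting Lemma \ref{lemma: kpsc} and hence the assumption $NP \not\subseteq DTIME(n^{\poly \log n})$.

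The central algebraic step is
\begin{equation*}
\alpha(|V_c|) \cdot q \;=\; 2^{\log^{1-\epsilon}|V_c|^3 - \log |V_c| + \log q}.
\end{equation*}
The $-\log |V_c|$ term in the exponent of $\alpha$ is deliberately placed to absorb the extra factor of $q$ produced by Lemma \ref{lemma: reduction}, provided the reduction satisfies $|V_c| \geq q$. Assuming further that $|V_c| \leq q^d$ for some constant $d$ (guaranteed by the polynomial-time nature of the reduction), the exponent is bounded by
\begin{equation*}
\log^{1-\epsilon}|V_c|^3 \;\leq\; (3d \log q)^{1-\epsilon} \;=\; (3d)^{1-\epsilon} \log^{1-\epsilon} q,
\end{equation*}
so that $\alpha(|V_c|) \cdot q \leq 2^{(3d)^{1-\epsilon} \log^{1-\epsilon} q}$. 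Choosing any $\epsilon' \in (0, \epsilon)$ and using $\log^{\epsilon - \epsilon'} q \to \infty$, we get $(3d)^{1-\epsilon} \log^{1-\epsilon} q = o(\log^{1-\epsilon'} q)$, hence $\alpha(|V_c|) \cdot q = o\!\bigl(2^{\log^{1-\epsilon'} q}\bigr)$. This would exhibit a polynomial-time k$\pm$PSC algorithm whose ratio is eventually strictly smaller than any function of the form $c \cdot 2^{\log^{1-\epsilon'} q}$, directly contradicting Lemma \ref{lemma: kpsc} applied with parameter $\epsilon'$.

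The main obstacle I anticipate is not the manipulation of exponents, which is routine once the parameters are in hand, but the bookkeeping with the reduction: I must verify from the construction underlying Lemma \ref{lemma: reduction} that $q \leq |V_c| \leq q^d$ for some absolute constant $d$. The lower bound $|V_c| \geq q$ is what licenses using $-\log |V_c|$ to cancel the $+\log q$ coming from the Lemma, while the polynomial upper bound is what keeps $\log^{1-\epsilon}|V_c|^3$ sub-polylogarithmic in $q$ so that the final $o(\cdot)$ estimate goes through. Tracing the reduction (for instance, checking that each of the $q$ sets contributes at least one candidate seed vertex, and that the total construction is polynomial in $q$) should establish both inequalities, after which the chain above delivers Theorem \ref{theorem: hardness}.
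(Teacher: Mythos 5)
Your proof is correct and follows essentially the same route as the paper: it combines Lemma \ref{lemma: kpsc} with Lemma \ref{lemma: reduction} and carries out the same exponent arithmetic ($\log^{1-\epsilon}q^3 = 3^{1-\epsilon}\log^{1-\epsilon}q = o(\log^{1-\epsilon'}q)$), merely phrased as a contradiction rather than the paper's forward derivation of the lower bound on $\alpha$. The bookkeeping you flag as the main obstacle is immediate, since the reduction behind Lemma \ref{lemma: reduction} takes $V_c=\{\phi_1,\dots,\phi_q\}$, so $|V_c|=q$ exactly and your constant $d$ equals $1$.
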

\begin{proof}
According to Lemmas \ref{lemma: kpsc} and $\ref{lemma: reduction}$, one has  $\alpha(q) \cdot q=\Omega(2^{\log^{1-\epsilon}q^4})$ and consequently we have that $\alpha(q)=\Omega(2^{\log^{1-\epsilon}q^3-\log q})$. Therefore, $\alpha(|V_c|)$ is asymptotically bounded by $2^{\log^{1-\epsilon}n^3-\log n}$ since $|V_c|=q\leq n$.
\end{proof}

The results in this section indicate constant approximation algorithms are not possible for MIM under common complexity assumptions, but they do not rule out efficient heuristics for the average case. In the next section, we will present an algorithmic framework that can solve MIM with a data-dependent approximation ratio.
\section{Algorithmic Framework}
\label{sec: framework}
A well-known difficulty regarding the cascade model is that the objective function is \#P-hard to compute, and it is typically addressed by scholastic optimization with proper sampling method. On the other hand, as shown in Sec. \ref{sec: hardness}, algorithms with worst-case approximation guarantees is pessimistic, and therefore we will pursue a data-dependent appropriation ratio. In this section, we provide an algorithm designed through a novel reverse sandwich sampling method with a data-dependent ratio which can be easily tested.  


\subsection{Reverse Sandwich Sampling Method}
\label{subsec: sampling}
The goal of sampling is to generate $l$ samples which can be used to design an estimator $\tilde{f}_l$ of $f$ such that $\lim_{l\rightarrow \infty}|\tilde{f}_l(S)-f(S)|=0$ for each $S \subseteq V_c$, and $\tilde{f}_l(S)$ can be easily computed and optimized. For such a purpose, we start by considering how to estimate the probability that an individual node $v$ will be $c_{new}$-active. Given a certain node $v$, let us consider the following sampling process.

\begin{algorithm}[t]
\caption{{RR-tuple of $v$} $(\M, v)$}\label{alg: rrtuple_v}
\begin{algorithmic}[1]
\State \textbf{Input:} an instance $\M$ of MIM and a node $v$; 
\State \textbf{Output:} $R_v=(G_v, \overline{S}_v, \underline{S}_v)$; 
\State Initialize $G_v$ with $V(G_v)=\{v\}$ and $E(G_v)=\emptyset$;  
\State $V_t \leftarrow \{v\}$, $\tau^* \leftarrow \cup_{c \in C_e} \tau(c)$, $\overline{S}_v \leftarrow \emptyset$, $\underline{S}_v \leftarrow \emptyset$; 
\While {true}
	\If {$V_t=\emptyset$}
	\State \textbf{Return} $R_v=(G_v, \overline{S}_v, \underline{S}_v)$;
	\EndIf 
	\If {$V_t \cap \tau^* \neq \emptyset$}
			\State $\overline{S}_v \leftarrow \overline{S}_v \cup V_{t}$;
		\State \textbf{Return} $R_v=(G_v, \overline{S}_v, \underline{S}_v)$;
	\EndIf 
	\State $\overline{S}_v \leftarrow \overline{S}_v \cup V_{t}$, $\underline{S}_v \leftarrow \underline{S}_v \cup V_{t}$; 
	\State $V_r \leftarrow V \setminus V(G_v)$; 
	\State $E^* \leftarrow \{(u_1, u_2)\in E| u_1 \in V_r, u_2 \in V_t \}$; $V_t \leftarrow \emptyset$;
	\For {each edge $(u_1,u_2) \in E^*$} 
	\State  $rand \leftarrow \mathcal{U}(0,1)$; \Comment{(0,1) uniform distribution}; 
	\If {$rand  \leq p_{(u_1,u_2)}$}
	\State $V_{t} \leftarrow V_{t} \cup \{u_1\}$;
	\State $V(G_v) \leftarrow V(G_v) \cup \{u_1\}$; 
	\State $E(G_v) \leftarrow E(G_v) \cup \{(u_1,u_2)\}$;
	\EndIf 
	\EndFor
\EndWhile
\end{algorithmic}
\end{algorithm}

\begin{definition}[RR-tuple of $v$]
\label{def: rr_tuple_v}
As shown in Alg. \ref{alg: rrtuple_v}, given a node $v$, we simulate the diffusion process in a reverse direction from $v$ in the manner of BFS (Line 5 to 19) until one seed of the existing cascades is reached (Line 8) or no node can be further activated (Line 6). With each run of Alg. \ref{alg: rrtuple_v}, we obtain a subgraph $G_v$ of which the nodes and edges are collected in Line 18 and 19, respectively. In addition, we obtain two subsets $\overline{S}_v$ and $\underline{S}_v$ of nodes where $\overline{S}_v$ is identical to $V(G_v)$ and $\underline{S}_v$ is the $\overline{S}_v$ with the removal of the nodes, if any, encountered during the last iteration of the WHILE loop. We use the notation $R_v=(G_v, \underline{S}_v, \overline{S}_v)$ to denote one sample returned by Alg. \ref{alg: rrtuple_v}, termed as an \textit{RR-tuple of $v$}. 
\end{definition}

For each $R_v=(G_v, \underline{S}_v, \overline{S}_v)$, let us consider an MIM instance 
$\M_{R_v}$ with graph $G_v$ where (a) $V(G_v) \cap \tau(c)$ is the seed set of each of the existing cascade $c \in C_e$, (b) $p_{e}=1$ for each edge $e$ in $G_v$, and (c) the activation functions remain the same as in $\M$. For a seed set $S$ of cascade $c_{new}$, we use the indicator $g(S ,R_v)$ to denote if $v$ will be $c_{new}$-active in $\M_{R_v}$, i.e., 
\[g(S,R_v) \define
  \begin{cases}
  1 &  \hspace{0mm} \hspace{-0.5mm} \text{$v$ is $c_{new}$-active under $S$ in $\M_{R_v}$} \\
  0 & \hspace{0mm} \hspace{-0.5mm} \text{otherwise } 
  \end{cases}.\]
Furthermore, we consider two variables

\[\overline{g}(S ,R_v) \define
  \begin{cases}
  1 &  \hspace{0mm} \hspace{-0.5mm} \text{$S \cap \overline{S}_v \neq \emptyset$} \\
  0 & \hspace{0mm} \hspace{-0.5mm} \text{otherwise } 
  \end{cases},\]
and 
\[\underline{g}(S,R_v) \define
\begin{cases}
1 &  \hspace{0mm} \hspace{-0.5mm} \text{$S \cap \underline{S}_v \neq \emptyset$} \\
0 & \hspace{0mm} \hspace{-0.5mm} \text{otherwise } 
\end{cases}.\] 
For a fixed $S$, $g(S,R_v)$, $\overline{g}(S,R_v)$ and $\underline{g}(S,R_v)$ are 
random variables as the sampling process is stochastic. We use $\R_v$ to denote a random RR-tuple of $v$. The next theorem is a key result, which gives the upper and lower bounds on the probability that an individual node can be $c_{new}$-active.

\begin{theorem}
\label{theorem: key_1}
For each $S \subseteq V$, we have
\begin{equation}
\E[\underline{g}(S , \R_v)] \leq \E[g(S ,\R_v)]=f_v(S)\leq E[\overline{g}(S ,\R_v)]
\end{equation}
\end{theorem}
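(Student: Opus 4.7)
The statement decomposes into three pointwise claims: the upper bound $g(S,R_v) \leq \overline{g}(S,R_v)$, the middle equality $\E[g(S,\R_v)] = f_v(S)$, and the lower bound $\underline{g}(S,R_v) \leq g(S,R_v)$. My plan is to extract one structural fact about $G_v$ and then dispatch each of the three parts.

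The structural fact is that $G_v$ is a layered DAG with respect to BFS distance to $v$. Each pass of the while loop in Alg.~\ref{alg: rrtuple_v} processes exactly one BFS frontier, so every edge placed into $E(G_v)$ on Line~20 goes from a node newly discovered at layer $\ell+1$ to a node already at layer $\ell$; in particular, the forward $G_v$-distance from any $u$ to $v$ equals $u$'s BFS layer. Let $d^*$ denote the layer at which the algorithm stops in case~(b), with $d^* = \infty$ in case~(a). Inspecting Lines~10 and 13 gives $\overline{S}_v = V(G_v)$ together with $\underline{S}_v = V(G_v) \setminus \text{layer}(d^*)$. From this, the upper bound is immediate: a $c_{new}$-activation of $v$ in $\M_{R_v}$ requires a $c_{new}$-seed inside $V(G_v) = \overline{S}_v$, so $S \cap \overline{S}_v \neq \emptyset$.

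For the middle equality I would couple the randomness in the usual reverse-sampling fashion: draw one collection $\{\xi_e\}_{e \in E}$ of independent $\mathrm{Bern}(p_e)$ variables and use each $\xi_e$ simultaneously as the live/dead label of $e$ in the live-edge realisation $H_\xi$ underlying $\M$ and as the outcome of the coin flip in Line~17 of Alg.~\ref{alg: rrtuple_v}. Two short combinatorial checks then finish the argument: (i) every seed in $S \cup \tau^*$ at $H_\xi$-distance $\leq d^*$ from $v$ lies in $V(G_v)$, and every shortest such $(s,v)$-path consists of consecutive-layer live edges all present in $G_v$; (ii) any live edge of $H_\xi$ absent from $G_v$ is a shortcut whose exploitation could only deliver a cascade to $v$ strictly after the first-activation time $T_v$, so it cannot influence $v$'s state. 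Consequently the IMC dynamics at $v$ coincide in $\M$ and in $\M_{R_v}$ under the coupling, yielding $\E[g(S,\R_v)] = f_v(S)$ after averaging over $\xi$.

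The lower bound is the most delicate piece. Suppose $S \cap \underline{S}_v \neq \emptyset$, let $d_S \leq d^* - 1$ be the smallest BFS layer meeting $S$, and fix $s \in S$ at layer $d_S$. I would then prove by induction on $t = 0,1,\ldots,d_S$ that every node $w$ at layer $d_S - t$ lying on a shortest $(s,v)$-path in $G_v$ is $c_{new}$-active at time $t$ in $\M_{R_v}$. The base case is $w = s$ itself. For the inductive step, the layered-DAG structure forces every $G_v$-in-neighbour of $w$ to sit at layer $d_S - t + 1$, so any cascade reaching such an in-neighbour by time $t-1$ must originate from a seed at layer $\geq d_S$; since every existing-cascade seed inside $V(G_v)$ sits at layer $d^* > d_S$, no existing cascade has had time to propagate down that far by time $t-1$, and the activation function at $w$ is consequently fed only $c_{new}$-labelled pairs and must return $c_{new}$. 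Invoking the induction at $t = d_S$ gives $g(S,R_v) = 1$. The step I expect to demand the most care is precisely this ``no existing cascade has arrived yet'' argument, which rests essentially on the fact that Alg.~\ref{alg: rrtuple_v} halts at the first BFS layer containing an existing seed, preventing any existing cascade from sneaking ahead of $c_{new}$ on the path to $v$.
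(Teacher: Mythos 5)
Your proposal is correct and follows essentially the same route as the paper: pointwise comparison of $\underline{g}, g, \overline{g}$ for the two sandwich inequalities, and a coupling of the coin flips (the paper phrases this as partitioning realizations into compatibility classes of RR-tuples) for the middle equality $\E[g(S,\R_v)]=f_v(S)$. Your layered-BFS induction for the lower bound is just a more explicit version of the paper's one-line observation that every node of $\underline{S}_v$ is strictly closer to $v$ than any existing seed in $V(G_v)$, combined with the model requirement $F_v(O)\in\{\pi_{t-1}(u)\mid u\in A\}$.
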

\begin{proof}[Proof (Sketch)]
The rationale behind this theorem is as follows. Each $R_v=(G_v, \underline{S}_v, \overline{S}_v)$ returned by Alg. \ref{alg: rrtuple_v} corresponds to some realization\footnote{The formal definition of realization is given in the complete proof.} of the real diffusion process. For the part $\E[g(S ,\R_v)]=f_v(S)$, the proof follows from the idea of reverse sampling by showing that the nodes searched in Alg. \ref{alg: rrtuple_v} contains all the candidate seed nodes that can make $v$ become $c_{new}$-active in the realization corresponding to $R_v$. Furthermore, $S \cap \overline{S}_v \neq \emptyset$ is a necessary condition for making  $v$ be $c_{new}$-active in $\M_{R_v}$ under $S$, while $S \cap \underline{S}_v \neq \emptyset$ is a sufficient one. An illustration is given in Fig. \ref{fig: reverse}.
See Appendix \ref{proof: theorem: key_1} for the complete proof.
\end{proof}

\begin{algorithm}[t]
\caption{RR-tuple $(\M)$}
\label{alg: rrtuple}
\begin{algorithmic}[1]
\State \textbf{Input:} an instance $\M$ of MIM; 
\State \textbf{Output:} an RR-tuple.
\State Select a node $v$ from $V$ uniformly at random.
\State Generate an RR-tuple $R_v$ of $v$ by Alg. \ref{alg: rrtuple_v}.
\State return $R_v$; 
\end{algorithmic}
\end{algorithm}

\begin{figure}[t]
\begin{center}
\includegraphics[width=0.45\textwidth]{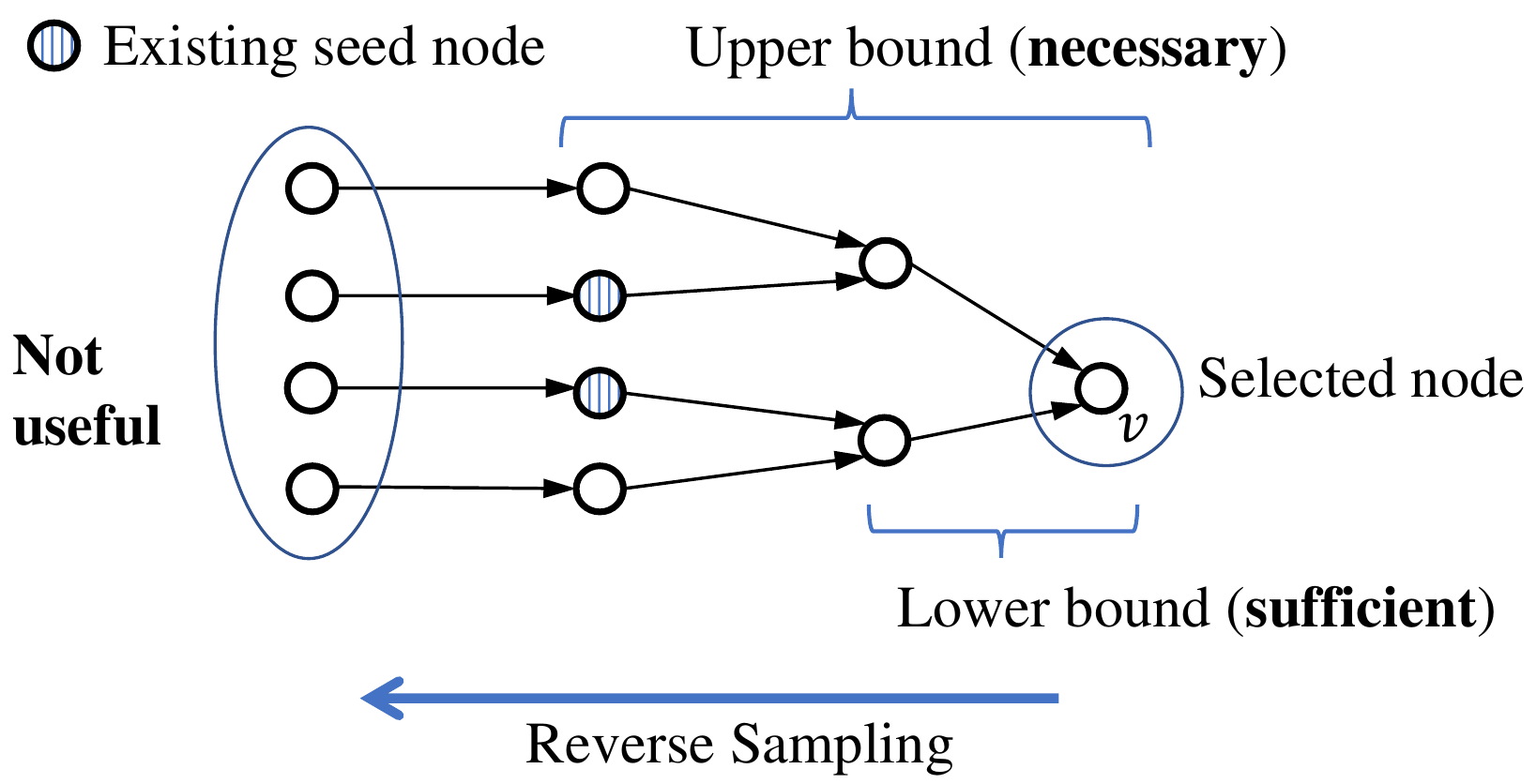} 
\end{center} 
\vspace{-2mm}
\caption{Illustration of the upper and lower bounds.}
\label{fig: reverse}
\end{figure}

Following Def. \ref{def: rr_tuple_v}, if we further let the target node $v$ be selected randomly, we have the complete sampling method shown in Alg. \ref{alg: rrtuple} where we first select a node $v$ from $V$ uniformly at random and then generate an RR-tuple of $v$ by using Alg. \ref{alg: rrtuple_v}. We call the output of Alg. \ref{alg: rrtuple} as an \textit{RR-tuple}
denoted as $\R$. Given that $\R$ is nothing but a random RR-tuple of some node, the notations $g(S, \R)$, $\overline{g}(S, \R)$ and $\underline{g}(S, \R)$ are defined analogously.
In turns out that $\R$ can be used to construct a desired estimator of our objective function.

\begin{corollary}
\label{coro: key}
For each $S \subseteq V$, we have
\begin{equation*}
\E[n\cdot \underline{g}(S ,\R)] \leq \E[n\cdot g(S ,\R)]=f(S)\leq E[n\cdot \overline{g}(S, \R)].
\end{equation*}
\end{corollary}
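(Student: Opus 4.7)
The plan is to derive the corollary from Theorem \ref{theorem: key_1} together with the decomposition $f(S) = \sum_{v \in V} f_v(S)$ stated in the remark after Problem \ref{problem: main}, using the law of total expectation. The observation is that an RR-tuple $\R$ produced by Alg. \ref{alg: rrtuple} is generated in two stages: first a node $v$ is drawn uniformly from $V$, then an RR-tuple $\R_v$ of $v$ is produced by Alg. \ref{alg: rrtuple_v}. Conditioning on the identity of the first-stage sample, the quantities $g(S, \R)$, $\overline{g}(S, \R)$ and $\underline{g}(S, \R)$ become the corresponding quantities for $\R_v$.

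First I would write, for any fixed $S \subseteq V$,
\begin{equation*}
\E[g(S, \R)] \;=\; \sum_{v \in V} \Pr[\text{node } v \text{ is chosen}] \cdot \E[g(S, \R_v)] \;=\; \frac{1}{n}\sum_{v \in V} \E[g(S, \R_v)].
\end{equation*}
Applying the equality $\E[g(S,\R_v)] = f_v(S)$ from Theorem \ref{theorem: key_1} and the decomposition $f(S) = \sum_{v \in V} f_v(S)$, the right-hand side collapses to $f(S)/n$. Multiplying through by $n$ gives the central equality $\E[n \cdot g(S, \R)] = f(S)$.

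For the two inequalities I would repeat the same conditioning argument verbatim with $\underline{g}$ and $\overline{g}$ in place of $g$, obtaining
\begin{equation*}
\E[n \cdot \underline{g}(S, \R)] = \sum_{v \in V}\E[\underline{g}(S,\R_v)] \;\;\text{and}\;\; \E[n \cdot \overline{g}(S, \R)] = \sum_{v \in V}\E[\overline{g}(S,\R_v)],
\end{equation*}
and then compare term-by-term against $\sum_{v} \E[g(S,\R_v)] = f(S)$ using the per-node bounds $\E[\underline{g}(S,\R_v)] \leq \E[g(S,\R_v)] \leq \E[\overline{g}(S,\R_v)]$ from Theorem \ref{theorem: key_1}.

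Honestly there is no real obstacle here: all the combinatorial content, including the reverse-sampling argument and the sandwich inequalities, has already been absorbed into Theorem \ref{theorem: key_1}. The only mild care needed is to state the two-stage structure of Alg. \ref{alg: rrtuple} cleanly so that the application of the tower property (total expectation) is unambiguous, and to justify reading off the factor $n$ from the uniform sampling probability $1/n$. Everything else is bookkeeping.
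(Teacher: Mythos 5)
Your proposal is correct and matches the paper's own argument: both decompose $\E[g(S,\R)]$ as $\frac{1}{n}\sum_{v\in V}\E[g(S,\R_v)]$ via the uniform first-stage choice of $v$, apply Theorem~\ref{theorem: key_1} per node, and sum $f_v(S)$ over $V$ to recover $f(S)$. Nothing further is needed.
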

\begin{proof}
See Appendix \ref{proof: coro_key}
\end{proof}

It is worthy to note that the above bounds are constructed without using the activation functions, which is a key to making our framework work for an arbitrary activation function. Before presenting the algorithm for MIM, we state the following result showing the running time $\TIME_{\R}$ of generating one RR-tuple.

\begin{lemma}
\label{lemma: time}
$\E[\TIME_{\R}]=O(m \cdot \max_{|S|=1}\E[\underline{g}(S ,\R)]).$
\end{lemma}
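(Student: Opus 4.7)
The plan is to charge the running time of Alg.~\ref{alg: rrtuple_v} to the in-edges of a specific subset of nodes, then invoke the definition of $\underline{g}$ to rewrite the resulting bound.

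First I would dissect the cost of one execution of Alg.~\ref{alg: rrtuple_v}. The only nontrivial work occurs inside the FOR loop (Lines 16--22), where the algorithm scans the set $E^{\ast}$ of edges $(u_1,u_2)\in E$ with $u_2\in V_t$. In a given iteration of the WHILE loop, the cost of Lines 14--22 is therefore $O\bigl(\sum_{u_2\in V_t}\deg^{-}(u_2)\bigr)$, where $\deg^{-}(\cdot)$ denotes in-degree in $G$. All other per-iteration work (the checks on Lines 6 and 9 and the set unions) is absorbed into this bound, since a node appears in $V_t$ at most once over the lifetime of the algorithm.

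Next I would identify precisely which nodes ever contribute to this cost. A frontier $V_t$ triggers the expensive Lines 14--22 exactly when neither of the early-termination conditions fires, i.e. when the block at Line 12 executes. But Line 12 is also exactly the line that inserts $V_t$ into $\underline{S}_v$. Hence, across the entire run, the set of nodes whose in-edges get scanned is precisely $\underline{S}_v$, so
\[
\TIME_{\R_v} \;=\; O\!\Bigl(\,\sum_{u\in \underline{S}_v}\deg^{-}(u)\Bigr).
\]
(The two termination modes are handled uniformly by this identity: if the run ends with $V_t=\emptyset$ then $\underline{S}_v=\overline{S}_v=V(G_v)$; if it ends because $V_t\cap\tau^{\ast}\neq\emptyset$, the last frontier is added to $\overline{S}_v$ but not to $\underline{S}_v$, and its in-edges are indeed never scanned.)

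Taking expectation over the random choice of $v$ in Alg.~\ref{alg: rrtuple} and over the BFS randomness, I would swap the sum and the expectation and write
\[
\E[\TIME_{\R}] \;=\; O\!\Bigl(\sum_{u\in V}\deg^{-}(u)\cdot \Pr[u\in \underline{S}_{\R}]\Bigr).
\]
By the definition of $\underline{g}$, $\Pr[u\in \underline{S}_{\R}] = \E[\underline{g}(\{u\},\R)]$. Bounding each such probability by $\max_{|S|=1}\E[\underline{g}(S,\R)]$ and using $\sum_{u}\deg^{-}(u)=m$ yields the claim.

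The only delicate point is the second step: verifying that the bookkeeping of $\underline{S}_v$ coincides exactly with the set of frontiers whose in-edges are enumerated. This is a straightforward trace through Alg.~\ref{alg: rrtuple_v}, but it is the one place where a careless off-by-one (between $\overline{S}_v$ and $\underline{S}_v$) would replace $\underline{g}$ by $\overline{g}$ on the right-hand side; everything else is routine linearity of expectation.
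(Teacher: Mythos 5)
Your proposal is correct and follows essentially the same route as the paper: both charge the cost of Alg.~\ref{alg: rrtuple_v} to the edges $(u_1,u_2)$ scanned in the FOR loop, observe that an edge is scanned only when $u_2\in\underline{S}_v$ (so the per-run cost is $\sum_{(u_1,u_2)\in E}\underline{g}(\{u_2\},R_v)$, which is your $\sum_{u\in\underline{S}_v}\deg^-(u)$), and then swap the sum with the expectation over $v$ and the sampling randomness to bound the result by $m\cdot\max_{|S|=1}\E[\underline{g}(S,\R)]$. The bookkeeping point you flag as delicate is exactly the step the paper also relies on, so there is nothing to add.
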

\begin{proof}
Let $\E[\TIME_{\R_v}]$ be the expected running time of generating one RR-tuple of $v$ for a node $v \in V$, and we therefore have $\E[\TIME_{\R}]=\frac{1}{n}\sum_{v \in V}\E[\TIME_{\R_v}]$. According to Alg. \ref{alg: rrtuple_v}, it is clear that the running time of generating one RR-tuple of $v$ is bounded by the number the edges tested in Line 14 in Alg. \ref{alg: rrtuple_v}. For an edge $(u_1,u_2)$, it will be tested in Alg. \ref{alg: rrtuple_v} if and only if $u_2$ is in $\underline{S}_v$, and thus we have $\E[\TIME_{\R_v}]=\sum_{ (u_1,u_2)\in E}\E[\underline{g}(\{u_2\},\R_v)]$. As a result, 
\begin{align*}
\E[\TIME_{\R}]=\frac{1}{n}\sum_{v \in V}\sum_{ (u_1,u_2)\in E}\E[\underline{g}(\{u_2\},\R_v)]\\
=\sum_{ (u_1,u_2)\in E}\frac{1}{n}\sum_{v \in V}\E[\underline{g}(\{u_2\},\R_v)]
\leq m\cdot \max_{|S|=1}\E[\underline{g}(S ,\R)].
\end{align*}
\end{proof}

\subsection{Maximizing the Estimator}
According to Corollary \ref{coro: key}, the sample mean of $g(S ,\R)$ is an unbiased estimate of $\frac{f(S)}{n}$. Therefore, supposing that a collection $\Psi_l$ of $l \in \mathbb{Z}^+$ RR-tuples were generated, the error
\[|n\cdot \frac{\sum_{\R \in \Psi_l}g(S, \R)}{l}-{f(S)}|\] 
could be arbitrarily small for each $S \subseteq V$ provided that $l$ was sufficiently large, and consequently, the $S$ that can well maximize $\frac{n \cdot \sum_{\R \in \Psi_l}g(S, \R)}{l}$ should intuitively be a good solution to maximizing $f(S)$. Following this idea, let us define 
\begin{equation}
\label{eq: G}
\G(S,\Psi_l) \define \frac{n\cdot \sum_{\R \in \Psi_l}g(S, \R)}{l}
\end{equation}
and consider the following problem.

\begin{problem}
\label{problem: max_g}
Given a collection $\Psi_l$ of RR-tuples, compute 
\[\argmax_{\{* \subseteq V_c, |S|= k\}}\G(S,\Psi_l).\]
\end{problem}

\begin{algorithm}[t]
\caption{Greedy Algorithm}
\label{alg: greedy}
\begin{algorithmic}[1]
\State \textbf{Input:} a function $h$, a candidate set $V_c$ and $k \in \mathbb{Z}^+$.
\State \textbf{Output:} a subset of $V_c$.
\State $S^* \leftarrow \emptyset$;
\While {$|S^*| < k$}
\State $v^* \leftarrow \argmax_{v \in V} h(S^*\cup \{v\})-h(S^*)$;
\If	{$h(S^*\cup \{v\})-h(S^*)<0$} 
\Return $S^*$;
\Else  $~S^* \leftarrow S^* \cup \{v\}$;
\EndIf
\EndWhile
\Return $S^*$;
\end{algorithmic}
\end{algorithm}

A brief examination on Problem \ref{problem: max_g} shows that this problem suffers the similar inapproximability as Max-MIM, which however is not surprising as $\G(S,\Psi_l)$ can be made arbitrarily close to $f(S)$.

\begin{lemma}
\label{lemma: max_g}
DkS can be approximated within a factor of $2 \cdot \alpha$ if Problem \ref{problem: max_g} can be approximated within a factor of $\alpha$ for each $\alpha>1$.
\end{lemma}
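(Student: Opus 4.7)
My plan is to mirror the reduction used for Lemma \ref{lemma: max_mim}, but lift it to Problem \ref{problem: max_g} by explicitly constructing $\Psi_l$ from a deterministic MIM instance so that $\G(S,\Psi_l)$ coincides \emph{exactly} with $f(S)$. This lets the inapproximability of DkS propagate through Problem \ref{problem: max_g} with the same factor-of-$2$ loss.

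The first step is to start from a DkS instance $(\overline{G},\overline{k})$ and invoke (or mildly adapt) the construction used to prove Lemma \ref{lemma: max_mim} to obtain an MIM instance $\M$. The construction used there can be arranged so that every edge carries propagation probability $p_e=1$; under such a deterministic MIM instance, the diffusion process has a unique realization for every seed set, and $f(S)$ reduces to a purely combinatorial quantity that scales with the number of edges of $\overline{G}$ induced by $S$, together with an additive $|S|$ term. This is precisely the structure that yields the factor-$2$ relationship between approximating Max-MIM and approximating DkS in Lemma \ref{lemma: max_mim}.

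The second step is to build $\Psi_l$ explicitly. Because propagation is deterministic, running Alg. \ref{alg: rrtuple_v} from any node $v\in V(\M)$ has a deterministic output $R_v=(G_v,\overline{S}_v,\underline{S}_v)$. I take $l=|V(\M)|$ and let $\Psi_l=\{R_v : v\in V(\M)\}$, which can be assembled in polynomial time. By Corollary \ref{coro: key}, $\E[n\cdot g(S,\R)]=f(S)$; since the sampling distribution is now a point mass, this expectation collapses to the sample sum, giving
\begin{equation*}
\G(S,\Psi_l)=\frac{n}{l}\sum_{v\in V(\M)} g(S,R_v)=f(S) \quad\text{for every } S\subseteq V_c.
\end{equation*}
Hence Problem \ref{problem: max_g} on the instance $(\M,\Psi_l)$ is, value-for-value, identical to Max-MIM on $\M$.

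The final step is to chain the approximations. An $\alpha$-approximation algorithm for Problem \ref{problem: max_g} returns $S'$ with $\G(S',\Psi_l)\ge \G(S_{opt},\Psi_l)/\alpha$, i.e.\ $f(S')\ge f(S_{opt})/\alpha$, so $S'$ is an $\alpha$-approximation for Max-MIM on $\M$. Feeding $S'$ into the post-processing used in Lemma \ref{lemma: max_mim} then produces a $2\alpha$-approximation for the DkS instance, completing the reduction.

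The step I expect to require the most care is verifying that the Lemma \ref{lemma: max_mim} construction is (or can be adjusted to be) fully deterministic, since Problem \ref{problem: max_g} is defined for a \emph{fixed} collection $\Psi_l$ rather than a random sample. If any genuinely probabilistic edge is unavoidable in that construction, my backup is to enlarge $\Psi_l$ by enumerating all realizations of the (still polynomially many) random choices and weighting them by their probabilities via duplication; since the sampling in Alg. \ref{alg: rrtuple_v} produces rational probabilities of polynomial bit-complexity, this can be done in polynomial time and still yields $\G(S,\Psi_l)=f(S)$ identically, preserving the $2\alpha$ bound.
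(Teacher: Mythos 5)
Your proof is correct and follows essentially the same route as the paper: both reduce from DkS through the (already deterministic, $p_e=1$) gadget of Lemma \ref{lemma: max_mim} and hand-craft a polynomial-size, explicitly realizable collection of RR-tuples on which $\G$ reproduces the relevant objective exactly, then reuse the Max-MIM analysis. The only difference is that the paper roots RR-tuples only at the $y_{(u,v)}$ nodes, making $\G(S,\Psi_l)$ directly proportional to the induced edge count $f^*(S)$, whereas you root one at every node so that $\G(S,\Psi_l)=f(S)=k+f^*(S)$ and then need the factor-$2$ step to absorb the additive $k$; both give the stated bound.
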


\begin{proof}
See Appendix \ref{proof: lemma_max_g}.
\end{proof}

While $\G(S,\Psi_l)$ is hard to maximize directly, Corollary \ref{coro: key} suggests that the sandwich optimization technique is particularly useful because there exists upper and lower bounds which can be easily approximated.
For such purposes, let us define 
\begin{equation}
\label{eq: over_G}
\overline{\G}(S,\Psi_l) \define \frac{n\cdot \sum_{\R \in \Psi_l}\overline{g}(S, \R)}{l}
\end{equation}
and  
\begin{equation}
\label{eq: under_G}
\underline{\G}(S, \Psi_l) \define \frac{n\cdot \sum_{\R \in \Psi_l}\underline{g}(S, \R)}{l}.
\end{equation}
One can see that both $\overline{\G}(S,\Psi_l)$ and $\underline{\G}(S,\Psi_l)$ can be  maximized within a constant ratio as they are exactly the maximum coverage problem. In particular, by the celebrated result given by Feige \cite{feige1998threshold}, they can be approximated within a factor of $1-1/e$ by the greedy framework shown in Alg. \ref{alg: greedy}. Utilizing these results, the sandwich algorithm given in Alg. \ref{alg: sandwich} provides a solution to maximizing $\G(S,\Psi_l)$. In Alg. \ref{alg: sandwich}, given the upper and lower bounds, we run the greedy algorithm for each of them and return the one with the maximum objective value. The solution $S^*$ produced by Alg. \ref{alg: sandwich} has the following performance guarantee due to Lu \textit{et al.} \cite{lu2015competition}.

\begin{algorithm}[t]
\caption{Sandwich Approximation Strategy}\label{alg: sandwich}
\begin{algorithmic}[1]
\State \textbf{Input:} $\overline{\G}(S,\Psi_l),\G(S,\Psi_l),  \underline{\G}(S,\Psi_l)$, a candidate set $V_c$ and $k \in \mathbb{Z}^+$;
\State \textbf{Output:} a subset of $V_c$.
\State $\overline{S} \leftarrow$ Alg. \ref{alg: greedy} with ($\overline{\G}(S,\Psi_l),V_c, k$);
\State $\underline{S} \leftarrow$ Alg. \ref{alg: greedy} with ($\underline{\G}(S,\Psi_l),V_c, k$);
\State return $S^{*}=\argmax_{S \in \{\overline{S}, \underline{S}\}}{\G}(S,\Psi_l)$;
\end{algorithmic}
\end{algorithm}

\begin{lemma}[\cite{lu2015competition}]
\label{lemma: sandwich}
${\G}(S^*,\Psi_l) \geq \gamma(\Psi_l)\cdot {\G}(S_{opt}^{\Psi_l},\Psi_l)$, where $\gamma(\Psi_l)=\max \{\frac{{\G}(\overline{S},\Psi_l)}{\overline{\G}(\overline{S},\Psi_l)}, \frac{\underline{\G}(S_{opt}^{\Psi_l},\Psi_l)}{{\G}(S_{opt}^{\Psi_l},\Psi_l)}\}\cdot (1-1/e)$, $S_{opt}^{\Psi_l}$ is the optimal solution to Problem \ref{problem: max_g}, and $\overline{S}$ (resp., $\underline{S}$) is produced in Line 3 (resp., Line 4) of Alg. \ref{alg: sandwich}.
\end{lemma}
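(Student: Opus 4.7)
The plan is to derive two separate lower bounds on $\G(S^*,\Psi_l)$, one routed through the upper surrogate $\overline{\G}$ and one routed through the lower surrogate $\underline{\G}$, and then take the maximum. The main ingredients I would assemble first are: (i) the pointwise sandwich $\underline{\G}(S,\Psi_l) \leq \G(S,\Psi_l) \leq \overline{\G}(S,\Psi_l)$ valid for every $S \subseteq V_c$, which follows by summing the per-RR-tuple inequalities $\underline{g}(S,\R)\le g(S,\R)\le \overline{g}(S,\R)$ (Theorem \ref{theorem: key_1} and Corollary \ref{coro: key}) over the samples in $\Psi_l$ and dividing by $l$; (ii) the observation that $\overline{\G}$ and $\underline{\G}$ are weighted coverage functions of $V_c$ against the ground sets $\{\overline{S}_v\}$ and $\{\underline{S}_v\}$, hence monotone and submodular, so Feige's theorem guarantees that Alg.~\ref{alg: greedy} yields $\overline{S}$ and $\underline{S}$ that are $(1-1/e)$-approximate maximizers of $\overline{\G}$ and $\underline{\G}$ respectively under the cardinality constraint $|S|=k$; and (iii) the selection rule $S^* = \argmax_{S \in \{\overline{S},\underline{S}\}} \G(S,\Psi_l)$ in Alg.~\ref{alg: sandwich}.

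For the first chain I would begin from $\G(S^*,\Psi_l)\ge \G(\overline{S},\Psi_l)$ and multiply and divide by $\overline{\G}(\overline{S},\Psi_l)$ to obtain
\[
\G(S^*,\Psi_l) \;\ge\; \frac{\G(\overline{S},\Psi_l)}{\overline{\G}(\overline{S},\Psi_l)}\cdot \overline{\G}(\overline{S},\Psi_l).
\]
Applying the greedy guarantee gives $\overline{\G}(\overline{S},\Psi_l)\ge (1-1/e)\max_{|S|=k}\overline{\G}(S,\Psi_l)$, which in turn is at least $(1-1/e)\,\overline{\G}(S_{opt}^{\Psi_l},\Psi_l)$ since $S_{opt}^{\Psi_l}$ is a feasible set. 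Finally, the sandwich from (i) provides $\overline{\G}(S_{opt}^{\Psi_l},\Psi_l)\ge \G(S_{opt}^{\Psi_l},\Psi_l)$, producing the first argument of the $\max$ in $\gamma(\Psi_l)$. For the second chain I would start from $\G(S^*,\Psi_l)\ge \G(\underline{S},\Psi_l)\ge \underline{\G}(\underline{S},\Psi_l)$, apply the greedy guarantee to $\underline{\G}$ to get $\underline{\G}(\underline{S},\Psi_l)\ge (1-1/e)\,\underline{\G}(S_{opt}^{\Psi_l},\Psi_l)$, and then rewrite the last quantity as $\frac{\underline{\G}(S_{opt}^{\Psi_l},\Psi_l)}{\G(S_{opt}^{\Psi_l},\Psi_l)}\cdot \G(S_{opt}^{\Psi_l},\Psi_l)$, yielding the second argument. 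Taking the maximum of the two chains reproduces the stated bound.

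The argument is not analytically deep; the main thing I expect to have to be careful about is keeping the quantifiers straight. The greedy approximation acts on the maximizers of $\overline{\G}$ and $\underline{\G}$, not on $S_{opt}^{\Psi_l}$, so the step that drops from $\max_{|S|=k}\overline{\G}(S,\Psi_l)$ to $\overline{\G}(S_{opt}^{\Psi_l},\Psi_l)$ (and symmetrically for $\underline{\G}$) must be justified purely by optimality over the common feasible family $\{S\subseteq V_c : |S|=k\}$, and the sandwich from (i) must be invoked at the correct set in each chain. Beyond this bookkeeping, no machinery beyond Corollary \ref{coro: key} and Feige's $(1-1/e)$ bound is required, which is why the data-dependent factor $\gamma(\Psi_l)$ emerges entirely from ratios that can be read off from the output of Alg.~\ref{alg: sandwich}.
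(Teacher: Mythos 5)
The paper does not reproduce a proof of this lemma — it is cited directly from Lu \textit{et al.} \cite{lu2015competition} — and your reconstruction is correct and is exactly the standard sandwich argument: two chains through $\overline{\G}$ and $\underline{\G}$ respectively, each using the realization-wise inequality $\underline{g}(S,R)\le g(S,R)\le \overline{g}(S,R)$ (established in the proof of Theorem \ref{theorem: key_1}), Feige's $(1-1/e)$ greedy guarantee for the two coverage functions, feasibility of $S_{opt}^{\Psi_l}$ under the common cardinality constraint, and the final $\max$ over the two returned sets. Your version also correctly matches the paper's two-term modification of the original lemma, which drops the third ratio associated with running greedy on $\G$ itself.
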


\begin{remark}
Here we have made a slight modification to the original sandwich strategy as we do not run the greedy algorithm for the objective function $\G(S,\Phi_l)$ itself. The reason for doing such is that the lazy-forward evaluation \cite{leskovec2007cost} method cannot be used for $\G(S,\Phi_l)$ as it is not submodular. Such an alteration does not hurt the performance ratio but will significantly improve the efficiency as discussed later in Sec. \ref{sec: discuss} as well as in experiments. 
\end{remark}

Following the above ideas, the proposed framework is shown in Alg. \ref{alg: framework}. Given a threshold $l$, we first generate $l$ random RR-tuples by using Alg. \ref{alg: rrtuple} and then apply the sandwich strategy. As we can see, the performance of this method is essentially bounded by the data-dependent ratio (i.e., $\gamma(\Psi_l)$) subject to the error due to sampling. Therefore, given $\epsilon_1, \epsilon_2 \in (0,1)$ and $N \in \mathbb{Z}^+$, our goal is to produce an $\big((1-\epsilon_2)\cdot \gamma(\Psi_l)-\epsilon_1 \big)$-approximation with probability at least $1-1/N$, where $\epsilon_1$ is the additive error and $\epsilon_2$ is the multiplicative error. Given the performance requirements, it remains to determine the parameter $l$, which will be discussed in the next subsection. 

Before moving to the next part, we herein mention one useful result concerning estimating the optimal value of $\E[n\cdot \underline{g}(S ,\R)]$. Let us use \[\underline{S}_{opt}=\argmax_{\{S \subseteq V_c, |S|= k\}} \E[\underline{g}(S, \R)]\] to denote the optimal solution to maximizing $\E[n\cdot \underline{g}(S ,\R)]$, and we have the following result. 

\begin{lemma}[\cite{tang2014influence}]
\label{lemma: estimate}
For each $\epsilon_0 >0$ and $N \in \mathbb{Z}^+$, there exists an algorithm which computes a real number $f_{\epsilon_0}^{\lo} \in \mathbb{R}$ such that 
\begin{equation}
\label{eq: lower_esti}
\E[n\cdot \underline{g}(\underline{S}_{opt}, \R)] \geq f_{\epsilon_0}^{\lo} \geq \frac{(1-1/e) \E[n\cdot \underline{g}(\underline{S}_{opt}, \R)]}{2(1+\epsilon_0)^2}
\end{equation}
holds with probability at least $1-1/N$, running in \[O(\frac{(m+n)(\ln N+k\ln n)(1+\epsilon_0)^3}{\epsilon_0^2}).\]
\end{lemma}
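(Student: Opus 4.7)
The plan is to invoke, essentially verbatim, the estimation subroutine from the TIM/TIM$^+$ framework of Tang, Xiao, and Shi, after observing that our lower-bound function has the same combinatorial structure as their classical reverse-reachable coverage. The key observation is that for every fixed RR-tuple $R$, the indicator $\underline{g}(S,R) = \I[S \cap \underline{S}_v \neq \emptyset]$ is, as a function of $S$, a coverage function. Hence both $\E[n\cdot \underline{g}(S,\R)]$ and its sample-average proxy $\underline{\G}(S,\Psi_l)$ are monotone submodular and nonnegative, so the greedy algorithm (Alg.~\ref{alg: greedy}) yields a $(1-1/e)$-approximation to maximizing $\underline{\G}(S,\Psi_l)$ subject to $|S|=k$.

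First, I would define $f_{\epsilon_0}^{\lo}$ via a doubling procedure. For $i=1,2,\dots,\lceil\log_2 n\rceil$, draw $\lambda_i = \Theta\!\left((\ln N + k\ln n)(1+\epsilon_0)^{2}/(\epsilon_0^{2} \cdot 2^{-i})\right)$ fresh RR-tuples $\Psi_{\lambda_i}$ using Alg.~\ref{alg: rrtuple}, run Alg.~\ref{alg: greedy} on $\underline{\G}(\cdot,\Psi_{\lambda_i})$ to obtain $S_i$, and halt the first time the empirical coverage $\underline{\G}(S_i,\Psi_{\lambda_i})/n$ exceeds $2^{-i}$. Output $f_{\epsilon_0}^{\lo} := \underline{\G}(S_{i^*},\Psi_{\lambda_{i^*}})/(1+\epsilon_0)$ at the halting round $i^*$.

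Second, for correctness I would apply a standard Chernoff-Hoeffding bound to the i.i.d.\ $\{0,1\}$ variables $\{\underline{g}(S,\R)\}$ for each fixed $S$, union-bounded over the $\binom{n}{k}\le n^{k}$ feasible seed sets. This guarantees, with probability at least $1-1/N$, that whenever the stopping criterion is met the empirical value $\underline{\G}(S_{i^*},\Psi_{\lambda_{i^*}})$ lies within a $(1\pm\epsilon_0)$ multiplicative window of $\E[n\cdot\underline{g}(S_{i^*},\R)]$. Combining this with the $(1-1/e)$-greedy guarantee on the empirical objective and with $\E[n\cdot\underline{g}(S_{i^*},\R)] \le \E[n\cdot\underline{g}(\underline{S}_{opt},\R)]$ produces both inequalities of \eqref{eq: lower_esti}.

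The main obstacle is the running-time bookkeeping. Each RR-tuple costs $\E[\TIME_\R] = O\!\left(m \cdot \max_{|S|=1}\E[\underline{g}(S,\R)]\right)$ by Lemma~\ref{lemma: time}, and this quantity is itself comparable to $\E[\underline{g}(\underline{S}_{opt},\R)]/k$ up to the usual degree factors. I must show that the halting level $i^*$ satisfies $2^{-i^*} = \Theta\!\left(\E[\underline{g}(\underline{S}_{opt},\R)]\right)$, so that $\lambda_{i^*}\cdot \E[\TIME_\R]$ telescopes to the claimed $O\!\left((m+n)(\ln N + k\ln n)(1+\epsilon_0)^{3}/\epsilon_0^{2}\right)$, with the geometric blowup of the doubling schedule absorbed into a constant and the cubic dependence on $(1+\epsilon_0)$ matching the TIM analysis. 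Since the only novelty over TIM is the substitution of $\underline{g}(S,\R)$ for the classical RR-set indicator—and that substitution preserves the $\{0,1\}$ range and submodular-coverage structure on which the whole analysis rests—the remaining technical work is a mechanical translation of the original argument, deferred to the appendix.
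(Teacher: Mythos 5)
Your proposal is correct and takes essentially the same route the paper intends: the paper gives no independent proof of this lemma, simply citing Theorem 2 of Tang \emph{et al.} after observing that $\underline{g}(S,\R)$ is an ordinary RR-set coverage indicator, which is exactly the reduction you spell out (monotone submodular coverage structure, greedy $(1-1/e)$ guarantee, Chernoff plus a union bound over the $\binom{n}{k}$ feasible seed sets, and a doubling-based estimation of the optimum). The only nit is in the running-time bookkeeping: all you need is $\max_{|S|=1}\E[\underline{g}(S,\R)]\le\E[\underline{g}(\underline{S}_{opt},\R)]$ combined with Lemma~\ref{lemma: time}, rather than the stronger claim involving division by $k$, but this does not affect the argument.
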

The proof follows from the existing works such as Theorem 2 in \cite{tang2014influence} or Theorem 2 in \cite{tong2017efficient}. Such an estimate $f_{\epsilon_0}^{\lo}$ will be used later in our algorithm.

\begin{algorithm}[t]
\caption{The Framework}\label{alg: framework}
\begin{algorithmic}[1]
\State \textbf{Input:} an instance $\M$ of MIM and $l \in \mathbb{Z}^+$.
\State \textbf{Output:} a subset of $V_c$.
\State Generate a collection $\Psi_l=\{\R_1,...,\R_l\}$ of $l$ RR-tuples by Alg. \ref{alg: rrtuple};
\State ${S}^* \leftarrow$ Alg. \ref{alg: sandwich}($\overline{\G}(S,\Psi_l),G(S,\Psi_l),  \underline{\G}(S,\Psi_l), V_c, k$);
\State Return ${S}^*$ and $\Psi_l$;
\end{algorithmic}
\end{algorithm}

\subsection{Parameter Setting}
We will first provide a sufficient condition to guarantee the desired performance, and then discuss how to make such a condition satisfied with a high probability.

\textbf{A Sufficient Condition.} Let $S^*$ and $\Psi_l$ be the output of Alg. \ref{alg: framework}, and $\epsilon_1$ and $\epsilon_2$ be the given parameters. We consider the following relations:
{ 
\begin{align}
\label{eq: c1}
&\G(S^*,\Psi_l)-f(S^*)\leq \epsilon_1 \cdot f(S_{opt})\\
\label{eq: c2}
&\G(S_{opt},\Psi_l)-f(S_{opt})\geq -\epsilon_2 \cdot f(S_{opt})\\
\label{eq: c3}
&\G(S^*,\Psi_l) \geq \gamma(\Psi_l) \cdot \G(S_{opt},\Psi_l)\\
&\epsilon_1, \epsilon_2 \in (0,1). \nonumber
\end{align}}
Eqs. (\ref{eq: c1})-(\ref{eq: c3}) would be sufficient to ensure the desired the performance as shown in the following lemma of which the proof is elementary. 
\begin{lemma}
\label{lemma: ratio}
We have $f(S^*) \geq \big((1-\epsilon_2)\cdot \gamma(\Psi_l)-\epsilon_1 \big)\cdot f(S_{opt})$ provided Eqs. (\ref{eq: c1}), (\ref{eq: c2}) and (\ref{eq: c3}).
\end{lemma}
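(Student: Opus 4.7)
The plan is to chain the three hypotheses together in the order (1), (3), (2); the statement is essentially an algebraic rearrangement, as the authors themselves hint when they call the proof elementary.

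First I would rewrite (1) as
\begin{equation*}
f(S^*) \;\geq\; \G(S^*,\Psi_l) - \epsilon_1 \cdot f(S_{opt}),
\end{equation*}
which turns the sampling error bound into a lower bound on the true objective at the returned solution $S^*$. This is the only place the additive error $\epsilon_1$ enters, and it appears with the minus sign that will survive to the final inequality.

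Next I would apply (3) to the $\G(S^*,\Psi_l)$ term, yielding
\begin{equation*}
f(S^*) \;\geq\; \gamma(\Psi_l)\cdot \G(S_{opt},\Psi_l) - \epsilon_1\cdot f(S_{opt}).
\end{equation*}
Here I should pause to make sure (3) is usable as stated: Lemma~4 guarantees $\G(S^*,\Psi_l)\geq \gamma(\Psi_l)\cdot \G(S_{opt}^{\Psi_l},\Psi_l)$ where $S_{opt}^{\Psi_l}$ is the optimizer of $\G$, not of $f$, but since $\G(S_{opt}^{\Psi_l},\Psi_l)\geq \G(S_{opt},\Psi_l)$ by definition of $S_{opt}^{\Psi_l}$, the form used in (3) is a legitimate weakening. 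Then I would apply (2) in the form $\G(S_{opt},\Psi_l)\geq (1-\epsilon_2)\cdot f(S_{opt})$; here I should check the sign of $\gamma(\Psi_l)$, which is nonnegative (indeed bounded above by $1-1/e$ by Lemma~4), so multiplying through by $\gamma(\Psi_l)$ preserves the direction of the inequality. Combining gives
\begin{equation*}
f(S^*) \;\geq\; \gamma(\Psi_l)\cdot (1-\epsilon_2)\cdot f(S_{opt}) - \epsilon_1\cdot f(S_{opt}) \;=\; \bigl((1-\epsilon_2)\cdot \gamma(\Psi_l) - \epsilon_1\bigr)\cdot f(S_{opt}),
\end{equation*}
which is the claim.

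There is no real obstacle: the lemma is a one-line algebraic composition of the three given inequalities, and the only thing to be careful about is the sign/direction bookkeeping when passing from $\G$ to $f$ and when pulling the factor $\gamma(\Psi_l)$ across (2). The condition $\epsilon_1,\epsilon_2\in(0,1)$ plays no role in the derivation itself and is only stated so that the resulting bound $(1-\epsilon_2)\gamma(\Psi_l)-\epsilon_1$ is a meaningful (potentially nontrivial) approximation ratio. The real work of the section lies elsewhere, in showing that a suitable choice of $l$ makes (1) and (2) hold with high probability.
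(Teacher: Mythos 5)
Your proof is correct and is exactly the elementary chaining of (\ref{eq: c1}), (\ref{eq: c3}), and (\ref{eq: c2}) that the paper has in mind (the paper omits the proof, calling it elementary); your side remarks about the nonnegativity of $\gamma(\Psi_l)$ and about passing from $S_{opt}^{\Psi_l}$ to $S_{opt}$ in Lemma \ref{lemma: sandwich} are accurate and only strengthen the write-up.
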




\textbf{Determining the parameter $l$.} According to Theorem \ref{theorem: key_1}, Eqs. (\ref{eq: c1}) and (\ref{eq: c2}) can be satisfied when $l$ is sufficiently large. In particular, we have the following result showing the bound of the sufficient number of samples.

\begin{lemma}[\cite{report}]
\label{lemma: l}
Eqs. (\ref{eq: c1}) and (\ref{eq: c2}) simultaneously hold with probability at least $1-2/N$ if $l \geq  \frac{\max (l_1,l_2)}{f(S_{opt})}$ where
\begin{equation}
\label{eq: l_1}
l_1 \define \dfrac{n\cdot (\ln \binom{n}{k} + \ln N)(2+\epsilon_1)}{(\epsilon_1 )^{2}}
\end{equation}
and
\begin{equation}
\label{eq: l_2}
l_2 \define \frac{2\cdot n \cdot \ln N }{(\epsilon_2)^{2}}
\end{equation}
\end{lemma}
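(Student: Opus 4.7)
The plan is to recognize that both displays are concentration-style statements about the empirical estimator $\G(\cdot,\Psi_l)$ around the true objective $f(\cdot)$, and to prove them via multiplicative Chernoff bounds combined with a union bound. By Corollary \ref{coro: key}, for every fixed $S$ the random variable $n\cdot g(S,\R)$ is an unbiased estimate of $f(S)$, so writing $X_i \define g(S,\R_i)\in\{0,1\}$ we have $\E[\sum_i X_i]=\mu_S\define l\cdot f(S)/n$, and the two desired inequalities translate into tail bounds on $\sum_i X_i$.

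I would handle Eq.\ (\ref{eq: c2}) first because $S_{opt}$ is a fixed (non-random) set. The event $\G(S_{opt},\Psi_l)-f(S_{opt})<-\epsilon_2 f(S_{opt})$ is exactly $\sum_i X_i < (1-\epsilon_2)\mu_{S_{opt}}$, and the multiplicative Chernoff lower tail yields a bound of $\exp(-\epsilon_2^2 \mu_{S_{opt}}/2)=\exp(-\epsilon_2^2 l\, f(S_{opt})/(2n))$. Requiring this to be at most $1/N$ gives precisely $l\ge l_2/f(S_{opt})$.

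Equation (\ref{eq: c1}) is the more delicate step, because $S^*$ is itself a function of $\Psi_l$, so one cannot simply invoke Chernoff for a fixed set. The plan is to prove the stronger statement that the inequality holds simultaneously for every $S\subseteq V_c$ with $|S|=k$, by a union bound over the $\binom{n}{k}$ candidate seed sets. For a fixed such $S$, the event $\G(S,\Psi_l)-f(S)>\epsilon_1 f(S_{opt})$ is equivalent to $\sum_i X_i>(1+\delta_S)\mu_S$ with $\delta_S\define \epsilon_1 f(S_{opt})/f(S)$. The multiplicative upper tail gives
\begin{equation*}
\Pr\Big[\sum_i X_i>(1+\delta_S)\mu_S\Big]\le \exp\!\Big(-\tfrac{\delta_S^2 \mu_S}{2+\delta_S}\Big)=\exp\!\Big(-\tfrac{\epsilon_1^2 f(S_{opt})^2\, l}{n(2 f(S)+\epsilon_1 f(S_{opt}))}\Big).
\end{equation*}
Using the monotonicity $f(S)\le f(S_{opt})$ the denominator is at most $n(2+\epsilon_1)f(S_{opt})$, so the bound simplifies to $\exp(-\epsilon_1^2 f(S_{opt})\, l/(n(2+\epsilon_1)))$. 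Multiplying by $\binom{n}{k}$ and asking for at most $1/N$ yields exactly $l\ge l_1/f(S_{opt})$.

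A final union bound combines the two failure probabilities to give at most $2/N$, whence both Eqs.\ (\ref{eq: c1}) and (\ref{eq: c2}) hold simultaneously with probability at least $1-2/N$ as claimed. The main subtlety I expect is exactly this handling of the data-dependent $S^*$: without the union bound over all size-$k$ subsets, one cannot apply Chernoff to an estimator that the selection rule has already optimized against. The other small technical point is the conversion from the additive gap $\epsilon_1 f(S_{opt})$ into a multiplicative deviation $\delta_S$, which is what forces the $f(S)\le f(S_{opt})$ bound and produces the factor $2+\epsilon_1$ in $l_1$.
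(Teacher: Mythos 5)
Your proof is correct and follows essentially the same route as the paper's: a multiplicative Chernoff upper tail plus a union bound over all $\binom{n}{k}$ size-$k$ sets for Eq.~(\ref{eq: c1}), a single Chernoff lower tail for the fixed set $S_{opt}$ for Eq.~(\ref{eq: c2}), and a final union bound giving $1-2/N$. If anything, your version is slightly cleaner than the paper's, which carries a spurious (and random) factor $\gamma(\Psi_l)$ inside its deviation parameter $\delta$ even though neither the statement of Eq.~(\ref{eq: c1}) nor the definition of $l_1$ contains it; your derivation of the $(2+\epsilon_1)$ factor from $f(S)\le f(S_{opt})$ matches $l_1$ exactly.
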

\begin{proof}
See Appendix \ref{proof: lemma_time}
\end{proof}

The above lemma does provide a sufficient bound but $f(S_{opt})$ is unknown to us. Such an issue can be overcome by computing a lower bound in advance. In particular, the $f_{\epsilon_0}^{\lo}$ mentioned earlier would perfectly meet our purpose. 
\begin{lemma}
\label{lemma: lower_opt}
It follows immediately from Theorem \ref{theorem: key_1} and Lemma \ref{lemma: estimate} that 
\begin{align*}
f_{\epsilon_0}^{\lo}&\leq \E[n\cdot \underline{g}(\underline{S}_{opt}, \R)] \\
&\leq \E[n\cdot {g}(\underline{S}_{opt}, \R)]=f(\underline{S}_{opt}) \leq f(S_{opt}).
\end{align*} 
\end{lemma}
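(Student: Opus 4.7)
The plan is to verify the four-link chain of (in)equalities in turn, each of which follows from a result already established in the paper; nothing fundamentally new is required.

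First, I would invoke Lemma \ref{lemma: estimate} applied at the specific seed set $\underline{S}_{opt}$. The left half of inequality (\ref{eq: lower_esti}) gives immediately
\[
f_{\epsilon_0}^{\lo} \leq \E[n\cdot \underline{g}(\underline{S}_{opt}, \R)],
\]
establishing the first link.

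Next, to pass from $\underline{g}$ to $g$, I would apply Corollary \ref{coro: key} with $S = \underline{S}_{opt}$. The first inequality in that corollary yields $\E[n\cdot \underline{g}(\underline{S}_{opt}, \R)] \leq \E[n\cdot g(\underline{S}_{opt}, \R)]$, while its embedded equality $\E[n\cdot g(S, \R)] = f(S)$ gives $\E[n\cdot g(\underline{S}_{opt}, \R)] = f(\underline{S}_{opt})$. Together these produce the middle two links of the chain.

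Finally, the inequality $f(\underline{S}_{opt}) \leq f(S_{opt})$ is immediate from the definition of $S_{opt}$ as the maximizer of $f$ over size-$k$ subsets of $V_c$: the set $\underline{S}_{opt}$ is itself a size-$k$ subset of $V_c$ (inherited from the domain of the optimization that defines it), and thus is a feasible candidate which the maximum dominates. Concatenating the four links gives exactly the displayed chain. There is no real obstacle here; this is a bookkeeping corollary whose sole purpose is to certify that the computable quantity $f_{\epsilon_0}^{\lo}$ is a valid lower bound on the unknown $f(S_{opt})$, so that it can legitimately replace $f(S_{opt})$ in the sample-complexity bound of Lemma \ref{lemma: l}.
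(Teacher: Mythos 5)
Your proof is correct and follows essentially the same route as the paper, which itself treats this as an immediate chain: the left half of Eq.~(\ref{eq: lower_esti}) for the first link, Theorem~\ref{theorem: key_1} (via Corollary~\ref{coro: key}) for the middle two, and feasibility of $\underline{S}_{opt}$ in the maximization defining $S_{opt}$ for the last. No gaps.
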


\begin{remark}
Different from the existing works, we compute the lower bound by analyzing $\E[\underline{g}(S, \R)]$ instead of the original objective function $f$, because $\G(S,\Psi_l)$ itself is hard to maximize. Consequently, the obtained bound is looser, leading to an increase in time complexity. However, according Lemma \ref{lemma: time}, the time complexity of generating one RR-tuple is bounded by $O(m \cdot \max_{|S|=1}\E[\underline{g}(S ,\R)])$ which concerns only the lower bound function. Therefore, $f_{\epsilon_0}^{\lo}$ is sufficient to ensure that the running time of our algorithm is near-optimal, as shown later in this section. In addition, we can see from Eqs. (\ref{eq: c1}) and (\ref{eq: c2}) that among the three estimators $\G$, $\overline{\G}$ and $\underline{\G}$, we only need to bound $\G$ to ensure the data-dependent approximation ratio.
\end{remark}

\begin{algorithm}[t]
\caption{Reverse Sandwich Algorithm}
\label{alg: rs}
\begin{algorithmic}[1]
\State \textbf{Input:} an instance $\M$ of MIM, $\epsilon_0, \epsilon_1,\epsilon_2 \in (0,1), N>1$;
\State \textbf{Output:} a node set $S_{RS} \subseteq V_c$;
\State Compute $f_{\epsilon_0}^{\lo}$ by Lemma \ref{lemma: estimate} with $\epsilon_0$;
\State $l \leftarrow \max(l_1, l_2)/f_{\epsilon_0}^{\lo}$ according to Lemma \ref{lemma: l};
\State $(S_{RS},\Psi_l) \leftarrow$ Alg. \ref{alg: framework} with $(\M, l)$;
\State \textbf{Return} $S_{RS}$ and $\gamma(\Psi_l)$;
\end{algorithmic}
\end{algorithm}

\subsection{Reverse Sandwich (RS) Framework}
\label{subsec: rs}
Putting the above together, we have the complete algorithm shown in Alg. \ref{alg: rs}, termed as \textit{Reverse Sandwich (RS) algorithm}. Given an instance $\M$ of the MIM problem and the parameters $\epsilon_0$, $\epsilon_1, \epsilon_2$ and $N$, the algorithm consists of the following steps:
\begin{enumerate}
\item  Compute the estimate $f_{\epsilon_0}^{\lo}$ according to Lemma \ref{lemma: estimate}. Determine $l$ by $l=\frac{\max(l_1, l_2)}{f_{\epsilon_0}^{\lo}}$ according to Lemmas \ref{lemma: l}.
\item Call the framework Alg. \ref{alg: framework} with $\M$ and $l$ to obtain the final solution $S_{RS}$ together with the used samples $\Psi_l$.
\end{enumerate}

\textbf{\textbf{Running Time and Implementation}.} Due to Lemma \ref{lemma: estimate}, Step (1) runs in \[O(\frac{(m+n)(\ln N+k \ln n)(1+\epsilon_0)^3}{\epsilon_0^2}).\] For Line 3 in Alg. \ref{alg: framework}, by Lemma \ref{lemma: estimate} and \ref{lemma: l}, the total number of RR-tuples is $O(\frac{n\cdot (\ln N+k\ln n)}{(\min (\epsilon_1,\epsilon_2))^2\cdot f_{\epsilon_0}^{\lo}})$, and combining Lemma \ref{lemma: time}, its total running time is 
\begin{equation}
\label{eq: time}
O(\frac{(m+n)(\ln N+k\ln n)(1+\epsilon_0)^2}{(\min (\epsilon_1,\epsilon_2))^2}).
\end{equation}
Line 4 in Alg. \ref{alg: framework} runs in linear to the input size which is bounded by the time of generating the RR-tuples \cite{vazirani2013approximation}. Therefore, the running time of Step (2) is the same as Eq. (\ref{eq: time}). Suppose $\epsilon_1$ and $\epsilon_2$ are given as the required accuracy, to minimize the total running time of Step (1) and Step (2), we ideally seek the $\epsilon_0$ that can minimize $\frac{K\cdot (1+\epsilon_0)^3}{(\epsilon_0)^2}+\frac{(1+\epsilon_0)^2}{(\min (\epsilon_1,\epsilon_2))^2}$, where $K$ is determined by the hidden constants behind the asymptotically bounds.  Since it is not possible to compute $K$, we follow the idea in Tang \textit{et al.} \cite{tang2015influence} and set $\epsilon_0=K \cdot \epsilon_1$ where in experiments we simply adopt $K=100$. For simplicity, we may consider a unified error $\epsilon$ and adopt the setting that $\epsilon_1=\epsilon_2=\epsilon$ and $\epsilon_0=K \cdot \epsilon_1$. Under such a setting, the running time is $O(\frac{(m+n)(\ln N+k\ln n)}{\epsilon^2})$ and the performance ratio is $(1-\epsilon)\cdot \gamma(\Psi_l)-\epsilon$. The results are summarized as follows.

\begin{theorem}
\label{theorem: main}
With probability at least $1-1/N$,  RS algorithm produces a solution $S_{RS}$ such that
\[f(S_{RS})\geq \big((1-\epsilon)\cdot \gamma(\Psi_l)-\epsilon \big)\cdot f(S_{opt}),\]
running in $O(\frac{(m+n)(\ln N+k\ln n)}{\epsilon^2})$ in expect, where $\Psi_l$ is the collection of samples used during the process. 
\end{theorem}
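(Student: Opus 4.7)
The plan is to bolt together the pieces developed earlier in this section. For the performance guarantee, I would invoke Lemma \ref{lemma: ratio}, which reduces the task to verifying the three conditions (\ref{eq: c1})--(\ref{eq: c3}). Condition (\ref{eq: c3}) is immediate: since $S_{RS}$ is the output of Alg. \ref{alg: sandwich} called inside Alg. \ref{alg: framework}, Lemma \ref{lemma: sandwich} gives $\G(S_{RS},\Psi_l)\geq \gamma(\Psi_l)\cdot \G(S_{opt}^{\Psi_l},\Psi_l)\geq \gamma(\Psi_l)\cdot \G(S_{opt},\Psi_l)$, where the second inequality uses optimality of $S_{opt}^{\Psi_l}$ with respect to $\G(\cdot,\Psi_l)$.

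For the two sampling-error conditions (\ref{eq: c1}) and (\ref{eq: c2}), Lemma \ref{lemma: l} guarantees they hold simultaneously with probability at least $1-2/N'$ provided $l\geq \max(l_1,l_2)/f(S_{opt})$. Since $f(S_{opt})$ is unknown, the algorithm substitutes $f_{\epsilon_0}^{\lo}$ and sets $l=\max(l_1,l_2)/f_{\epsilon_0}^{\lo}$. By Lemma \ref{lemma: lower_opt} together with Lemma \ref{lemma: estimate}, the inequality $f_{\epsilon_0}^{\lo}\leq f(S_{opt})$ itself holds with probability at least $1-1/N''$, giving $l\geq \max(l_1,l_2)/f(S_{opt})$ as required. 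To obtain overall success probability $1-1/N$, I would run each of the three stochastic subroutines with parameter $3N$ in place of $N$ and then union-bound the three bad events; the resulting logarithmic factors are absorbed into $O(\ln N)$. Applying Lemma \ref{lemma: ratio} with $\epsilon_1=\epsilon_2=\epsilon$ then yields $f(S_{RS})\geq \bigl((1-\epsilon)\gamma(\Psi_l)-\epsilon\bigr)f(S_{opt})$.

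For the expected running time, Step 1 costs $O\bigl(\frac{(m+n)(\ln N+k\ln n)(1+\epsilon_0)^3}{\epsilon_0^2}\bigr)$ by Lemma \ref{lemma: estimate}. In Step 2, the number of RR-tuples is $l=O\bigl(\frac{n(\ln N+k\ln n)}{\epsilon^2\cdot f_{\epsilon_0}^{\lo}}\bigr)$ by Lemma \ref{lemma: l}, and by Lemma \ref{lemma: time} each RR-tuple takes expected time $O\bigl(m\cdot \max_{|S|=1}\E[\underline{g}(S,\R)]\bigr)$. The key algebraic step is to cancel the factor $\max_{|S|=1}\E[\underline{g}(S,\R)]$ using Lemma \ref{lemma: estimate}: since $\underline{g}(\cdot,\R)$ is monotone in the seed set, any singleton seed is dominated by the $k$-seed optimum, so $\E[n\cdot \underline{g}(\underline{S}_{opt},\R)]\geq n\cdot \max_{|S|=1}\E[\underline{g}(S,\R)]$, and Lemma \ref{lemma: estimate} then gives $f_{\epsilon_0}^{\lo}=\Omega\bigl(n\cdot \max_{|S|=1}\E[\underline{g}(S,\R)]\bigr)$. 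Multiplying $l$ by the per-sample cost and cancelling yields $O\bigl(\frac{m(\ln N+k\ln n)(1+\epsilon_0)^2}{\epsilon^2}\bigr)$ for Step 2, and Alg. \ref{alg: sandwich} runs in time linear in the total input size of the RR-tuples (which is no larger than what was already spent generating them \cite{vazirani2013approximation}). Setting $\epsilon_0=K\epsilon$ for a constant $K$ collapses both steps to $O\bigl(\frac{(m+n)(\ln N+k\ln n)}{\epsilon^2}\bigr)$.

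The main obstacle is the coupling in Step 2 between the random normalization $f_{\epsilon_0}^{\lo}$ (which controls $l$) and the per-sample generation cost (which also scales with the lower-bound estimator), together with the bookkeeping needed to convert the three $1-1/(3N)$ events into a single $1-1/N$ guarantee. Exploiting that $f_{\epsilon_0}^{\lo}$ is tight enough, via Lemma \ref{lemma: estimate}, to dominate $n\cdot\max_{|S|=1}\E[\underline{g}(S,\R)]$ is what makes the final time bound match the near-optimal $O\bigl(\frac{(m+n)(\ln N+k\ln n)}{\epsilon^2}\bigr)$ of state-of-the-art IM solvers.
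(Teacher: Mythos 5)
Your proposal is correct and follows essentially the same route as the paper: conditions (\ref{eq: c1})--(\ref{eq: c3}) via Lemmas \ref{lemma: l}, \ref{lemma: lower_opt}, \ref{lemma: estimate} and \ref{lemma: sandwich}, a union bound with amplified $N$, and the running-time bound obtained by cancelling the per-sample cost from Lemma \ref{lemma: time} against $f_{\epsilon_0}^{\lo}$. In fact you make explicit two steps the paper leaves implicit — the passage from $\G(S_{opt}^{\Psi_l},\Psi_l)$ to $\G(S_{opt},\Psi_l)$ in (\ref{eq: c3}), and the monotonicity argument showing $f_{\epsilon_0}^{\lo}=\Omega\bigl(n\cdot\max_{|S|=1}\E[\underline{g}(S,\R)]\bigr)$ — both of which are exactly what the paper relies on.
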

\begin{proof}
See Appendix \ref{proof: theorem_main}.
\end{proof}

\begin{remark}
\label{remark: ratio}
In terms of $m$ and $n$, the running time of Alg. \ref{alg: rs} is near-linear up to a factor of $\ln n$, which matches the near-optimal one for the IM problem. In another issue, $\gamma(\Psi_l)$ is bounded by $\frac{{\G}(\overline{S},\Psi_l)}{\overline{\G}(\overline{S},\Psi_l)}\cdot (1-1/e)$ which can be easily computed after each run of Alg. \ref{alg: rs}, which provides a method to evaluate the approximation ratio in real-time. 
\end{remark}

\section{Further Discussions}
\label{sec: discuss}

\subsection{The Naive Greedy Algorithm}
\label{subsec: nrg_algorithm}
Even though the greedy algorithm does not provide any performance bound for the MIM problem, one might still believe it is practically effective. The most straightforward way to apply the greedy algorithm is to leverage the Monte-Carlo simulation to estimate the objective function, which is however very time-consuming. Moving forward slightly, the analysis in Sec. \ref{subsec: sampling} shows that the reverse sampling method provides an unbiased estimate (i.e., $\G(S,\Psi_l)$) of the objective function of MIM, and therefore, a greedy algorithm can be obtained by first generating a bunch of samples $\Psi_l$ and then apply Alg. \ref{alg: greedy} to $\G(S,\Psi_l)$, which is somehow the best algorithm that can be trivially inferred from the existing techniques. We denote such an algorithm as NR-Greedy and will take it as a baseline method in experiments. Suppose $l$ RR-tuples were used in NR-Greedy, the running time would be $O(knl\cdot \TIME_R)$ so setting $l$ as the same as that in RS results in a running time of $O(knm)$ which is not scalable. For such a heuristic, we set the number of used RR-tuples according to Eq. (\ref{eq: l_2}). However, NR-Greedy is still time-consuming under such a setting, as shown later in experiments.

\subsection{Enabling Data-driven Models}
As one may have noticed, the framework in Sec. \ref{sec: framework} utilizes the upper and lower bounds which do not rely on any specific form of the activation function. 
The rationale behind it is that the lower and upper bounds correspond to the sufficient and necessary conditions which are derived by only taking account of the arrival time of the cascades. As a result, the RS algorithm can be taken as a framework which applies to a wide range of activation functions. While activation functions in a simple form cannot capture the real cases, one can utilize data-driven approaches to learn the activation function by leveraging historical data of user profile or cascade characteristics. Data-driven activation functions would make the IMC model more expressive while keeping the combinatorial structures used in our algorithm in Sec. \ref{alg: framework}.

\subsection{On the Upper and Lower Bounds}
Recall that the bounds used in our algorithm are $\E[n\cdot \underline{g}(S ,\R)]$ and $E[n\cdot \overline{g}(S, \R)]$. We now provide sufficient conditions for those bounds to be tight. 

\begin{definition}[$c_{new}$-dominating Activation Function]
\label{def: dominating}
$F_v$ is \textit{$c_{new}$-dominating} if $F_v(O)=c_{new}$ for each $O \subseteq V \times C$ where $c_{new} \in \{c\}_{(v,c) \in O}$.
\end{definition}

Informally speaking, $c_{new}$ always wins the competition during the diffusion if $F_v$ is $c_{new}$-dominating at each node. In such a case, the upper bound is tight.

\begin{lemma}
\label{lemma: exp_upper}
If $F_v$ is $c_{new}$-dominating for each $v \in V$, we have $\E[n\cdot \overline{g}(S ,\R)]=f(S)$ for each $S \subseteq V$.
\end{lemma}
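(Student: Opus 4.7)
The plan is to upgrade the upper-bound inequality in Corollary~\ref{coro: key} to an equality by establishing the pointwise identity $\overline{g}(S,R_v)=g(S,R_v)$ for every RR-tuple $R_v$, and then integrating against the distribution of $\R$. Since Corollary~\ref{coro: key} already supplies $\E[n\cdot\overline{g}(S,\R)]\geq f(S)$, only the reverse inequality is needed; and because $\overline{g}\geq g$ always holds (a $c_{new}$-seed outside $V(G_v)$ cannot activate $v$ in $\M_{R_v}$), it suffices to show that whenever $S\cap\overline{S}_v\neq\emptyset$, the node $v$ is $c_{new}$-active in $\M_{R_v}$ under seed set $S$.

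The central structural observation I will isolate is the layer rigidity of $G_v$ induced by Alg.~\ref{alg: rrtuple_v}: an edge $(u_1,u_2)$ is added to $G_v$ exactly when $u_2$ lies in the current BFS frontier and $u_1$ enters the next one, so every edge of $G_v$ points from a layer-$(d+1)$ node to a layer-$d$ node, where the BFS layer $d_u$ equals the forward distance from $u$ to $v$ in $G_v$. Consequently the forward distance along any path in $G_v$ from a layer-$d_a$ node to a layer-$d_b$ node is exactly $d_a-d_b$. All existing-cascade seeds in $V(G_v)\cap\tau^*$ sit at a single stopping layer $d^*$ (or $V(G_v)\cap\tau^*=\emptyset$ if the BFS terminates by exhaustion), while any chosen $s\in S\cap\overline{S}_v$ sits at some layer $d_s\leq d^*$.

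Next, I will run the forward diffusion on $\M_{R_v}$ (all edges live) along the BFS-tree path $\pi$ from $s$ down to $v$, and argue by induction along $\pi$ that $c_{new}$ is never pre-empted. For any $u\in\pi$ at layer $d_u$, the cascade $c_{new}$ seeded at $s$ reaches $u$ at time $d_s-d_u$, whereas any existing cascade with a forward path to $u$ in $G_v$ cannot arrive earlier than $d^*-d_u\geq d_s-d_u$. When the inequality is strict, $c_{new}$ wins at $u$ purely by arrival order; when it is an equality (the regime $d_s=d^*$), the simultaneous arrival is broken in favor of $c_{new}$ by the $c_{new}$-dominating activation function $F_u$. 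The induction therefore reaches $v$ itself, yielding $g(S,R_v)=1$. Combining this pointwise identity with Theorem~\ref{theorem: key_1} gives $\E[\overline{g}(S,\R_v)]=\E[g(S,\R_v)]=f_v(S)$ for every $v$, and averaging over the uniform choice of $v$ in the definition of $\R$ delivers $\E[n\cdot\overline{g}(S,\R)]=\sum_{v\in V}f_v(S)=f(S)$.

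The step I expect to be the main obstacle is exactly the tie-breaking induction in the $d_s=d^*$ regime: the $c_{new}$-dominating property must be applied not only at $v$ but at every intermediate node on $\pi$ that is simultaneously reached by a competing cascade, and I must rule out the possibility that an existing cascade hits some intermediate $u$ strictly earlier through an off-tree edge of $G_v$ (prevented by the layer-rigid structure, which forbids any edge skipping a layer) or through a seed outside $V(G_v)$ (prevented because $\M_{R_v}$ uses only the graph $G_v$). A few degenerate configurations also need explicit but short handling: $s=v$ collapses to Def.~\ref{def: pri_function}(a); $s\in\tau^*$ is resolved by a single application of $c_{new}$-dominating at $s$; and the case $V(G_v)\cap\tau^*=\emptyset$ is trivial since no competitor exists in $\M_{R_v}$.
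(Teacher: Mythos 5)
Your proposal is correct and follows essentially the same route as the paper: reduce to the pointwise identity $\overline{g}(S,R_v)=g(S,R_v)$ and establish the nontrivial direction by induction along the path from $S\cap\overline{S}_v$ to $v$, using the fact that existing-cascade seeds can only occupy the final BFS layer so that $c_{new}$ arrives no later than any competitor and wins ties via the $c_{new}$-dominating property. Your layer-rigidity observation and the degenerate-case handling are just a more explicit rendering of the induction the paper states in one line.
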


\begin{proof}
See Appendix \ref{proof: lemma: exp_upper}.
\end{proof}

Similarly, we could consider the case where $c_{new}$ is always dominated by other cascades whenever competition appears, and thus we have the following pair.

\begin{definition}[$c_{new}$-dominated Activation Function]
\label{def: dominated}
$F_v$ is $c_{new}$-dominated if $F_v(O)\neq c_{new}$ for each $O\subseteq V \times C$ where $|\{c\}_{(v,c) \in O}|\geq 2$ and $c_{new} \in \{c\}_{(v,c) \in O}$.
\end{definition}

\begin{lemma}
\label{lemma: exp_lower}
If $F_v$ is $c_{new}$-dominated for each $v \in V$, we have $\E[n\cdot \underline{g}(S ,\R)]=f(S)$ for each $S \subseteq V$.
\end{lemma}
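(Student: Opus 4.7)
My plan is to couple the reverse sampling of Alg.~\ref{alg: rrtuple_v} with a realization $\omega$ of the forward diffusion, as in the proof of Theorem~\ref{theorem: key_1}, and show that under $c_{new}$-domination the condition ``$S\cap\underline{S}_v(\omega)\neq\emptyset$,'' already known to be sufficient for $v$ to be $c_{new}$-active, is also necessary. Since Theorem~\ref{theorem: key_1} already provides $\E[n\cdot\underline{g}(S,\R)]\leq f(S)$, establishing this converse will collapse the inequality to an equality.

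First I would fix a realization $\omega$ (one independent Bernoulli draw per edge) and, writing $\tau^\ast=\bigcup_{c\in C_e}\tau(c)$, set $d_v(\omega)$ to be the shortest live-edge distance from $\tau^\ast$ to $v$ (with $\infty$ if none exists) and $t_{c_{new}}(v,\omega)$ the shortest live-edge distance from $S$ to $v$. A direct inspection of Alg.~\ref{alg: rrtuple_v} shows $\underline{S}_v(\omega)=\{u\in V:\mathrm{dist}_\omega(u,v)<d_v(\omega)\}$ (and empty whenever $v\in\tau^\ast$), so $S\cap\underline{S}_v(\omega)\neq\emptyset$ is equivalent to $t_{c_{new}}(v,\omega)<d_v(\omega)$.

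The crux is then to prove: under $c_{new}$-domination, $v$ is $c_{new}$-active in $\omega$ iff $t_{c_{new}}(v,\omega)<d_v(\omega)$. The ``$\Leftarrow$'' half is the content of Theorem~\ref{theorem: key_1}'s lower bound. For ``$\Rightarrow$,'' observe that $v$ is activated at time $t^\ast=\min(t_{c_{new}}(v,\omega),d_v(\omega))$. If $t_{c_{new}}(v,\omega)\geq d_v(\omega)$, two cases arise: when the inequality is strict, $v$ is activated at time $d_v(\omega)$ without $c_{new}$ appearing in the argument of $F_v$, so $v$ is not $c_{new}$-active; when $t_{c_{new}}(v,\omega)=d_v(\omega)$, the argument $O$ to $F_v$ at activation contains both $c_{new}$ and at least one existing cascade, so $|\{c\}_{(v,c)\in O}|\geq 2$ and $c_{new}\in\{c\}_{(v,c)\in O}$, forcing $F_v(O)\neq c_{new}$ by Def.~\ref{def: dominated}. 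Contraposing yields the desired implication.

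Averaging over $\omega$ and the uniform choice of $v$ in Alg.~\ref{alg: rrtuple} then gives $f_v(S)=\Pr_\omega[S\cap\underline{S}_v(\omega)\neq\emptyset]=\E[\underline{g}(S,\R_v)]$, and summing over $v\in V$ delivers the claimed equality $f(S)=n\cdot\E[\underline{g}(S,\R)]$. The main obstacle lies in the tie case of ``$\Rightarrow$'': one must genuinely leverage the strictness built into Def.~\ref{def: dominated}---that $c_{new}$ loses \emph{every} head-to-head tie in which it co-arrives with an existing cascade---rather than the weaker statement that $c_{new}$ merely sometimes loses. The remaining realization-coupling and expectation manipulations are routine given the machinery already developed for Theorem~\ref{theorem: key_1}.
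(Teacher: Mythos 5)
First, note that the paper itself states Lemma~\ref{lemma: exp_lower} without proof; the only written argument is for its dual, Lemma~\ref{lemma: exp_upper}, which reduces to the pointwise identity $\overline{g}(S,R_v)=g(S,R_v)$ and an induction along a shortest live path. Your plan is the natural mirror image of that proof: identify $\underline{S}_v$ with the set of nodes whose live-edge distance to $v$ is strictly less than $d_v=\mathrm{dist}(\tau^*,v)$, so that $S\cap\underline{S}_v\neq\emptyset$ iff $t_{c_{new}}(v)<d_v$, and then show that under domination this condition is necessary as well as sufficient for $v$ to be $c_{new}$-active. The identification of $\underline{S}_v$, the reduction to a per-realization statement via the coupling already built for Theorem~\ref{theorem: key_1}, and the recognition that the tie case is where Def.~\ref{def: dominated} must do the work are all correct.

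The gap is in how the ``$\Rightarrow$'' direction is executed. The claims ``when the inequality is strict, $v$ is activated without $c_{new}$ appearing in the argument of $F_v$'' and ``when $t_{c_{new}}(v)=d_v$, the argument $O$ contains both $c_{new}$ and at least one existing cascade'' are assertions about the states of $v$'s activators at time $t^*-1$, and those states are exactly what the characterization is supposed to determine. As written the case analysis is circular: to know that the activator lying on a shortest live path from $\tau^*$ is \emph{not} $c_{new}$-active (so that $O$ genuinely contains an existing cascade), you need the characterization to already hold at that activator, which may itself be a tie node. The repair is to organize the whole argument as an induction on the activation time $t^*(\cdot)$, with the base case for nodes in $S\cap\tau^*$ handled by applying Def.~\ref{def: dominated} to the seeding rule of Def.~\ref{def: pri_function}(a). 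Separately, the tie-case description of $O$ is not literally true: the $c_{new}$ front can be absorbed upstream, in which case $c_{new}\notin\{c\}_{(u,c)\in O}$ at $v$; the conclusion $F_v(O)\neq c_{new}$ still holds, but it then follows from the constraint $F_v(O)\in\{\pi_{t-1}(u)\mid u\in A\}$ rather than from Def.~\ref{def: dominated}. With the induction made explicit and this sub-case added, your proof goes through.
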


\begin{remark}
These results indicate that the bounds derived in Sec. \ref{sec: framework} are not only technically useful but also have intuitive meanings: in terms of the influence of $c_{new}$, they correspond to the most optimistic scenario and most pessimistic scenario, respectively. This again echoes the observation in Sec. \ref{sec: framework} that the lower and upper bounds can be derived without knowing the activation function. When there is no existing cascade, we can see that the upper and lower bounds are identical to the objective function and the whole framework in Sec. \ref{sec: framework} reduces to the one given in \cite{tang2014influence} for IM, which can also be seen from the fact that $\overline{S}_v$ and $\underline{S}_v$ would always be the same if there is no existing cascade. From this perspective, the performance of the proposed algorithm is affected by the degree of competition, which will be examined via experiments. In another issue, Defs. \ref{def: dominating} and \ref{def: dominated} suggest that carefully manipulating the activation function may provide tighter upper and lower bounds. Finally, the objective function with activation functions in Def. \ref{def: dominating} and \ref{def: dominated} meet the upper or lower bound, and therefore they are submodular.
\end{remark}

\section{Experiments}
\label{sec: exp}

Since no solution provides a worst-case performance guarantee for the MIM problem, practical evaluations are important for judging the merits of the algorithms. In this section, we present the experimental results. 

\begin{table}[t]
\centering
\caption{Datasets}
\label{table: data}
\begin{tabular}{@{}lllll@{}}
\toprule
Name & Hepph & Higgs &Youtube& Livejournal\\
\midrule
\# of Nodes & 34546 & 100,000& 1,134,890& 3,997,962\\
\# of Edges &421,578 & 192,972& 2,987,624& 34,681,189\\
\bottomrule
\end{tabular}
\end{table}


\textbf{Dataset.} We select four representative datasets, \textit{Hepph, Higgs, Youtube} and \textit{Livejournal}, which have various network sizes and have been widely adopted for studying influence maximization. Hepph is a citation network of high energy physics on Arxiv.org. Youtube and Livejournal are graphs of popular online social networks borrowed from SNAP \cite{leskovec2015snap}. Higgs consists of the user actions regarding the discovery of a new particle with the features of the elusive Higgs boson collected from Twitter \cite{de2013anatomy}. A brief summary of the statistics is given in Table \ref{table: data}. 

\textbf{Cascade Setting.} We consider four different methods for setting the prorogation probability. The prorogation probability in Hepph is uniformly set as $0.1$. In Higgs, the prorogation probability on each edge $(u,v)$ is set to be proportional to the frequency of actions between $u$ and $v$ \cite{tong2018misinformation}.  For Youtube, the probability on each edge is sampled from the exponential distribution transferred to $[0,1]$ with mean $0.01$.  For Livejournal, we adopt the weighted-cascade setting where $p_{(u,v)}=1/|deg_v^-|$ and $deg_v^-$ is the in-degree of $v$. For each dataset, we deploy four existing cascades. The size of the seed set of each existing cascades is $1\%$ of the total number nodes unless otherwise specified. The seed set size of $c_{new}$ is enumerated from $20$ to $50$. 


\textbf{Activation Function.} We consider three types of activations. The first two activation functions are those defined in Defs. \ref{def: cascade_function} and \ref{def: neighbor_function}, where the priorities are randomly generated in advance and fixed in experiments. The last one is a non-deterministic activation function where each coming cascade has the same probability to activate the target node. We denote these settings as Cascade-based Activation (\textit{CA}), Neighbor-based Activation (\textit{NA}) and Random Activation (\textit{RA}), respectively. 

\textbf{Algorithms.} Besides the RS algorithm proposed in this paper, we consider the NR-Greedy discussed in Sec. \ref{subsec: nrg_algorithm} and the MaxInf algorithm which ignores the existing cascades and selects the nodes that can maximize the influence of $c_{new}$ as the seed nodes. 

For each set of the experiment, we record the resulted influence and running time of each algorithm as well as the data-dependent ratio discussed in Remark \ref{remark: ratio}. Our experiments were done on an Intel Xeon Platinum 8000 Series processor with sophisticated parallelizations for each algorithm. The source code will be made publicly available.

\subsection{Observations}
The figures and tables can be found in Appendix \ref{sec: results}.

\textbf{Overall Observations.} The main outcome is shown in Figs. \ref{fig: main_Hepph}, \ref{fig: main_Higgs}, \ref{fig: main_Youtube} and \ref{fig: main_Livejournal}, where the table shows the running time and the ratio $\frac{{\G}(\overline{S},\Psi_l)}{\overline{\G}(\overline{S},\Psi_l)}$ for $k=50$. It can be seen that RS outperforms other methods in most cases. In particular, RS achieves the highest influence and it terminates within several minutes even on large datasets. NR-Greedy is time-consuming and its performance is not comparable to that of RS. MaxInf is worse than RS by a moderate margin on Hepph and Livejournal but can be ineffective and even worse than NG-Greedy on Youtube and Higgs. For the ratio $\frac{{\G}(\overline{S},\Psi_l)}{\overline{\G}(\overline{S},\Psi_l)}$, one can see that it is consistently larger than 0.7 and extremely close to 1 on large datasets, which implies that the data-dependent ratio in practice is not far from $1-1/e$ which is the theoretically optimal. Such results implicitly indicate that the submodularity is not dramatically altered even though it does not hold in MIM. Overall, RS enjoys high practical performance, though none of these three methods theoretically dominates the others.\footnote{For example, when all the nodes have been already selected as seed nodes of existing cascades, NR-Greedy still follows the hill-climbing manner but RS is equivalently to randomly select $k$ nodes; when there is no existing cascade, the three methods are essentially identical.} 

\textbf{Hyper-parameter Study of RS.} As mentioned earlier, there are three parameters $N, K$ and $\epsilon$ in RS. Our default setting is $(N,K,\epsilon)=(10000,100,0.3)$, which works smoothly in our experiments. We examine the impact of these three parameters by changing one of them with the other two being fixed as the default values. The results on Youtube are given in Table \ref{table: hyper_youtube}. Overall, the resulted influence does not vary much, except for $\epsilon=0.9$ under which the influence is slightly less than others. There is a clear trade-off between $\epsilon$ and the running time, and we have the similar observation regarding $K$. The results here suggest that setting an extremely small $\epsilon$ or $K$ would make RS unnecessarily time-consuming. For parameter $N$, the results are stable in terms of both the effectiveness and efficiency. On Livejournal (Table \ref{table: hyper_livej}), we have the similar observations.

\textbf{NR-Greedy with More Samples.} Recall that the performance of NR-Greedy depends on the number of used RR-tuples. In order to further compare RS and NR-Greedy, we gradually increase the number of  RR-tuples used in NR-Greedy until its performance reaches that of RS, during which we record the running time. The results of this part on Hepph are given in Table \ref{table: exp2_hepph}. According to the table, NR-Greedy demands more than one hour to achieve an influence that can be produced by RS within one second, and under the RA setting, it has not caught up RS after running several hours.

\textbf{Increasing the Degree of Competition.} As observed in Figs. \ref{fig: hepph_cas}, MaxInf occasionally has the same performance as RS, and it can be effective especially when the competition between cascades is not fierce because in such cases MIM is close to the classic IM problem for which MaxInf is powerful. To demonstrate such intuitions, we enlarge the seed set of the existing cascades to increase the degree of competition and repeat the experiments. The size of seed set of existing cascades is increased to $5\%$  and $10\%$ of the total nodes, and the results on Hepph under CA are shown in Table \ref{table: competition} in which each cell show the ratio of the performance of MaxInf to that of RS. The results show an increased superiority of RS over MaxInf when stronger competitions arise. For example, under the setting $k=50$, the ratio reduces to 0.750 from 0.919 when size of seed set of existing cascades increases from $1\%$ to $10\%$.   The similar observation can be made on Livejournal, as shown in Fig. \ref{fig: rswithmaxinf_live}. We can see that the margin between the two curves becomes larger with the increase in seed set size of the existing cascades.

\textbf{Overlap between Seed Sets.} One interesting problem is to see if the seed nodes selected by different methods are totally different or almost the same. To this ends, we compute the percent of the seed nodes shared by different methods, and the results on Hepph are shown in Fig. \ref{fig: overlap}. We observe that Sandwich and MaxInf have about 75\% overlapping seed nodes, but NR-Greedy shares very few with other methods. We consistently have such an observation for all the settings, CA, NA and RA. One plausible reason is that Sandwich and MaxInf all utilize the IMM framework, while NR-Greedy is more like a naive greedy algorithm.

\textbf{Extreme Settings.} To further examine the scalability of the algorithms, we test Youtube and Livejournal under the setting $k=1,000$, and the results of this part can be found in Table \ref{table: extreme}. On Youtube, we observe that RS still runs smoothly, while MaxInf cannot handle large $k$ and it is more time-consuming compared to RS. The main drawback of MaxInf is that it is very memory-consuming and it works only for $k$ less than $150$. Because MaxInf ignores the existing cascades, the cascade $c_{new}$ can spread more widely in sampling and therefore the average size of each RR-tuple used in MaxInf is lager than that in RS. Thus, MaxInf would consume more memory. On Live, even though it is larger than Youtube, both MaxInf and RS could run using a moderate amount of memory, which suggests that the memory consumption is more sensitive to the probability setting than the graph size. They have the similar running time but RS could result in a higher influence. For NR-Greedy, its main problem is not memory issue but the running time taken in greedy node selection. It takes more than $20$ hours on Youtube when $k=1000$, while RS only takes $30$ minutes. We have further tested graphs with one billion nodes and four billion edges under the constant setting, and we found that none of the considered algorithms can run with $371$G memory taking less than 24 hours.

\section{Conclusion}
\label{sec: con}
In this paper, we study the multi-cascade influence maximization problem. We propose the independent multi-cascade model based on the activation function which characterizes the mutual effect between different cascades. We prove that the multi-cascade influence maximization problem is hard to approximate and design the RS algorithmic framework for node selection. The designed algorithm leverages upper and lower bounds which reveal the key structure of multi-cascade diffusion.  Demonstrated by experiments, RS outperforms baseline methods, and it is sufficiently scalable to handle large datasets.

\textbf{Future Work.} One assumption made in this paper is that the network work structure as well as the seed placement of existing campaigns is known in advance, which is however not realistic. Such an assumption could be relaxed by (a) assuming that we are given a snapshot of the diffusion process instead of the network structure and (b) inferring the edge probability and network structure using historical data. In addition, this paper assumes the activation function can be learned in advance while such a learning process is not trivial. Therefore, designing appropriate approaches for learning activation function is one promising direction of future work. Furthermore, different from the classic IM problem, the analysis of MIM for cascade models cannot be trivially extended to the threshold models, which is another direction of future work. 

\appendices
\section{Experimental Results}
\label{sec: results}

\begin{figure}[h!]
\centering
\subfloat[{[Hepph, CA]}]{\label{fig: hepph_cas}\includegraphics[width=0.23\textwidth]{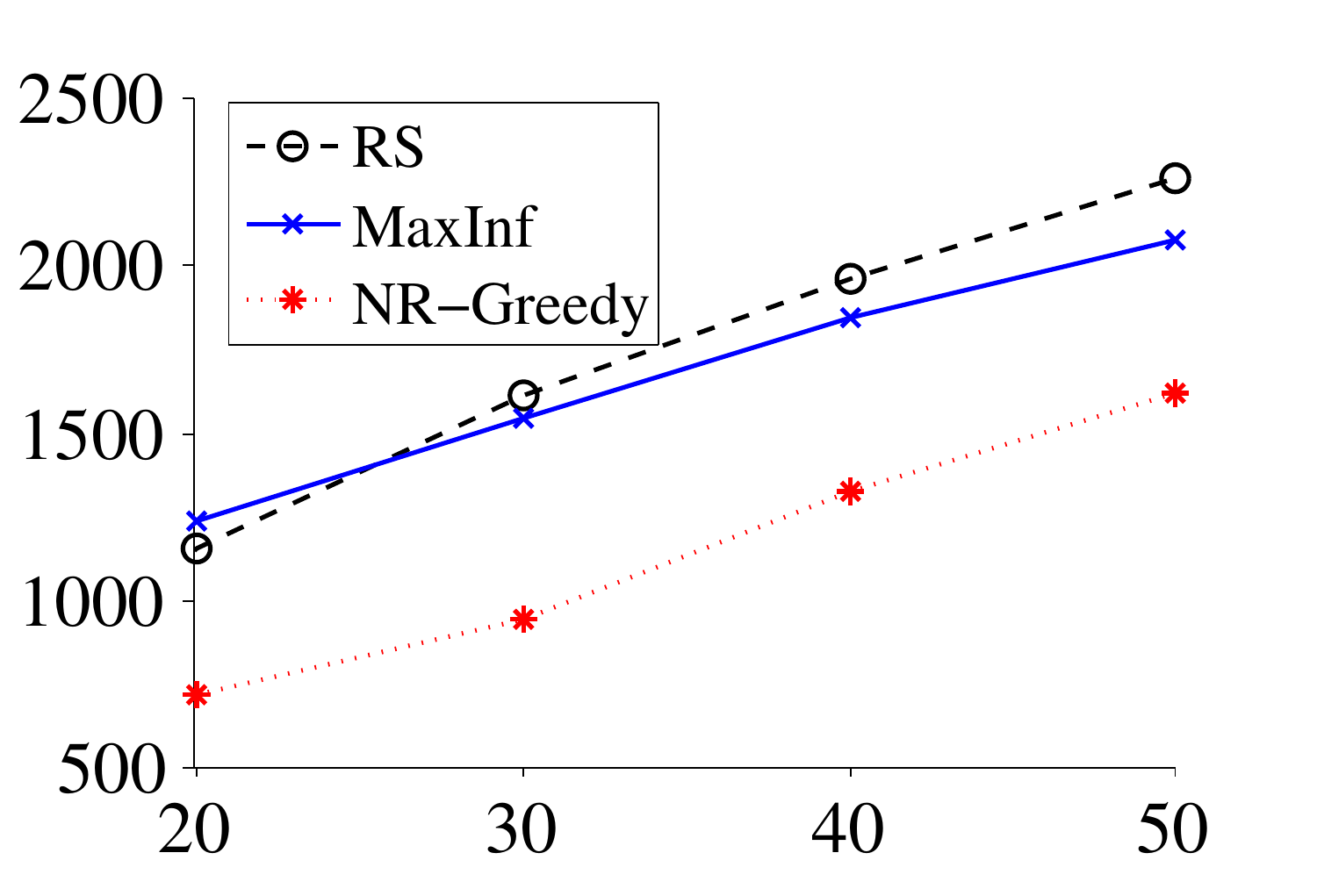}}
\subfloat[{[Hepph, NA]}]{\label{fig: hepph_nei}\includegraphics[width=0.23\textwidth]{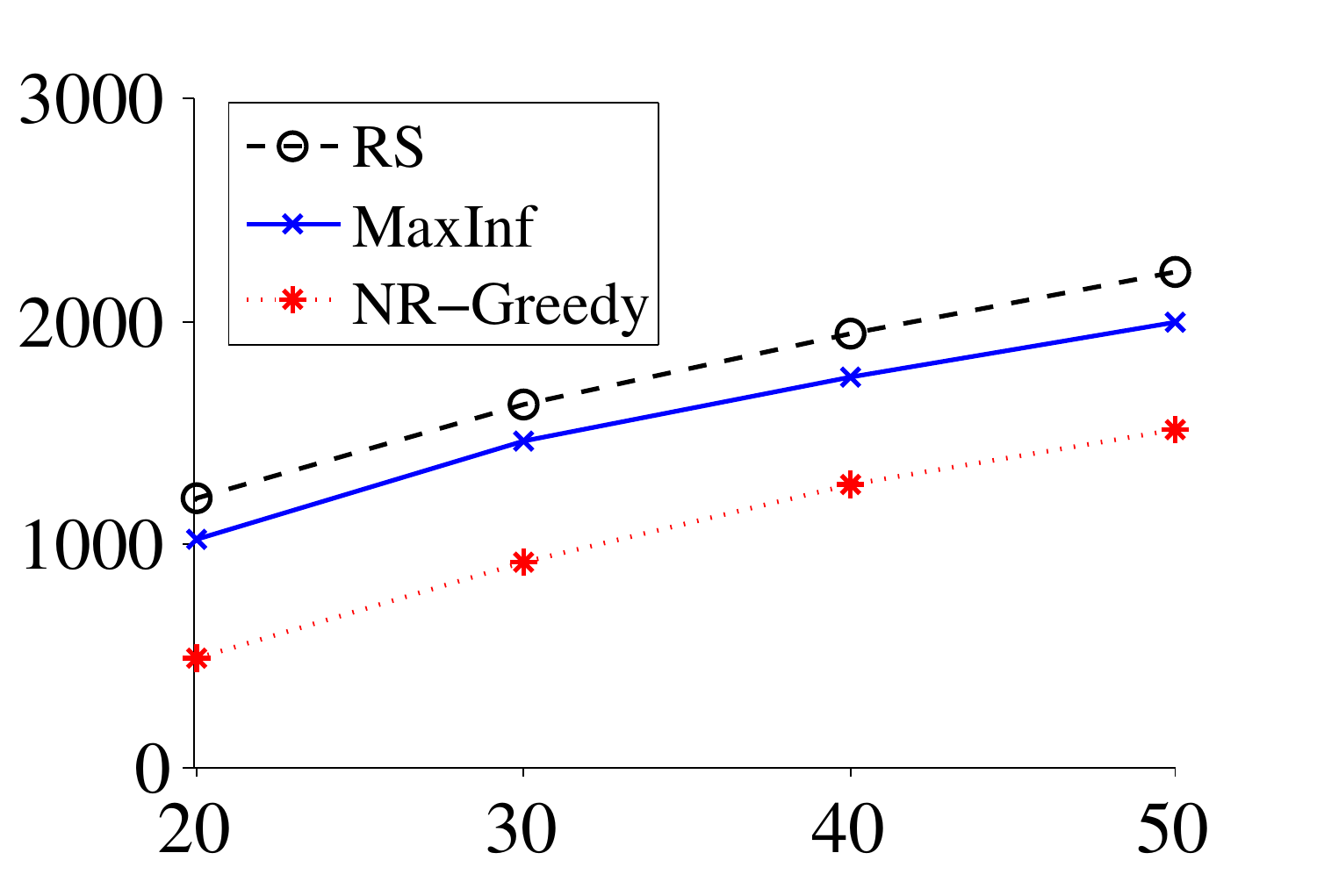}}

\subfloat[{[Hepph, RA]}]{\label{fig: hepph_ran}\includegraphics[width=0.23\textwidth]{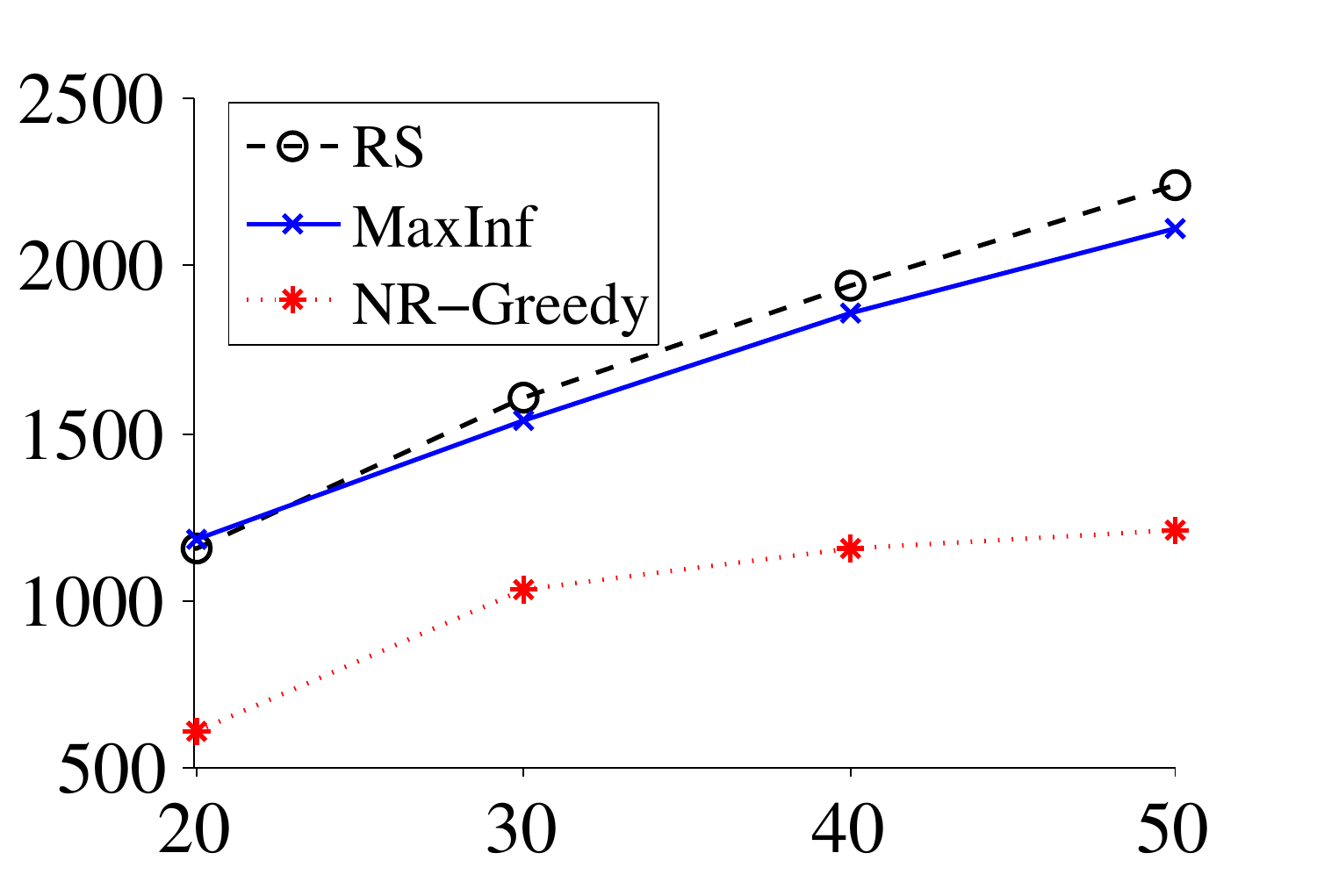}}
\raisebox{0.25\height}{\subfloat{\label{fig: hepph_ratio}\includegraphics[width=0.23\textwidth]{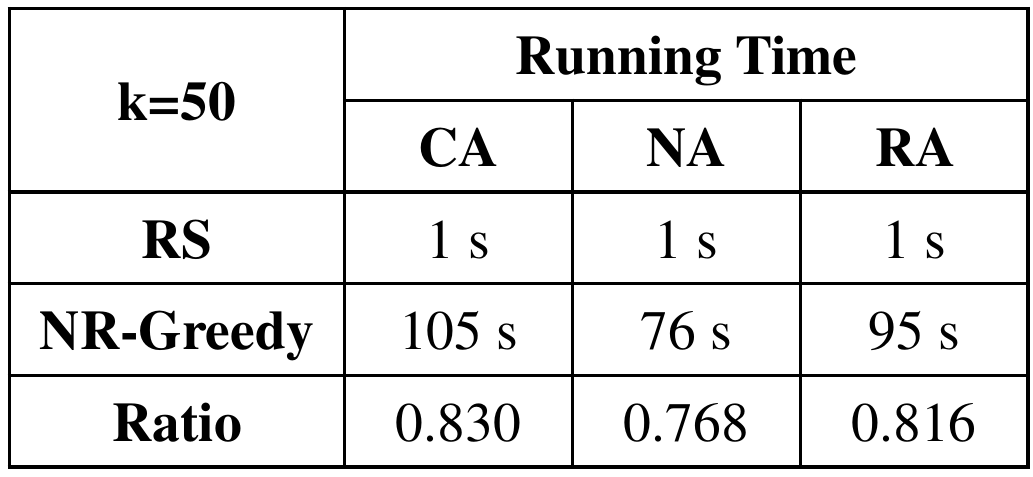}}}

\caption{Experimental Results on Hepph.}
\label{fig: main_Hepph}
\end{figure}

\begin{figure}[h!]
\centering
\subfloat[{[Higgs, CA]}]{\label{fig: higgs_cas}\includegraphics[width=0.23\textwidth]{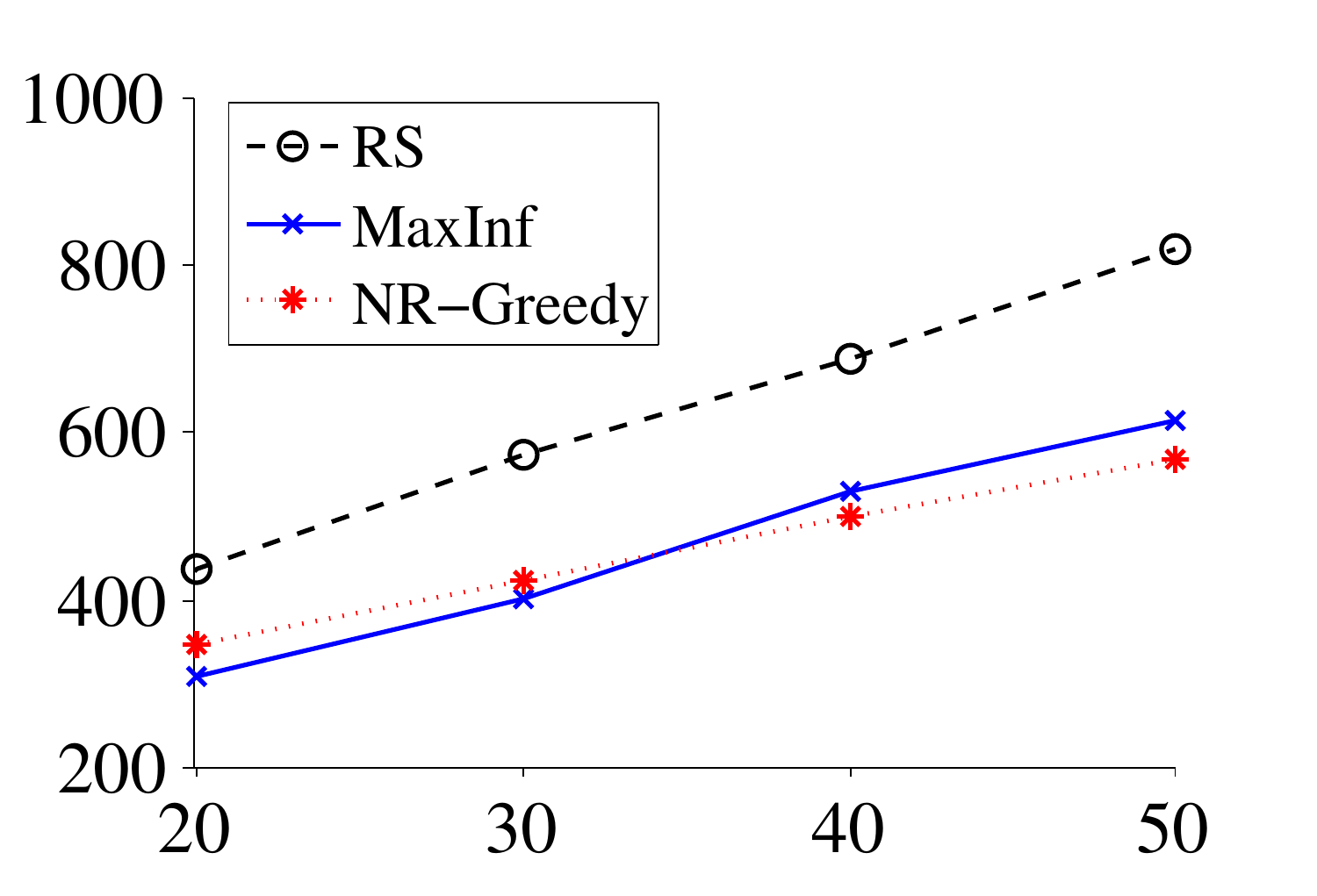}}
\subfloat[{[Higgs, NA]}]{\label{fig: higgs_nei}\includegraphics[width=0.23\textwidth]{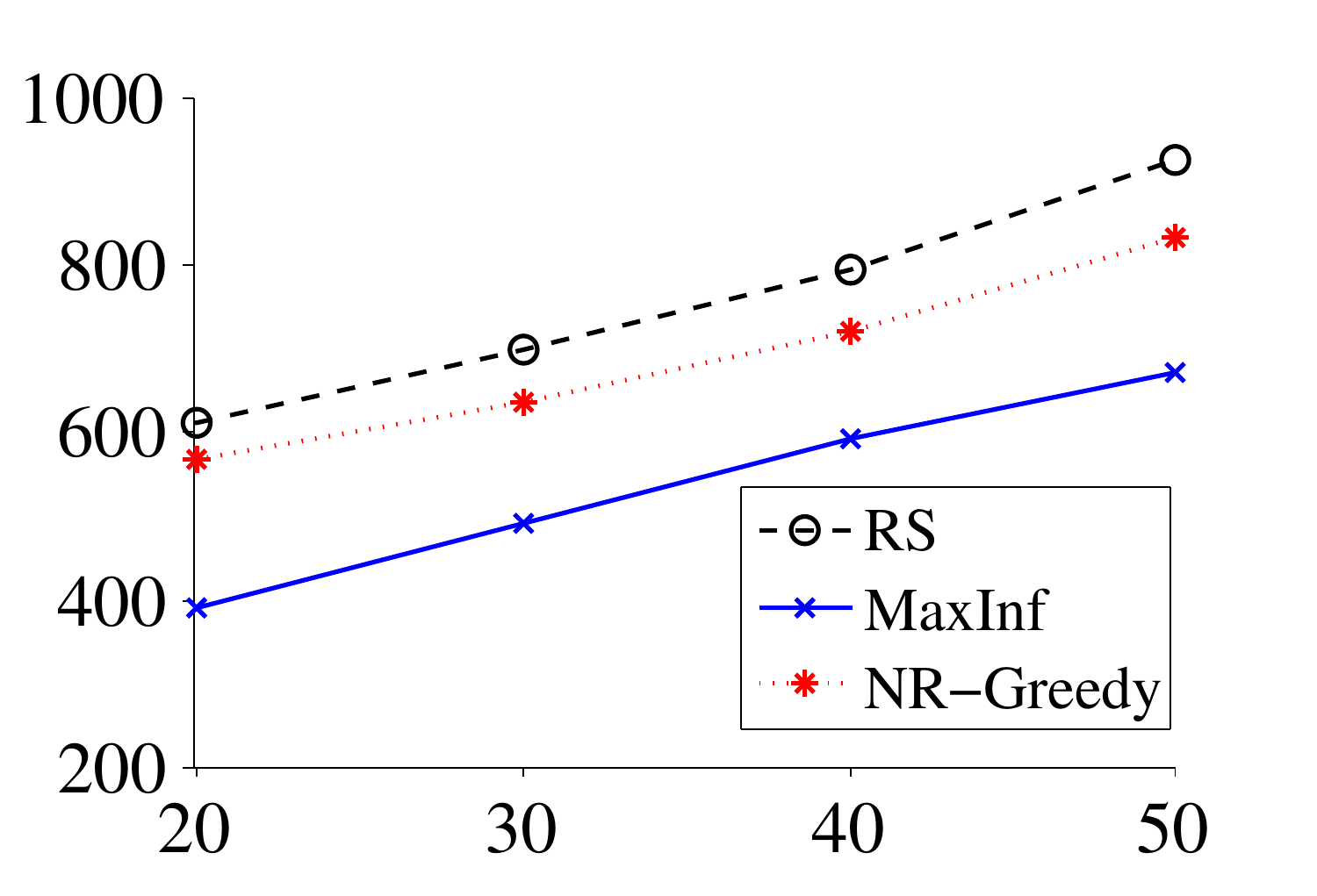}}

\subfloat[{[Higgs, RA]}]{\label{fig: higgs_ran}\includegraphics[width=0.23\textwidth]{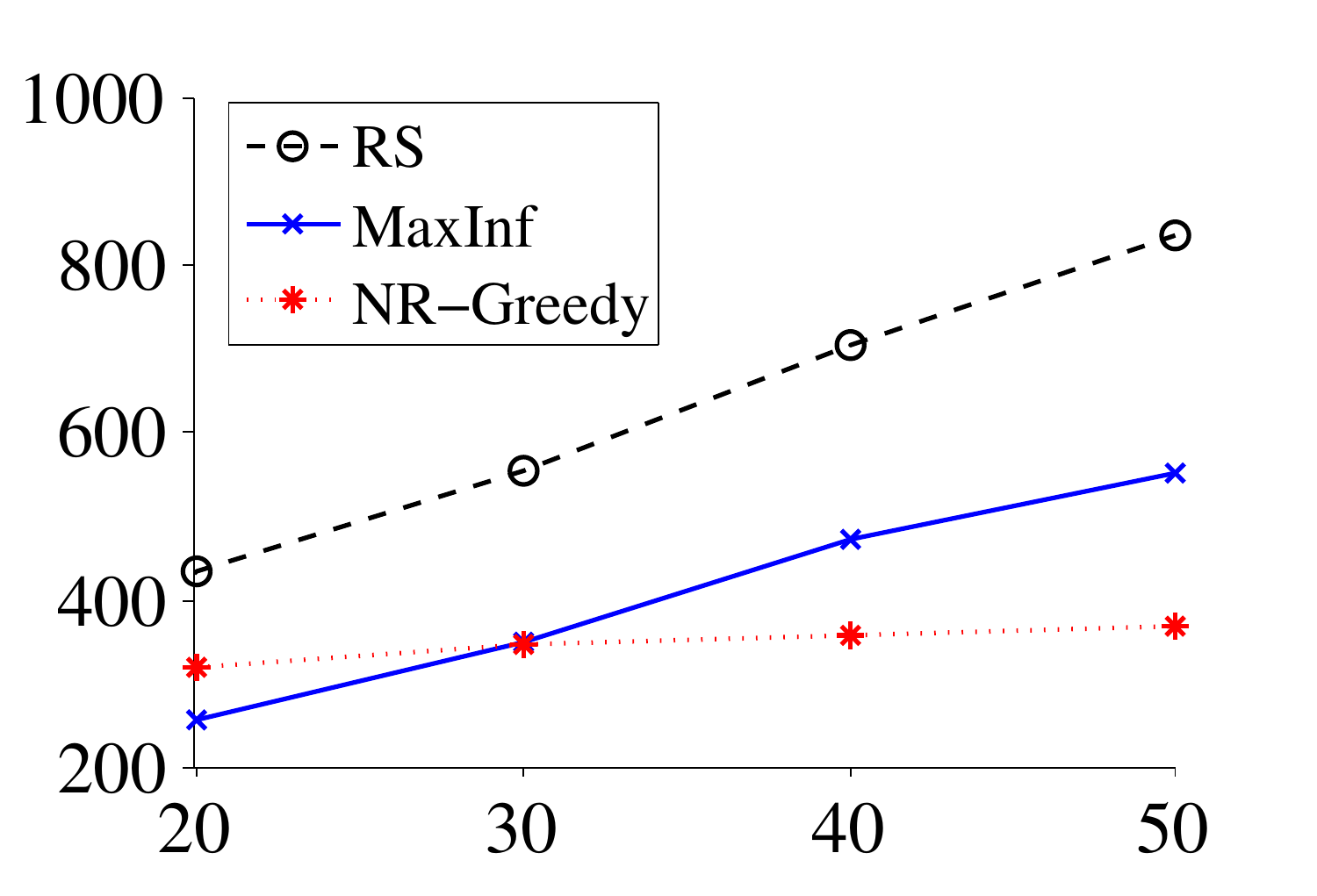}}
\raisebox{0.25\height}{\subfloat{\label{fig: higgs_ratio}\includegraphics[width=0.23\textwidth]{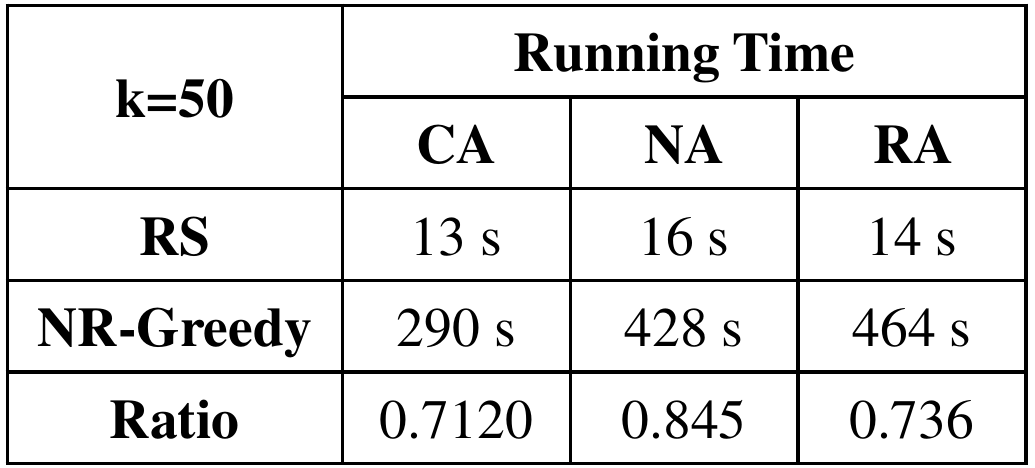}}}

\caption{Main Experimental Results on Higgs.}
\label{fig: main_Higgs}
\end{figure}

\begin{figure}[h!]
\centering
\subfloat[{[Youtube, CA]}]{\label{fig: youtube_cas}\includegraphics[width=0.23\textwidth]{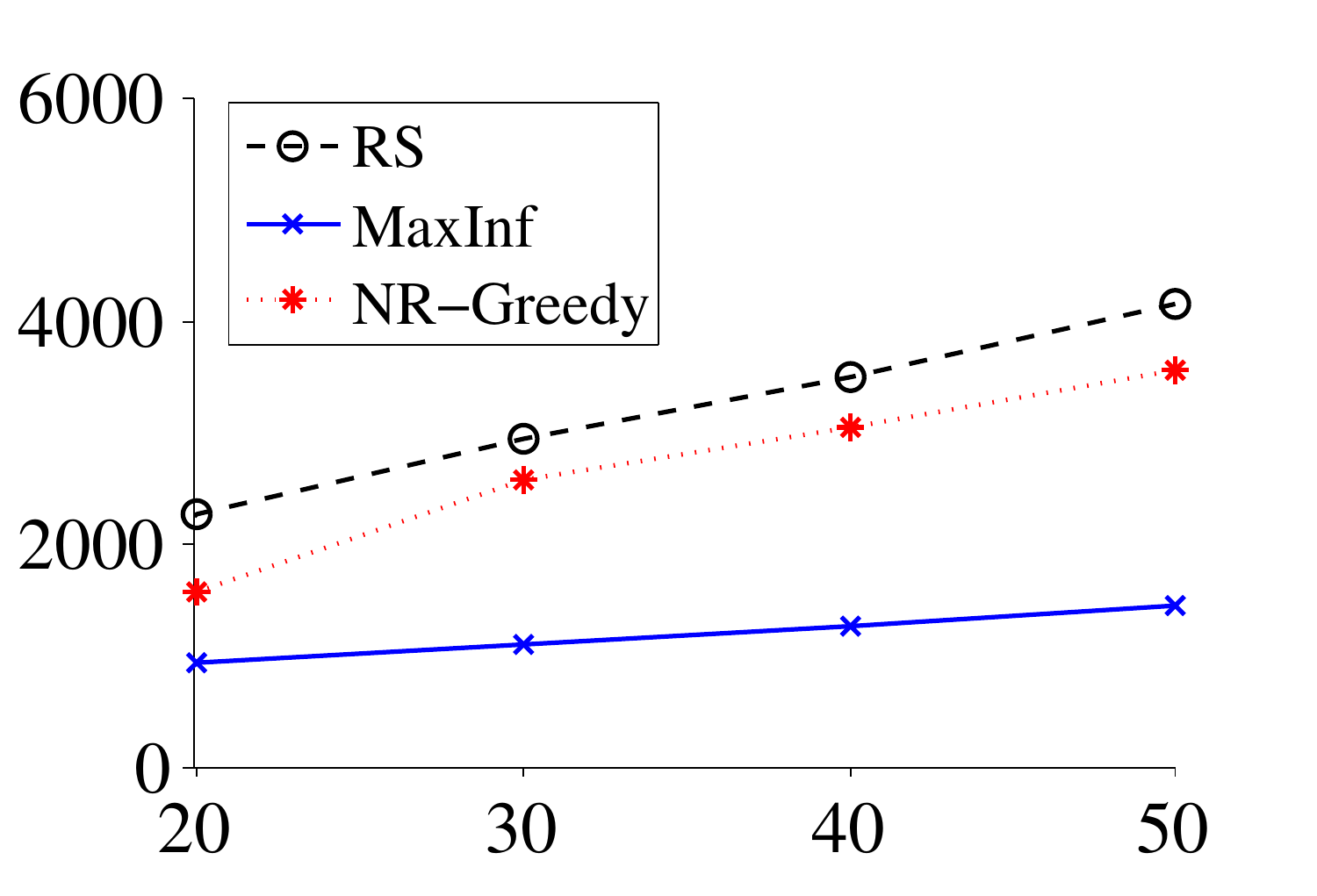}}
\subfloat[{[Youtube, NA]}]{\label{fig: youtube_nei}\includegraphics[width=0.23\textwidth]{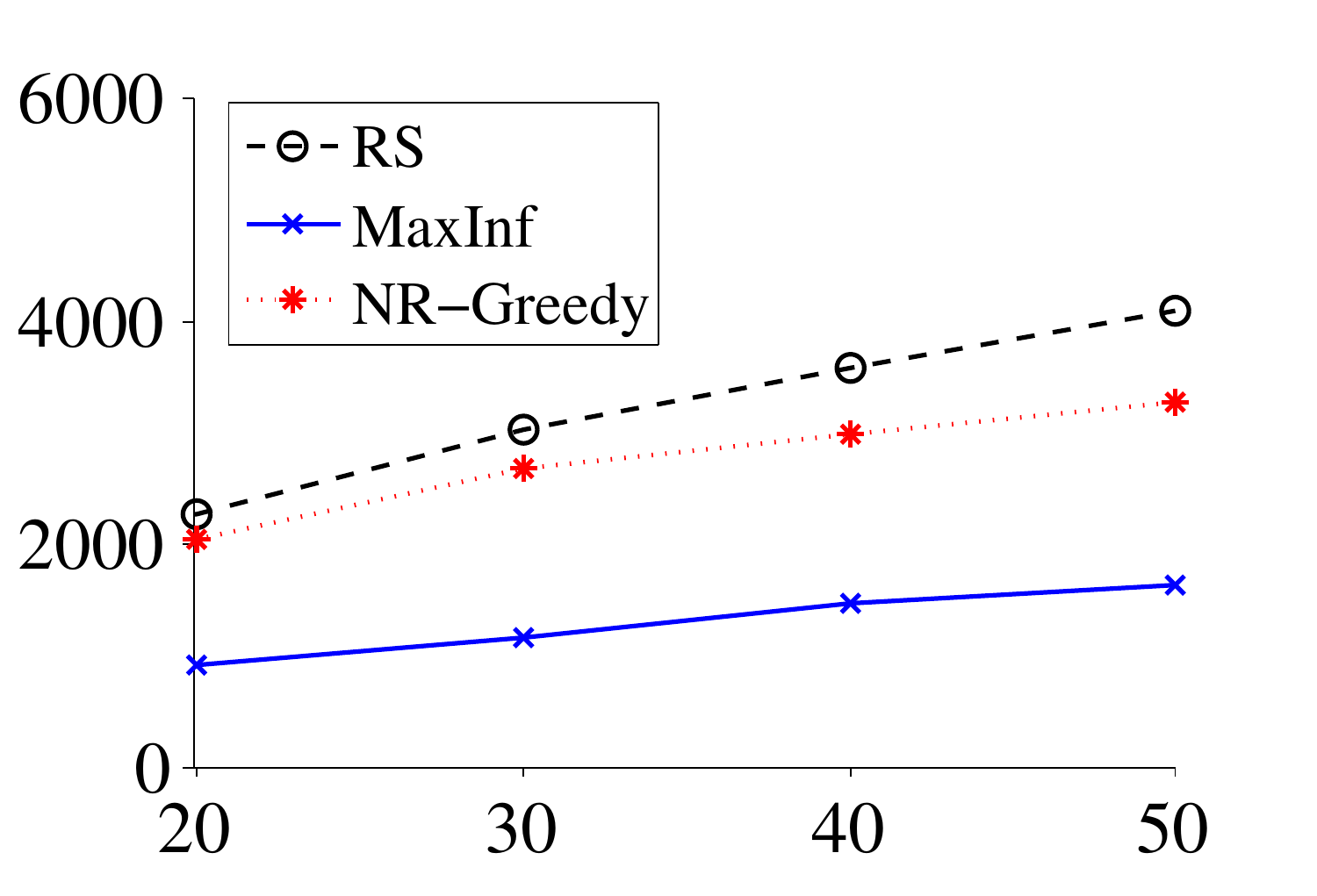}}

\subfloat[{[Youtube, RA]}]{\label{fig: youtube_ran}\includegraphics[width=0.23\textwidth]{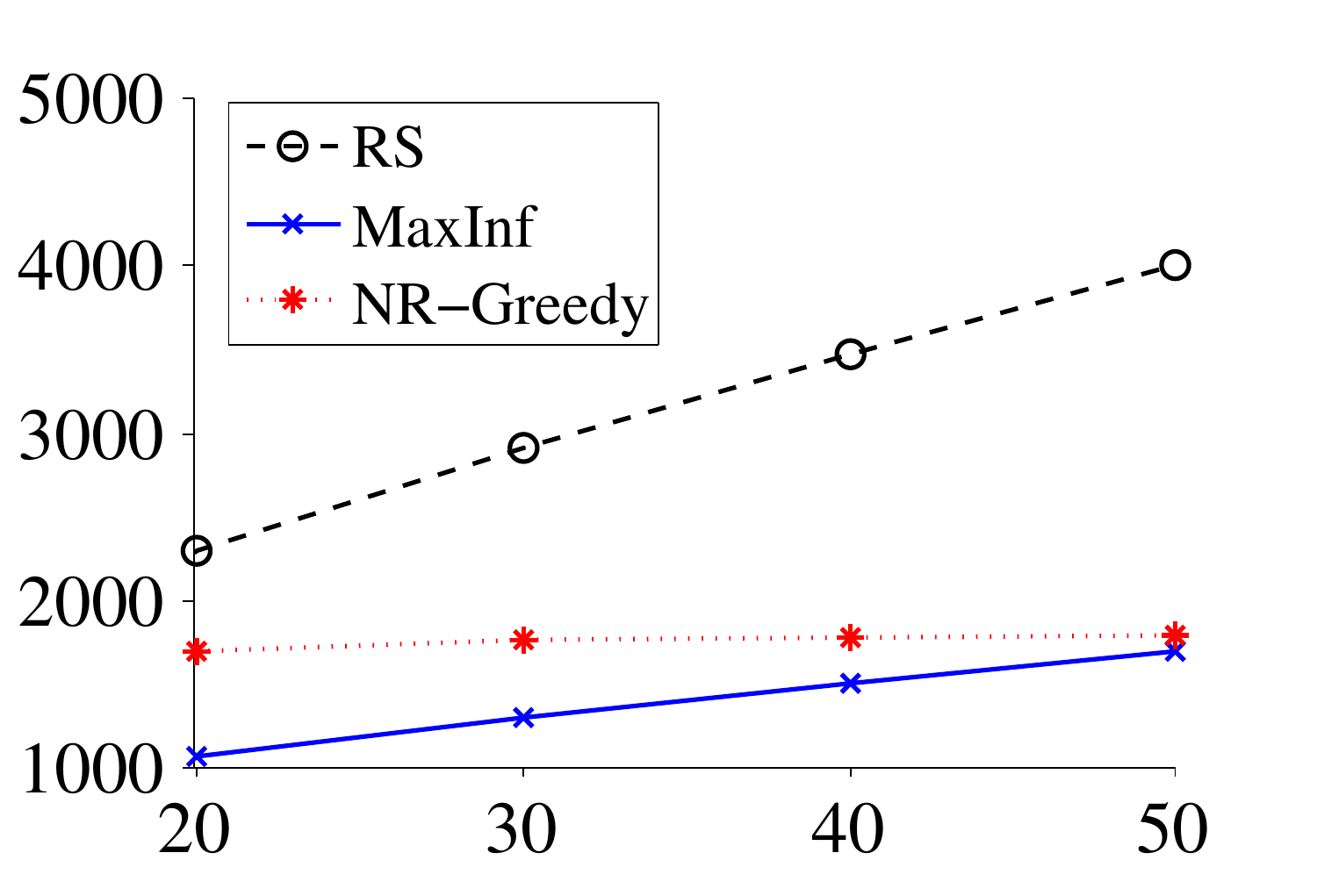}}
\raisebox{0.25\height}{\subfloat{\label{fig: youtube_ratio}\includegraphics[width=0.23\textwidth]{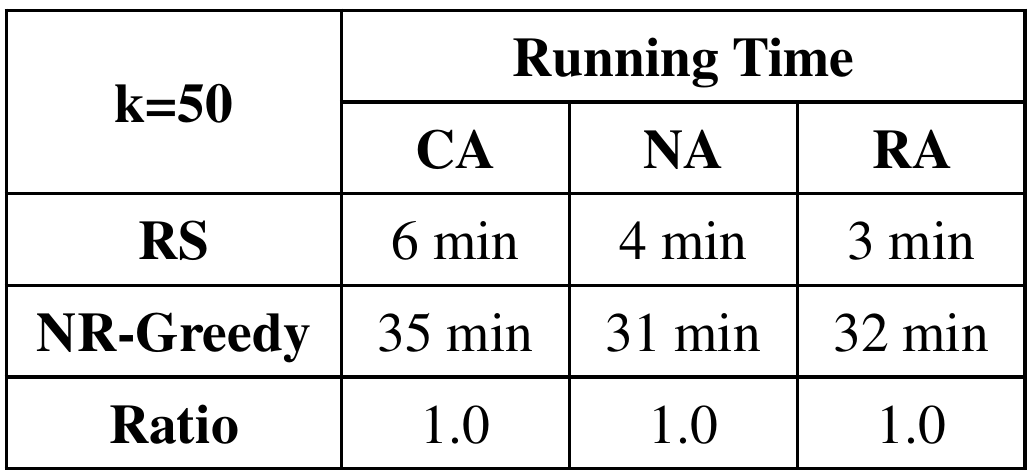}}}
\caption{Main Experimental Results on Youtube.}
\label{fig: main_Youtube}
\end{figure}

\begin{figure}[h!]
\centering
\subfloat[{[Livejournal, CA]}]{\label{fig: live_cas}\includegraphics[width=0.23\textwidth]{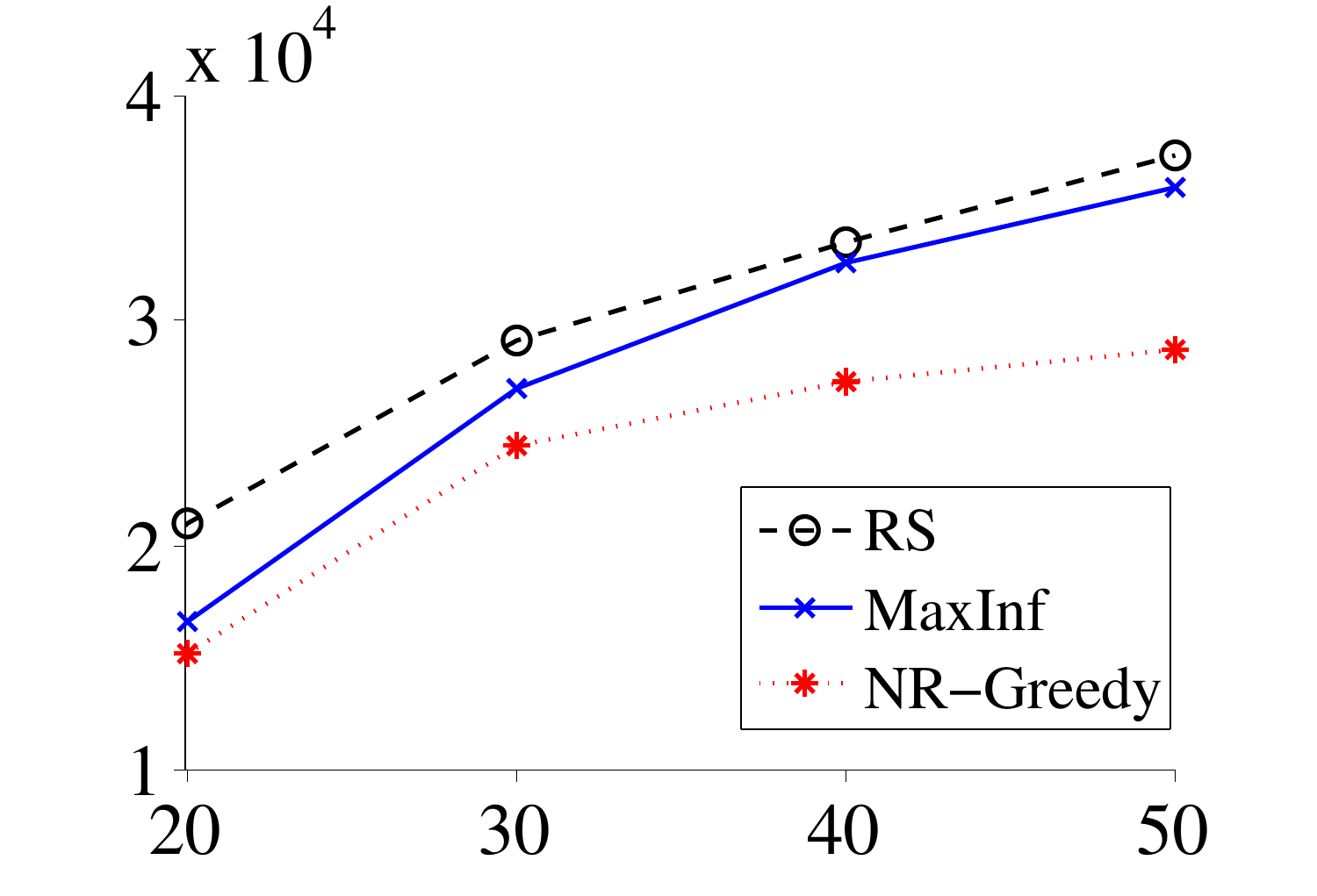}}
\subfloat[{[Livejournal, NA]}]{\label{fig: live_nei}\includegraphics[width=0.23\textwidth]{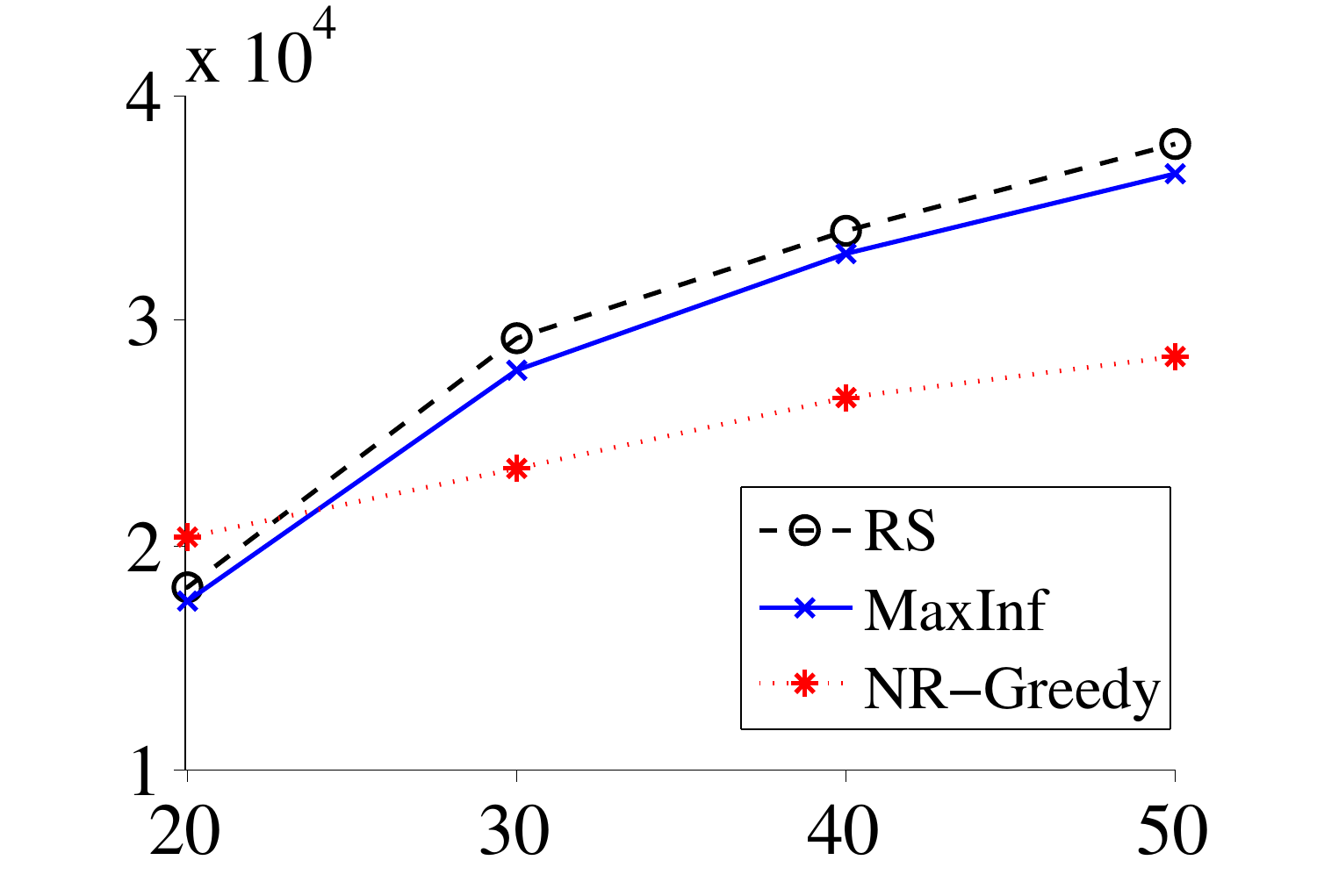}}

\subfloat[{[Livejournal, RA]}]{\label{fig: live_ran}\includegraphics[width=0.23\textwidth]{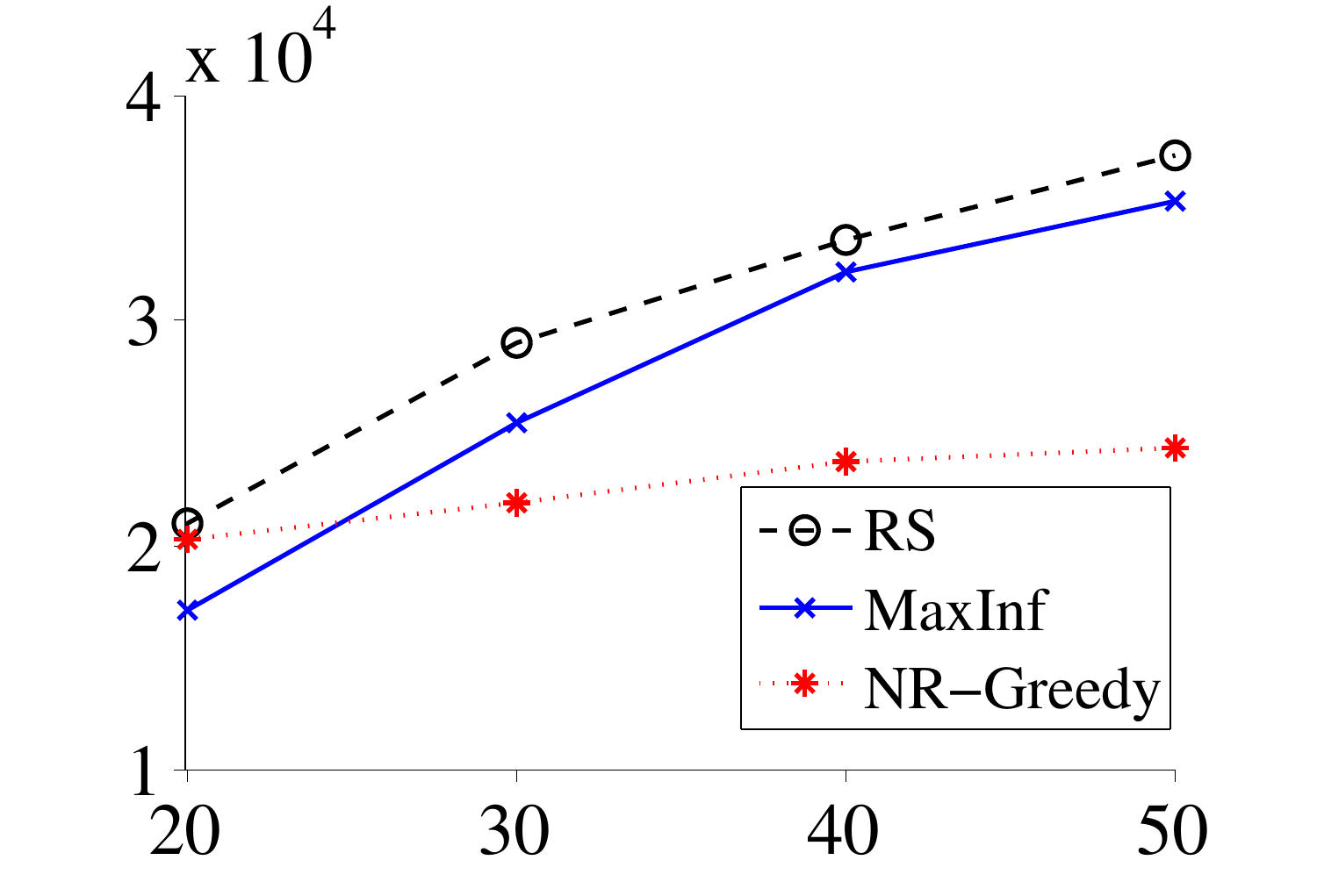}}
\raisebox{0.25\height}{\subfloat{\label{fig: live_ratio}\includegraphics[width=0.23\textwidth]{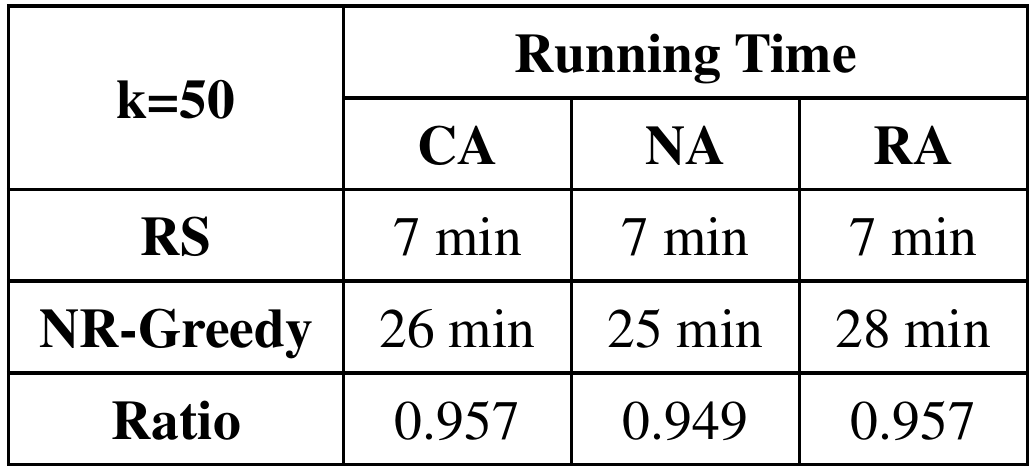}}}

\caption{Main Experimental Results on Livejournal.}
\label{fig: main_Livejournal}
\end{figure}

\begin{table}[h!]
\caption{{Hyper-parameter Study on [Youtube, RA].}}
\centering
\label{table: hyper_youtube}
\begin{tabular}{>{\centering\arraybackslash}p{1.5cm}|>{\centering\arraybackslash}p{1cm}|>{\centering\arraybackslash}p{1cm}|>{\centering\arraybackslash}p{1cm}|>{\centering\arraybackslash}p{1cm}}
\toprule
$N$  &10 &   1000 & 10000  &100000\\
\midrule
Inf   &  4183   & 4142 & 4133 & 4141\\
Time &   5min & 5min & 5.8min & 4.5min\\
\end{tabular}

\vspace{3mm}
\begin{tabular}{>{\centering\arraybackslash}p{1.5cm}|>{\centering\arraybackslash}p{1cm}|>{\centering\arraybackslash}p{1cm}|>{\centering\arraybackslash}p{1cm}|>{\centering\arraybackslash}p{1cm}}
\toprule
$\epsilon$ &   0.1 & 0.2&0.4  &0.9 \\
\midrule
Inf  &  4148   &4131& 4133& 4096\\
Time &  46min &17min    & 2min &43s \\
\end{tabular}

\vspace{3mm}
\begin{tabular}{>{\centering\arraybackslash}p{1.5cm}|>{\centering\arraybackslash}p{1cm}|>{\centering\arraybackslash}p{1cm}|>{\centering\arraybackslash}p{1cm}|>{\centering\arraybackslash}p{1cm}}
\toprule
$K$  &1 &10 &   100 & 1000  \\
\midrule
Inf  &4139 & 4151  &4143  &  4141 \\
Time &13min &   7min    &  6min  & 6min \\
\end{tabular}
\vspace{10mm}
\end{table}

\begin{table}[h!]
\caption{{Hyper-parameter Study on [Livejournal, RA].}}
\centering
\label{table: hyper_livej}
\begin{tabular}{>{\centering\arraybackslash}p{1.5cm}|>{\centering\arraybackslash}p{1cm}|>{\centering\arraybackslash}p{1cm}|>{\centering\arraybackslash}p{1cm}|>{\centering\arraybackslash}p{1cm}}
\toprule
$N$  &10 &   1000 & 10000  &100000\\
\midrule
Inf   &  37311   & 37326 & 37342& 37364\\

Time &   9min & 8min & 8.3min & 9.2min\\
\end{tabular}

\vspace{3mm}
\begin{tabular}{>{\centering\arraybackslash}p{1.5cm}|>{\centering\arraybackslash}p{1cm}|>{\centering\arraybackslash}p{1cm}|>{\centering\arraybackslash}p{1cm}|>{\centering\arraybackslash}p{1cm}}
\toprule
$\epsilon$ &   0.1 & 0.2&0.4  &0.9 \\
\midrule
Inf  &  37338   &37345& 37318& 36016\\

Time &  80min &19min    & 9min &1.8min \\
\end{tabular}

\vspace{3mm}
\begin{tabular}{>{\centering\arraybackslash}p{1.5cm}|>{\centering\arraybackslash}p{1cm}|>{\centering\arraybackslash}p{1cm}|>{\centering\arraybackslash}p{1cm}|>{\centering\arraybackslash}p{1cm}}
\toprule
$K$  &1 &10 &   100 & 1000  \\
\midrule
Inf  &37131 & 37261  &37297  &  37179 \\

Time &32min &   9min    &  9min  & 9min \\
\end{tabular}
\end{table}

\begin{table}[h!]
\centering
\caption{Comparing RS with NR-Greedy.}
\label{table: exp2_hepph}
\begin{tabular}{c|l|l|l|l|l|l}
\toprule
Hepph & \multicolumn{2}{c}{CA} \vline &  \multicolumn{2}{c}{NA} \vline&  \multicolumn{2}{c}{RA} \\
$k=50$ & \makecell{Time } & Inf &  \makecell{Time } & Inf  &\makecell{Time } & Inf \\
\midrule
\multirow{5}{*}{\makecell{NR- \\ Greedy}} & 1 min& 1621 & 1 min& 1518 &1 min& 1209\\
& 9 & 2074 & 9 & 1980 &9& 1582\\
& 19 &2168 & 18 & 2042 & 18 & 1766\\
& 51 &  2252 & 60 & 2155 & 54 & 1922\\
 & 83 & 2267 & 94 & 2210& 180 & 2185\\
\midrule
RS & $1 $ sec & 2261 & $1 $ sec & 2219 &$1 $ sec & 2242
\end{tabular}
\vspace{12mm}
\end{table}

\begin{table*}[h!]
\caption{{Comparing RS with MaxInf.}}
\centering
\label{table: competition}
\begin{tabular}{c|p{1cm}|p{1cm}|p{1cm}|p{1cm}}
\toprule
 \multirow{1}{*}{}&  k=20  & k=30 &  k=40 & k=50    \\
\midrule
1\%& 0.935& 0.957 & 0.941 &0.919  \\
5\%& 0.981 & 0.839& 0.823 &0.797  \\
10\%&0.996& 0.799 &0.777 &0.750
\end{tabular}
\end{table*}
\begin{figure*}[h!]
\centering
\subfloat[{[Livejournal, CA, $1\%$]}]{\label{fig: live1_2}\includegraphics[width=0.40\textwidth]{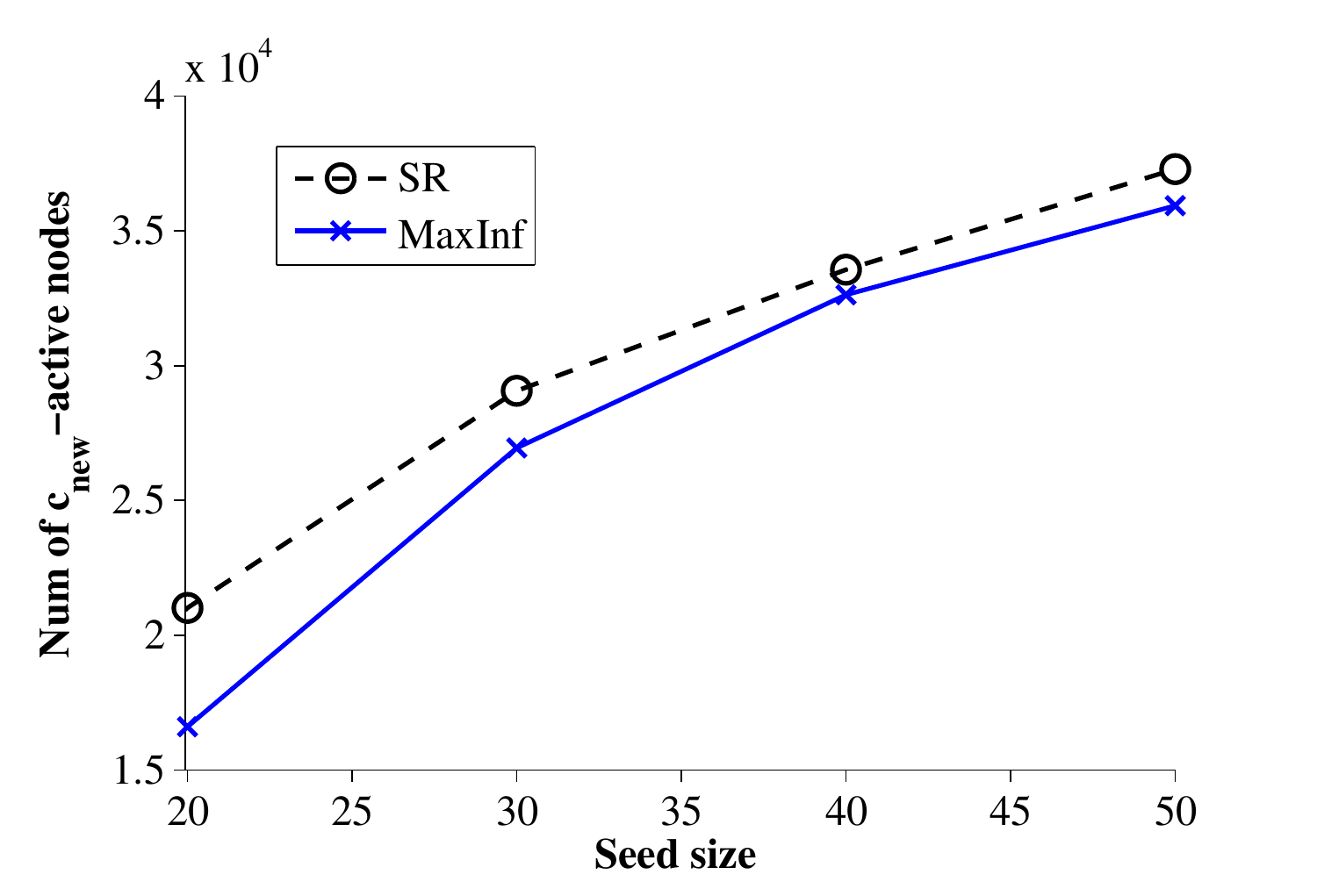}}
\subfloat[{[Livejournal, CA, $5\%$]}]{\label{fig: live5_2}\includegraphics[width=0.40\textwidth]{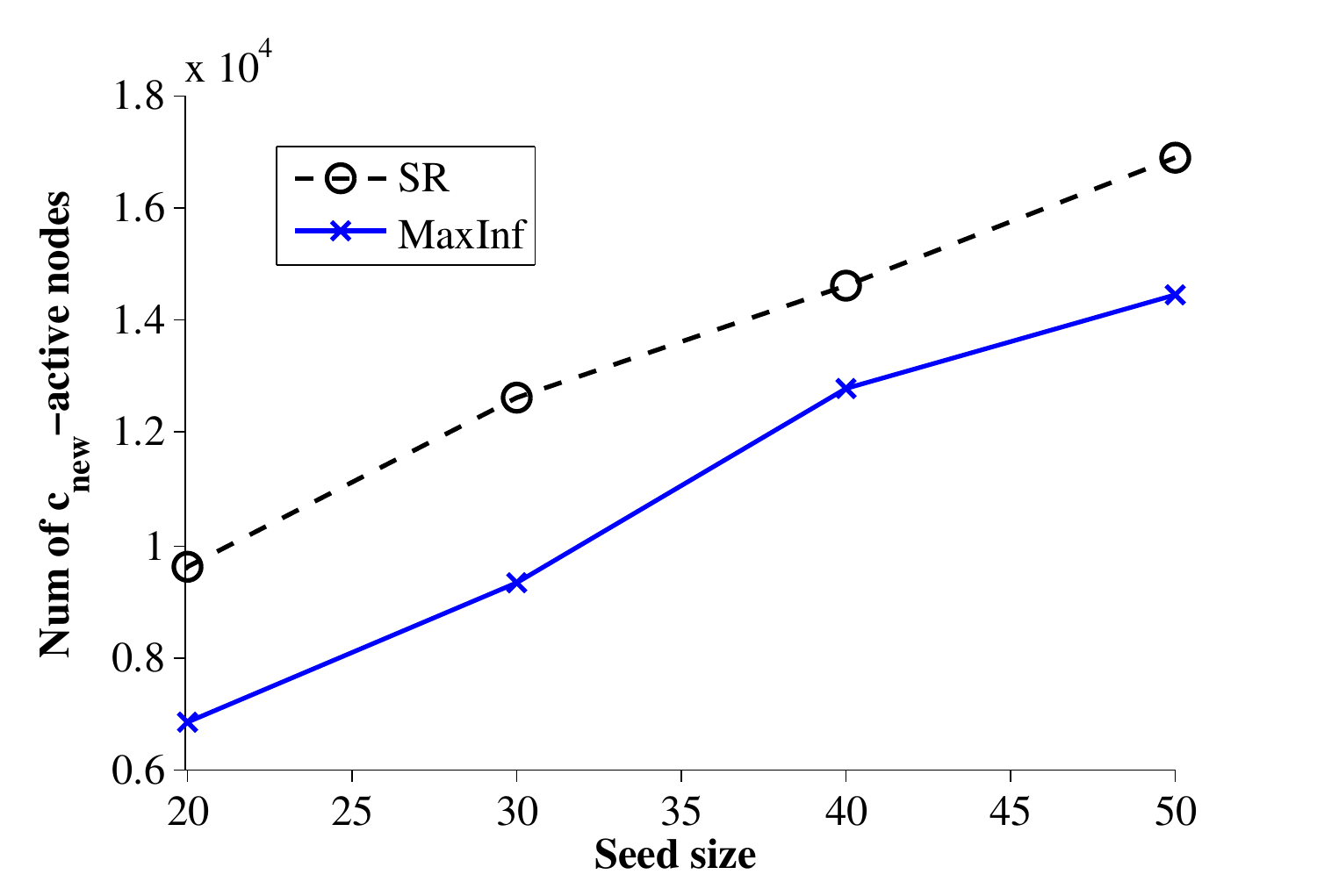}}
\caption{Comparing RS with MaxInf on Livejournal.}
\vspace{-5mm}
\label{fig: rswithmaxinf_live}
\end{figure*}
\begin{figure*}[h!]
\centering
\captionsetup[subfigure]{labelformat=empty}
\subfloat{\label{fig: overlap_hepph_ca_l}\includegraphics[width=0.30\textwidth]{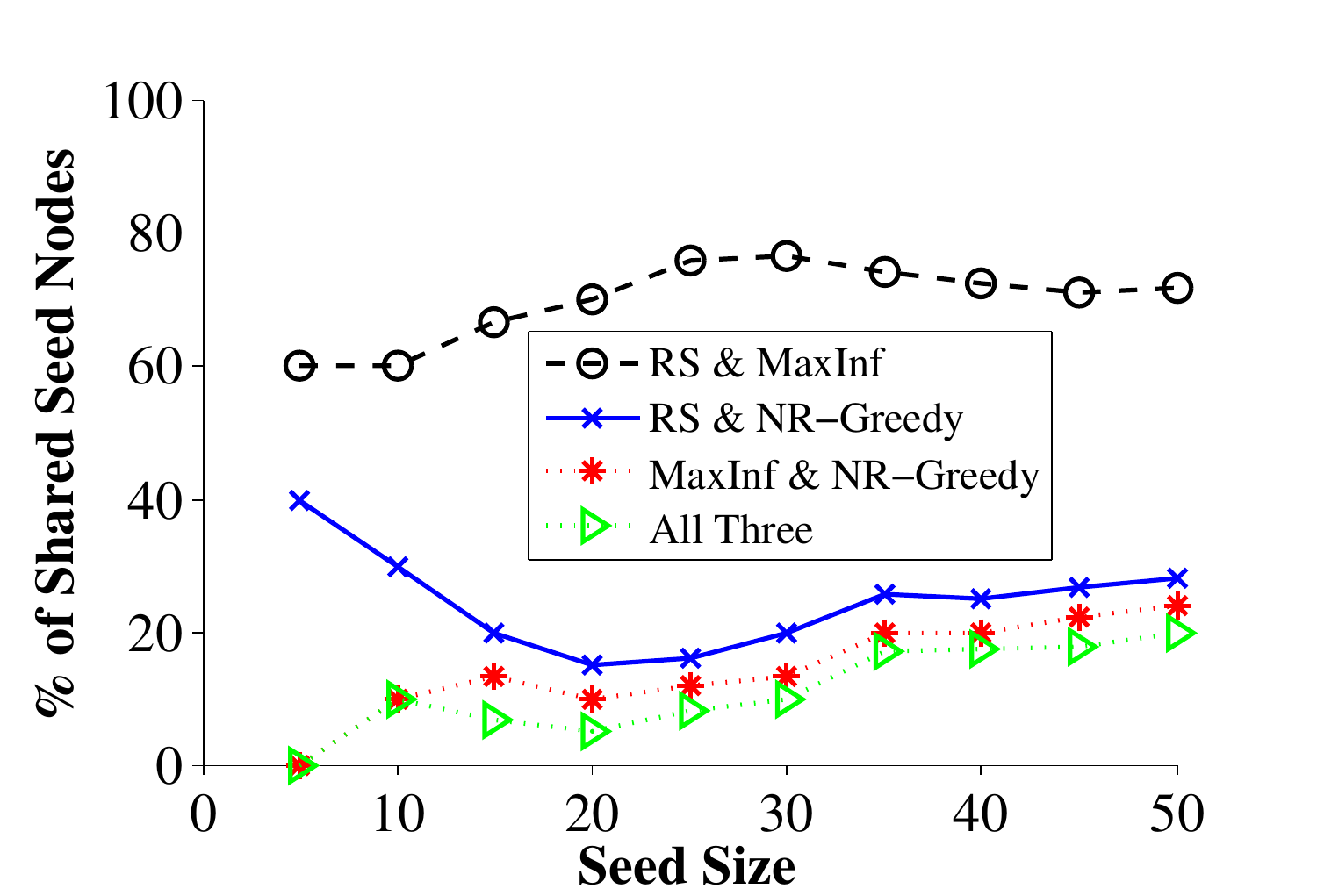}}
\subfloat{\label{fig: overlap_hepph_na_l}\includegraphics[width=0.30\textwidth]{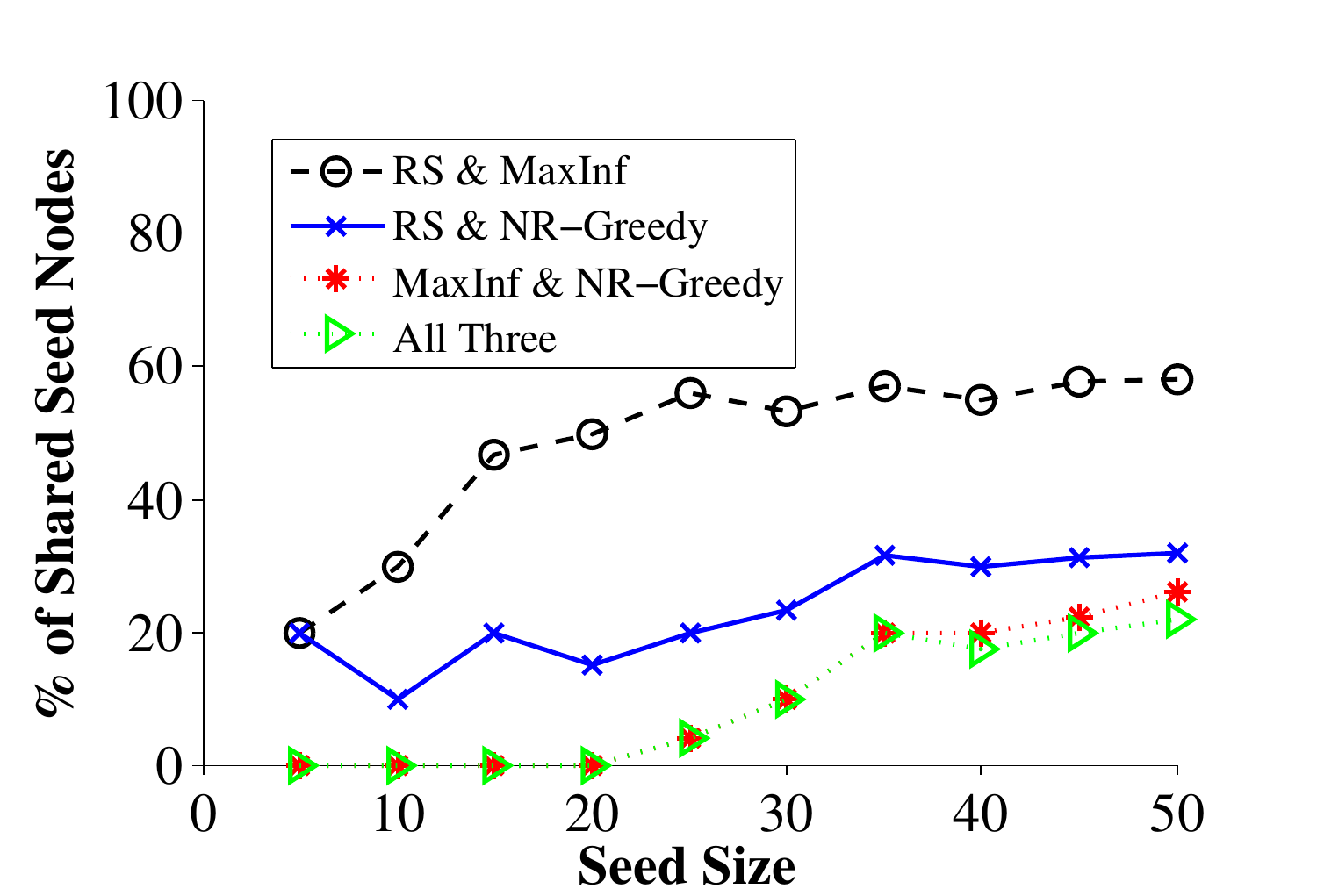}}
\subfloat{\label{fig: overlap_hepph_ra_l}\includegraphics[width=0.30\textwidth]{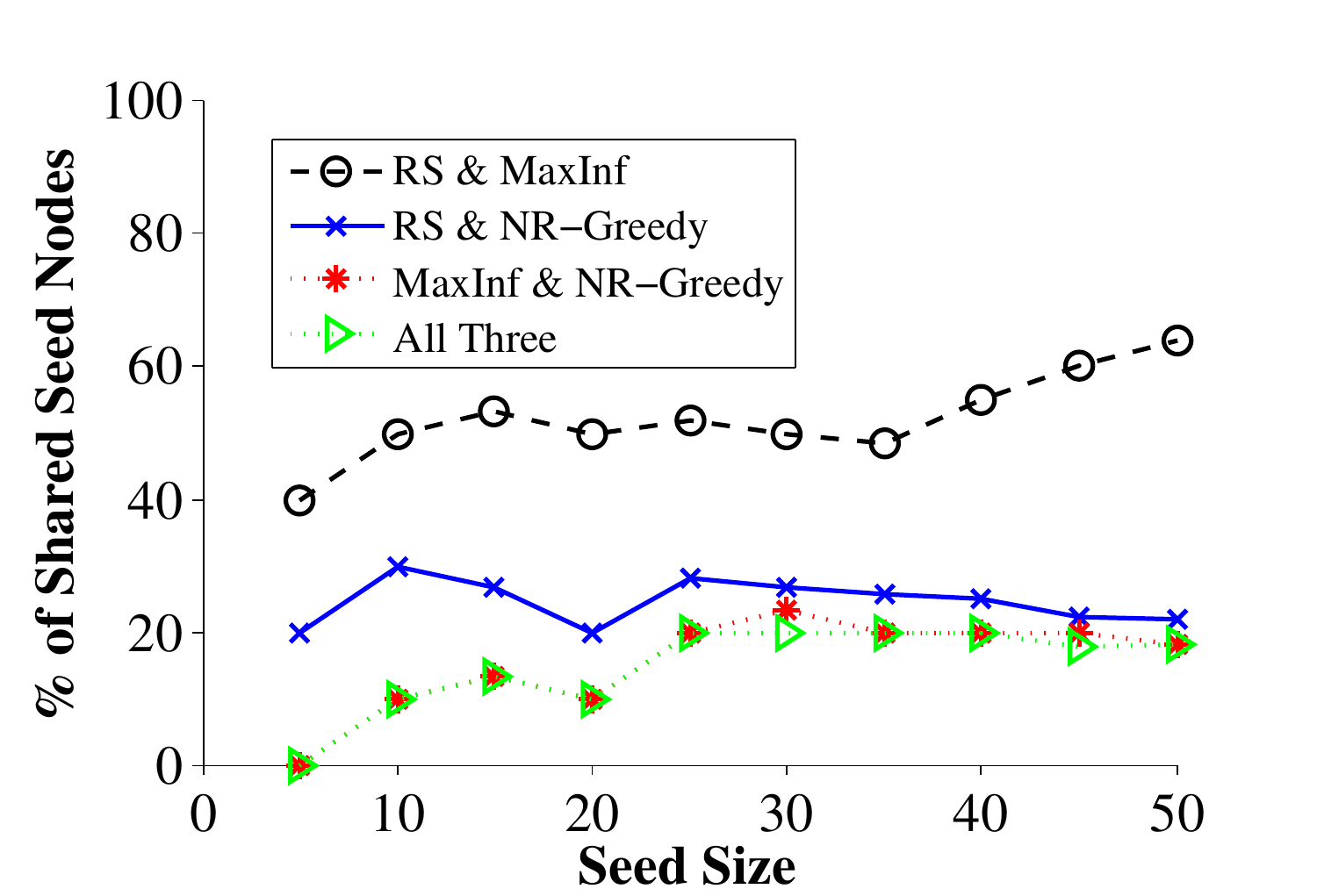}}

\raisebox{0.25\height}{\subfloat[ {[Hepph, CA, k=50]}]{\label{fig: overlap_hepph_ca_r}\includegraphics[width=0.15\textwidth]{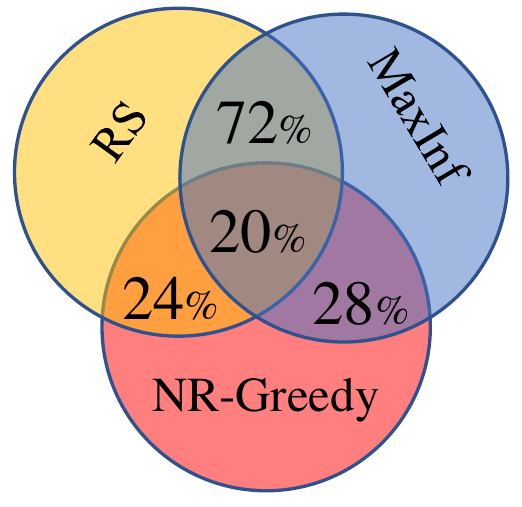}}\hspace{+15mm}}
\raisebox{0.25\height}{\subfloat[ {[Hepph, NA, k=50]}]{\label{fig: overlap_hepph_na_r}\includegraphics[width=0.15\textwidth]{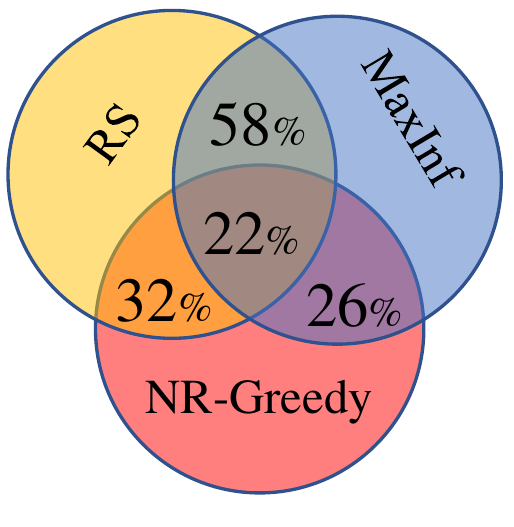}}\hspace{+15mm}}
\raisebox{0.25\height}{\subfloat[ {[Hepph, RA, k=50]}]{\label{fig: overlap_hepph_ra_r}\includegraphics[width=0.15\textwidth]{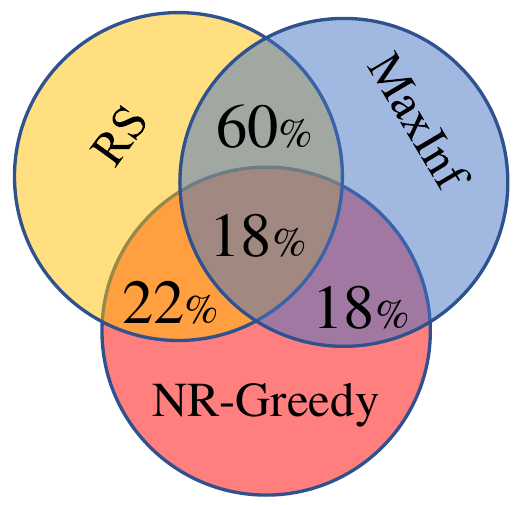}}}
\caption{Seed Nodes Overlap on Hepph.}
\label{fig: overlap}
\end{figure*}

\begin{table*}[h!]
\caption{Results under RA with Large Settings.}
\label{table: extreme}
\centering
\begin{tabular}{>{\centering\arraybackslash}p{1.5cm}|>{\centering\arraybackslash}p{1.2cm}|>{\centering\arraybackslash}p{0.7cm}|>{\centering\arraybackslash}p{0.7cm}|>{\centering\arraybackslash}p{0.7cm}|>{\centering\arraybackslash}p{0.7cm}}
\multicolumn{2}{c|}{\textbf{Youtube}} & 50 & 150 &  500   & 1000  \\
\midrule
\multirow{2}{*}{Inf ($10^3$)} & MaxInf  &1.6 & 3.2 & n/a& n/a \\
&RS& 4.0 & 6.5 & 7.8 & 11.7 \\
\midrule
\multirow{2}{*}{Mem (G)} & MaxInf  &41.3 & 111 & n/a & n/a\\
&RS & 4.5 & 6.1 & 14.6 & 22.0\\
\midrule
\multirow{2}{*}{Time (min)} & MaxInf  &31 & 56& n/a & n/a\\
&RS & 4 & 9 & 31 & 45
\end{tabular}\hspace{+5mm}
\begin{tabular}{>{\centering\arraybackslash}p{1.5cm}|>{\centering\arraybackslash}p{1.2cm}|>{\centering\arraybackslash}p{0.7cm}|>{\centering\arraybackslash}p{0.7cm}|>{\centering\arraybackslash}p{0.7cm}|>{\centering\arraybackslash}p{0.7cm}}
\multicolumn{2}{c|}{\textbf{Livejournal}} & 200 & 400 &  800   & 1000  \\
\midrule
\multirow{2}{*}{Inf ($10^3$)} & MaxInf  &49.5& 91.3 &157.8 & 188.2 \\
&RS& 90.9 & 170.2 & 289.7 & 334.9 \\
\midrule
\multirow{2}{*}{Mem (G)} & MaxInf  &9.5 & 13.1 & 15.5& 24.9\\
&RS& 7.4 & 10.6 & 15.0 & 23.1\\
\midrule
\multirow{2}{*}{Time (min)} & MaxInf  &13 & 19& 25 & 27\\
&RS & 15 & 23 & 31 & 33
\end{tabular}

\end{table*}

\clearpage
\section{Proofs}
\label{sec: proof}

\subsection{Proof of Lemma \ref{lemma: max_mim}}
\label{proof: lemma: max_mim}

A reduction is given as follows.

\textbf{Reduction Construction.} For an instance $(\overline{G}=(\overline{V},\overline{E}), \overline{k} )$ of DkS, we construct an instance of Max-MIM as illustrated in Fig. \ref{fig: reduction_max}. For each $v \in \overline{V}$, we add one node $x_v$ to the graph. For each $(u, v) \in \overline{E}$, we add one node $y_{(u, v)}$ and two edges $(x_v, y_{(u, v)})$ and $(x_u, y_{(u, v)})$. Finally, we add a node $z$ with an edge $(z, y_{(u, v)})$ for each node $y_{(u, v)}$. We assume that there are two existing cascades $c_1$ and $c_2$, where the seed set of $c_1$ is $\{x_v| v\in \overline{V}\}$ and the seed set of $c_2$ is $\{z\}$. We consider the cascade-based activation function defined in Def. \ref{def: cascade_function}. For each $x \in \{x_v| v\in \overline{V}\}$, we assume that $F_x(\{c_1,c_{new}\})=c_{new}$. For each $y \in \{y_{(u,v)}| (u,v) \in \overline{E}\}$, we assume that $F_y(\{c_2,c_{new}\})=c_{new}$ and $F_y(\{c_2, c_1, c_{new}\})=c_2$. We set $p_e=1$ for each edge, $V_{c}=\{x_v| v\in \overline{V}\}$, and $k=\overline{k}$.

\begin{figure}[t]
	\begin{center}
		\includegraphics[width=0.46\textwidth]{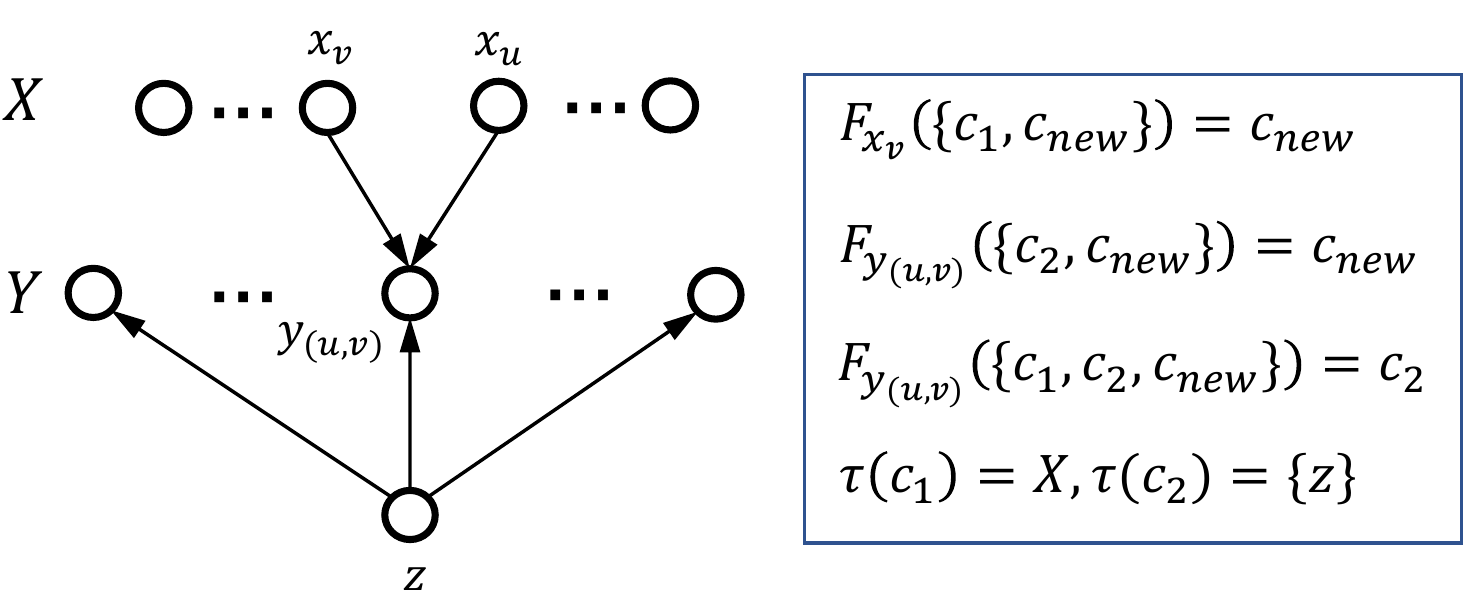} 
	\end{center} 
	\caption{Reduction from DkS to Max-MIM.}
	\label{fig: reduction_max}
\end{figure}
\begin{proof}

\textbf{Analysis.} Each seed set $S \subseteq V_c$ of $c_{new}$ with size $k$ corresponds to a $\overline{k}$-subgraph of DkS. Let $f^*(S)$ be the number of the edges in the subgraph. For each node $x$ in $\{x_v| v\in \overline{V}\}$, they will be $c_{new}$-active if and only if they are selected as a seed node. For each node $y_{(u,v)}$ in $\{y_{(u,v)}| (u,v) \in \overline{E}\}$, they will be $c_{new}$-active if and only if both $x_u$ and $x_v$ are selected as the seed nodes due to the activation function at $y_{(u,v)}$. Thus, we have $f(S)=k+f^*(S)$ and both instances have the same optimal solution denoted as $S_{opt}^{*}$. With loss of generality, we only consider the candidate $S$ with $f^*(S) \geq k$.\footnote{This is because computing a $k$-subgraph with at least $k$ edges, if any, can be done polynomial time \cite{asahiro2002complexity}.} Now suppose $S^*$ is an $\alpha$-approximation to Max-MIM for some $\alpha>1$, we have 
\begin{align*}
&\alpha \cdot  f(S^*)\geq f(S^*_{opt}) \\
\Leftrightarrow & \alpha \cdot \big(k+f^*(S^*)\big)  \geq \big(k+f^*(S_{opt}^*)\big)\\
\Rightarrow & 2  \alpha \cdot f^*(S^*)  \geq   f^*(S_{opt}^*).
\end{align*}

Thus proved.
\end{proof}

\subsection{Proof of Lemma \ref{lemma: reduction}}
\label{proof: lemma: reduction}
\begin{proof}
We prove this by a reduction. 

\begin{figure}[t]
\begin{center}
\includegraphics[width=0.44\textwidth]{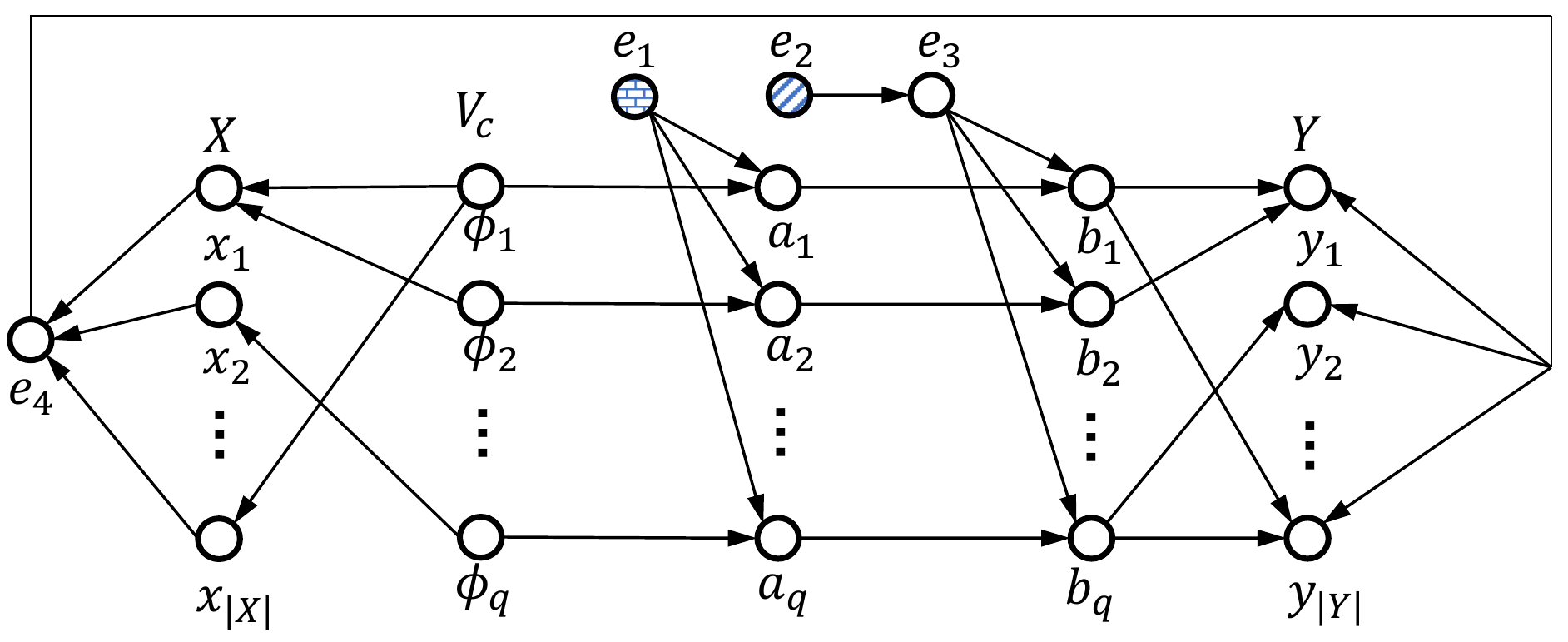} 
\end{center} 
\vspace{-3mm}
\caption{Reduction from k$\pm$PSC to Min-MIM.}
\label{fig: reduction}
\vspace{-4mm}
\end{figure}

\textbf{Reduction Construction.} Consider an arbitrary instance $(X,Y,\Phi)$ of k$\pm$PSC with $X=\{{x}_1,...,{x}_{|X|}\}$, $Y=\{{y}_1,...,{y}_{|Y|}\}$, $\Phi=\{{\phi}_1,...,{\phi}_q\}$ and $\overline{k}$. In what follows, we construct an instance of Min-MIM as shown in Fig. \ref{fig: reduction}. For each element ${x}_i$ in $X$, we add one node $x_i$ to the graph.\footnote{Without causing any confusion, we use the same symbol for the items in two instances that refer to the same object} For each element ${\phi}_i$ in $\Phi$, we add three nodes $\phi_i$, $a_i$ and $b_i$ to the graph. For each element ${y}_i$ in $Y$, we add one node $y_i$ to the graph. Finally, we add four nodes $e_1, e_2, e_3$ and $e_4$. We add two edges $(\phi_i,a_i)$ and $(a_i,b_i)$ for each triple ($\phi_i$, $a_i$, $b_i$). For each pair of $i$ and $j$, we add an edge from $\phi_i$ to $x_j$ if and only if ${x}_j \in {\phi}_i$, and, add an edge from $b_i$ to $y_j$ if and only if ${y}_j \in {\phi}_i$. For each $i$, we add two edges $(e_1,a_i)$ and $(e_3,b_i)$. For each $x_i$, we add one edge $(x_i,e_4)$. For each $y_i$, we add one edge $(e_4,y_i)$. Finally, we add one edge $(e_2,e_3)$. We assume that there are three cascades $c_1$, $c_2$ and $c_{new}$. The seed sets of $c_1$ and $c_2$ are $\tau(c_1)=\{e_1\}$ and $\tau(c_2)=\{e_2\}$, respectively. We consider the total-ordered-based activation function defined in Def. \ref{def: cascade_function}, and define the settings: $F_{a_i}(c_{new})>F_{a_i}(c_1)$ for each $a_i$; $F_{b_i}(c_1)> F_{b_i}(c_2)>F_{b_i}(c_{new})$ for each $b_i$; $F_{y_i}(c_2)>F_{y_i}(c_{new})> F_{y_i}(c_1)$ for each $y_i$. We set $p_e=1$ for each edge $e\in E$, $V_{c}=\{\phi_1,...,\phi_q\}$, and $k=\overline{k}$.

\textbf{Analysis.} According to the construction, both instances have the same candidate space. For a subset ${\Phi}^* \subseteq \Phi$, we use $\Phi^*$ to denote the corresponding set of nodes. Supposing, a particular seed set $\Phi^*$ of $c_{new}$ is selected, let us examine $\overline{f}(\Phi^*)$ by analyzing which nodes will not be activated by $c_{new}$. We consider the nodes group by group.

\textit{Group 1.} The nodes in $\{\phi_1,...,\phi_q\}$ will be activated by $c_{new}$ if and only if they are selected as seed nodes. Therefore, the contribution to $\overline{f}(\Phi^*)$ from this part is $q-k$.

\textit{Group 2.} Each node $x_i$ in $\{x_1,...,x_{|X|}\}$ can be activated by $c_{new}$ if and only if there is an edge from a seed node in $\Phi^*$ to $x_i$, i.e, ${x}_i \in {\phi}_j$ for some ${\phi}_j \in {\Phi}^*$. Therefore, the contribution to $\overline{f}(\Phi^*)$ from $\{x_1,...,x_{|X|}\}$ is $|X\setminus \big(\cup_{\phi \in {\Phi}^*} \phi\big)|$.

\textit{Group 3.} Each node $a_i$ in $\{a_1,...,a_{q}\}$ can be activated by $c_{new}$ if and only if $\phi_i$ is in $\Phi^*$ since otherwise they will be activated by $c_1$ from $e_1$. Therefore, the contribution from this part is $q-k$.

\textit{Group 4.} Each node $b_i$ in $\{b_1,...,b_{q}\}$ can never be activated by $c_{new}$. Note that $a_i$ will activated by either $c_1$ or $c_{new}$. There are two sub-cases:

Case 1. If $a_i$ is activated by $c_{new}$, $b_i$ will  be activated by $c_{new}$ and $c_2$ at the same time, and consequently $b_i$ will be $c_2$-active due to the activation function at $b_i$. This case happens if and only if $\phi_i \in \Phi^*$.

Case 2. If $a_i$ is activated by $c_{1}$, $b_i$ will  be activated by $c_{1}$ and $c_2$ at the same time, and consequently $b_i$ will be $c_1$-active due to again the activation function at $b_i$. This case happens if and only if $\phi_i \notin \Phi^*$.

Thus, the contribution to $\overline{f}(\Phi^*)$ from this part is $q$.

\textit{Group 5.} Each node $y_i$ in $\{y_1,...,y_{|Y|}\}$ will be either $c_{new}$-active or $c_2$-active, depending on whether or not ${y}_i$ is in ${\phi}_j$ for some ${\phi}_j \in {\Phi}^*$. There are two sub-cases:

Case 1: If ${y}_i$ is in ${\phi}_j$ for some ${\phi}_j \in {\Phi}^*$, $b_j$ will be $c_2$-active according to the analysis of $b_j$. The activation function at $y_i$ will make $y_i$ be $c_2$-active.

Case 2: If ${y}_i$ is not in any ${\phi}_j$ in ${\Phi}^*$, all the nodes in $\{b_j\}$ connected to ${y}_i$ will be $c_1$-active and in this case $y_i$ will be simultaneously activated by $c_1$ from $\{b_j\}$ and $c_{new}$ from $e_4$. Therefore, it will be $c_{new}$-active due to again the activation function at $y_i$. 

As a result, the total contribution to $\overline{f}(\Phi^*)$ from this part is $|Y\cap \big(\cup_{\phi \in {\Phi}^*} \phi\big)|$.

\textit{Group 6.} Finally, $e_4$ will always be $c_{new}$-active, and, $e_1$, $e_2$, and $e_3$ can never be $c_{new}$-active.

Therefore, for each candidate $\Phi^* \subseteq \Phi$ of MIM, we have its objective value ${f}(\Phi^*)$ as 
\begin{align*}
&\overline{f}(\Phi^*)\\
=&3q-2k+3+ |X\setminus \big(\cup_{\phi \in {\Phi}^*} \phi\big)|+|Y\cap \big(\cup_{\phi \in {\Phi}^*} \phi\big)| \\
 =&\beta+cost({\Phi}^*),
\end{align*}
where we have defined $\beta \define 3q-2k+3$ which is independent of $\Phi^*$. It is clear that both instances have the same optimal solution denoted as ${\Phi}^*_{opt}$. Now suppose ${\Phi}^*$ is an approximation solution to MIM for a certain $\alpha>1$. We have

\begin{align*}
&&\overline{f}(\Phi^*)&\leq& &\alpha \cdot \overline{f}(\Phi^*_{opt}).\\
&\Leftrightarrow& \beta+cost({\Phi}^*)  &\leq&& \alpha \cdot (\beta+cost({\Phi}^*_{opt})) \\
&\Leftrightarrow& cost({\Phi}^*)  &\leq&& \big( \frac{(\alpha-1) \cdot \beta }{cost({\Phi}^*_{opt})} +\alpha \big) \cdot cost({\Phi}^*_{opt}) \\
&\Rightarrow& cost({\Phi}^*)  &\leq&& \big( \alpha\cdot \beta +\alpha \big) \cdot  cost({\Phi}^*_{opt}) 
\end{align*}
Note that $\alpha\cdot \beta +\alpha=\Theta(\alpha\cdot q)$ and $|V_c|=q$. Thus, proved.
\end{proof}

\subsection{Proof of Theorem \ref{theorem: key_1}}
\label{proof: theorem: key_1}
Theorem \ref{theorem: key_1} is established on the following two claims.

\begin{claim}
$\E[\underline{g}(S , \R_v)] \leq \E[g(S ,\R_v)]\leq E[\overline{g}(S ,\R_v)]$
\end{claim}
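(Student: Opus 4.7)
My plan is to establish the pathwise (per-sample) inequality
\[
\underline{g}(S,R_v) \leq g(S,R_v) \leq \overline{g}(S,R_v)
\]
for every realization $R_v=(G_v,\underline{S}_v,\overline{S}_v)$ returned by Alg.~\ref{alg: rrtuple_v}, and then take expectations. The two pathwise inequalities are handled separately.

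For the upper bound $g(S,R_v)\leq \overline{g}(S,R_v)$, I will argue contrapositively: if $S\cap \overline{S}_v=\emptyset$ then no $c_{new}$-seed lies in $V(G_v)=\overline{S}_v$, so in the deterministic process on $\M_{R_v}$ the cascade $c_{new}$ has no starting point and $v$ cannot be $c_{new}$-active, whence $g(S,R_v)=0=\overline{g}(S,R_v)$.

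For the lower bound $\underline{g}(S,R_v)\leq g(S,R_v)$, assume $S\cap \underline{S}_v\neq \emptyset$ and pick $u\in S\cap \underline{S}_v$. I will exploit two structural facts about Alg.~\ref{alg: rrtuple_v}: (i) the reverse BFS level $\ell(w)$ of any discovered $w$ equals the forward shortest-path distance $d_{G_v}(w,v)$ in $G_v$; (ii) either the BFS terminated with $V_t=\emptyset$, in which case $\underline{S}_v=\overline{S}_v$ and $V(G_v)\cap \tau^*=\emptyset$, or it terminated on the first iteration whose frontier $V_t$ intersected $\tau^*$, in which case every node of $V(G_v)\cap \tau^*$ lies in $\overline{S}_v\setminus\underline{S}_v$ and hence has BFS level equal to some $\ell^{*}>\ell(u)$. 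In both cases every existing-cascade seed present in $V(G_v)$ satisfies $d_{G_v}(s,v)\geq \ell^{*}>\ell(u)$.

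The core step is then to show, by induction along a fixed shortest-path realization $u=w_0,w_1,\dots,w_{\ell(u)}=v$ in $G_v$, that $w_i$ becomes $c_{new}$-active at time $i$ in $\M_{R_v}$ (recall all edge probabilities in $\M_{R_v}$ are $1$, so the process is deterministic and shortest-path-rate-limited). The inductive step is where the real work sits: I need to rule out any competing cascade $c'\neq c_{new}$ activating $w_i$ at any time $\leq i$. For this I will use the triangle inequality
\[
d_{G_v}(s,w_i) \geq d_{G_v}(s,v) - d_{G_v}(w_i,v) \geq \ell^{*} - (\ell(u)-i) > i
\]
for every seed $s$ of an existing cascade in $V(G_v)$; since no cascade can propagate faster than shortest-path rate, $w_i$ sees no other cascade arrive at or before time $i$, and the activation-function rule $F_{w_i}(O)\in\{\pi_{t-1}(u)\mid u\in A\}$ forces $w_i$ to become $c_{new}$-active. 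This triangle-inequality bookkeeping is the main obstacle, as it is what couples the BFS depth $\ell^{*}$ of the ``boundary'' of $\underline{S}_v$ to the arrival times of competing cascades at arbitrary intermediate nodes off the BFS frontier; I expect the rest of the induction to be routine. Applied at $i=\ell(u)$ it yields $g(S,R_v)=1$.

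Taking expectations over $\R_v$ in the pathwise inequality yields the two outer inequalities of the claim. The middle equality $\E[g(S,\R_v)]=f_v(S)$ is the standard reverse-sampling identity, obtained by coupling the random choices in Alg.~\ref{alg: rrtuple_v} to a realization of the forward IMC process: the reverse BFS from $v$ exactly enumerates all ancestors from which $c_{new}$ could reach $v$ first in that realization, and $g(S,R_v)=1$ exactly captures this event; I will record this coupling formally in the write-up and defer the fully detailed probability-space construction to the appendix the authors cite.
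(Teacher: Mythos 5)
Your proposal is correct and takes essentially the same route as the paper: both establish the pointwise inequality $\underline{g}(S,R_v)\le g(S,R_v)\le \overline{g}(S,R_v)$ for every realization and then take expectations, with the upper bound following from the absence of any $c_{new}$-seed in $V(G_v)$ and the lower bound from the fact that every node of $\underline{S}_v$ is strictly closer to $v$ in $G_v$ than any existing-cascade seed. Your triangle-inequality induction simply spells out the one-sentence distance argument the paper uses for the $\underline{g}\le g$ direction.
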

\begin{proof}
It suffice to show that $\underline{g}(S , R_v) \leq g(S ,R_v)\leq \overline{g}(S ,R_v)$ for each $R_v=(G_v, \underline{S}_v, \overline{S}_v)$. Since these functions are binary valued, it suffices to prove $\underline{g}(S , R_v)=1 \Rightarrow g(S ,R_v)=1$ and $g(S ,R_v)=1 \Rightarrow \overline{g}(S ,R_v)=1$.

Suppose $\underline{g}(S , R_v)=1$. We have $S \cap \underline{S}_v \neq \emptyset$ and therefore, $v$ will be $c_{new}$-active in $\M_{R_v}$ under $S$ because $\underline{S}_v$ are the nodes with distance to $v$ shorter than that from any node in $\cup_{c \in C_e} \tau(c)$, implying that $g(S ,R_v)=1$.

Suppose $g(S ,R_v)=1$. Then at least one node in $V(G_v)$ (i.e. $\overline{S}_v$) is selected as the seed node of $c_{new}$, which means $\overline{g}(S ,R_v)=1$
\end{proof}

\begin{claim}
$\E[g(S ,\R_v)]=f_v(S)$.
\end{claim}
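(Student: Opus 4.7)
The plan is to prove the claim by a coupling argument between the forward diffusion in $\M$ and the reverse sampling executed by Alg.~\ref{alg: rrtuple_v}. I first introduce a realization $\omega$: for each edge $(u_1,u_2)\in E$ independently declare it live with probability $p_{(u_1,u_2)}$ and dead otherwise. Because activation functions are deterministic, the forward diffusion in $\M$ is a deterministic function of $\omega$ once all seed sets are fixed, so $f_v(S)=\Pr_\omega[v \text{ is } c_{new}\text{-active in }\M\text{ under }(\omega,S)]$.

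I would then couple the coin flips performed in lines 15--17 of Alg.~\ref{alg: rrtuple_v} with the live/dead status of the corresponding edges in $\omega$. Under this coupling each realization $\omega$ is paired with a particular output $R_v=(G_v,\underline{S}_v,\overline{S}_v)$, and because the BFS only queries the status of edges whose head has already been discovered, the statuses of the unqueried edges remain independent and can be integrated out. It therefore suffices to show that for each coupled pair $(\omega,R_v)$ one has $g(S,R_v)=\mathbb{I}[v\text{ is }c_{new}\text{-active in }\M\text{ under }(\omega,S)]$, since taking expectation over $\omega$ then yields $\E[g(S,\R_v)]=f_v(S)$.

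The substantive step, which I expect to be the main obstacle, is establishing this pointwise identity. Let $t^*$ denote the number of BFS iterations executed in Alg.~\ref{alg: rrtuple_v}. By construction $\overline{S}_v=V(G_v)$ is exactly the set of nodes whose shortest live-edge path to $v$ in $\omega$ has length at most $t^*$, and the BFS halts either because this set is exhausted or because $t^*$ equals the distance in $\omega$ from $\cup_{c\in C_e}\tau(c)$ to $v$. Two properties then have to be verified: (a) seeds of any cascade lying outside $\overline{S}_v$ are too far from $v$ in $\omega$ to affect the cascade that first reaches $v$, so only seeds inside $\overline{S}_v$ matter; (b) a $c_{new}$-seed placed in $\underline{S}_v$ reaches $v$ strictly before any existing cascade, while one placed in $\overline{S}_v\setminus\underline{S}_v$ reaches $v$ simultaneously with some existing cascade and must prevail through the activation functions. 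Consequently, the winning cascade at $v$ in $\M$ is determined solely by the intersection of $S$ with $\overline{S}_v$, by the existing-cascade seeds lying in $\overline{S}_v$, and by the activation functions at the nodes of $G_v$---which is precisely the data encoded in $\M_{R_v}$. This identifies $g(S,R_v)$ with the indicator that $v$ is $c_{new}$-active in $\M$ under $(\omega,S)$, and averaging over $\omega$ completes the argument.
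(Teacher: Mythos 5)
Your proposal is correct and follows essentially the same route as the paper: it partitions (equivalently, couples) the live-edge realizations according to the RR-tuple they induce, reduces the identity to the pointwise statement that $g(S,R_v)$ equals the forward-diffusion indicator on each compatible realization, and argues that only the seeds and activation functions captured inside $G_v$ can determine which cascade first reaches $v$. Your handling of the early-termination level $t^*$ and of the simultaneous-arrival case for seeds in $\overline{S}_v\setminus\underline{S}_v$ is, if anything, slightly more explicit than the paper's one-line justification of that pointwise step.
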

\begin{proof}
Let us first introduce some notations. Following \cite{kempe2003maximizing}, we speak of an edge $(u,v)$ as begin \textit{live} (resp., \textit{dead}) if $u$ can (resp., cannot) activate $v$. Therefore, the basic event space is given by the pairs $\T=\{\big(E_t(T),E_f(T)\big): E_t(T), E_f(T) \subseteq E, E_t(T) \cap E_f(T) =\emptyset\}$ where $E_t(T)$ and $E_f(T)$ intuitively denote the sets of the live and dead edges, respectively. We call each $T \in \T$ a \textit{realization}, and use $\Pr[T]$ to denote the probability that it can be realized where $\Pr[T]=\prod_{e\in E_t(T)} p_e \prod_{e\in E_f(T)} (1-p_e)$.  Abusing the notation slightly, for each $T \in T$, let us define
\[g(S,T) \define
  \begin{cases}
  1 &  \hspace{0mm} \hspace{-0.5mm} \text{$v$ is $c_{new}$-active under $S$ in $T$} \\
  0 & \hspace{0mm} \hspace{-0.5mm} \text{otherwise } 
  \end{cases},\]
and therefore we have $f_v(S)=\sum_{T \in \T} \Pr[T] \cdot g(S,T)$. On the other hand, let $\chi(v)$ be all the possible $R_v$ return by Alg. \ref{alg: rrtuple_v}, and we use $\Pr[R_v]$ to denote probability that $R_v$ can be generated. For each $R_v=(G_v, \underline{S}_v, \overline{S}_v) \in \chi(v)$, we use $E_t(R_v)$ (resp., $E_f(R_v)$) to denote the set of the edges in $E$ that pass (resp., do not pass) the test in Line 16 in Alg. \ref{alg: rrtuple_v}. Therefore, $\Pr[R_v]$ can be explicitly given as $\Pr[R_v]=\prod_{e\in E_t(R_v)} p_e \prod_{e\in E_f(R_v)} (1-p_e)$.

For a pair $T \in \T$ and $R_v=(G_v, \underline{S}_v, \overline{S}_v) \in \chi(v)$, we say $T$ is compatible with $R_v$ if and only if $E_t(T) \subseteq E_t(R_v)$ and $E_f(T) \subseteq E_f(R_v)$, denoted as $T \sim R_v$. One can easily check that each $T \in \T$ is compatible to exactly one $R_v$ in $\chi(x)$, and thus we can partition $\T$ as $\big\{ \{T \in \T: T \sim R_v \} \big\}_{R_v \in \chi(x)}$. As a result, we have $f_v(S)=\sum_{R_v \in \chi(v)} \sum_{T \sim R_v} \Pr[T] \cdot g(S,T)$. For each pair of $T$ and $R_v$ with $T \sim R_v$, we can see that whether or not $v$ can be $c_{new}$-active under a seed set $S$ in $T$ depends only on $S\cap V(G_v)$ because there is no path with all live edges in $T$ from $V \setminus V(G_v)$ to $v$ due to the construction in Alg. \ref{alg: rrtuple_v}, implying that $g(S,T)=g(S,R_v)$, and therefore $f_v(S)=\sum_{R_v \in \chi(v)} \sum_{T \sim R_v} \Pr[T] \cdot g(S,R_v)$. Finally, we have $\sum_{T \sim R_v} \Pr[T]=\Pr[R_v]$ for each $R_v \in \chi(v)$ of which the proof is elementary. Thus, we have \[\sum_{R_v \in \chi(v)} \sum_{T \sim R_v} \Pr[T] \cdot g(S,R_v)=\sum_{R_v \in \chi(v)} \Pr[R_v] \cdot g(S,R_v),\] which completes the proof.
\end{proof}

\subsection{Proof of Corollary \ref{coro: key}}
\label{proof: coro_key}
Note that we have $\E[g(S, \R)]=\sum_{v \in V}\frac{1}{n}\cdot \E[g(S ,\R_v)]$ for $g(S ,\R_v)$ as well as $\underline{g}(S ,\R_v)$ and $\overline{g}(S ,\R_v)$. Combining Theorem \ref{theorem: key_1}, we have 
\[\E[n\cdot \underline{g}(S, \R)] \leq \E[n\cdot g(S, \R)]=\sum_{v \in V} f_v(S) \leq E[n\cdot \overline{g}(S, \R)].\]
Finally, we have $\sum_{v \in V}{f_v(S)} ={f(S)}$.

\subsection{Proof of Lemma \ref{lemma: time}}
\label{proof: lemma_time}
Let $\E[\TIME_{\R_v}]$ be the expected running time of generating one RR-tuple of $v$ for a node $v \in V$, and we therefore have $\E[\TIME_{\R}]=\frac{1}{n}\sum_{v \in V}\E[\TIME_{\R_v}]$. According to Alg. \ref{alg: rrtuple_v}, it is clear that the running time of generating one RR-tuple of $v$ is bounded by the number the edges tested in Line 14 in Alg. \ref{alg: rrtuple_v}. For an edge $(u_1,u_2)$, it will be tested in Alg. \ref{alg: rrtuple_v} if and only if $u_2$ is in $\underline{S}_v$, and thus we have $\E[\TIME_{\R_v}]=\sum_{ (u_1,u_2)\in E}\E[\underline{g}(\{u_2\},\R_v)]$. As a result, 
\begin{align*}
\E[\TIME_{\R}]=\frac{1}{n}\sum_{v \in V}\sum_{ (u_1,u_2)\in E}\E[\underline{g}(\{u_2\},\R_v)]\\
=\sum_{ (u_1,u_2)\in E}\frac{1}{n}\sum_{v \in V}\E[\underline{g}(\{u_2\},\R_v)]
\leq m\cdot \max_{|S|=1}\E[\underline{g}(S ,\R)].
\end{align*}

\begin{figure}[tp]
	\begin{center}
		\includegraphics[width=0.40\textwidth]{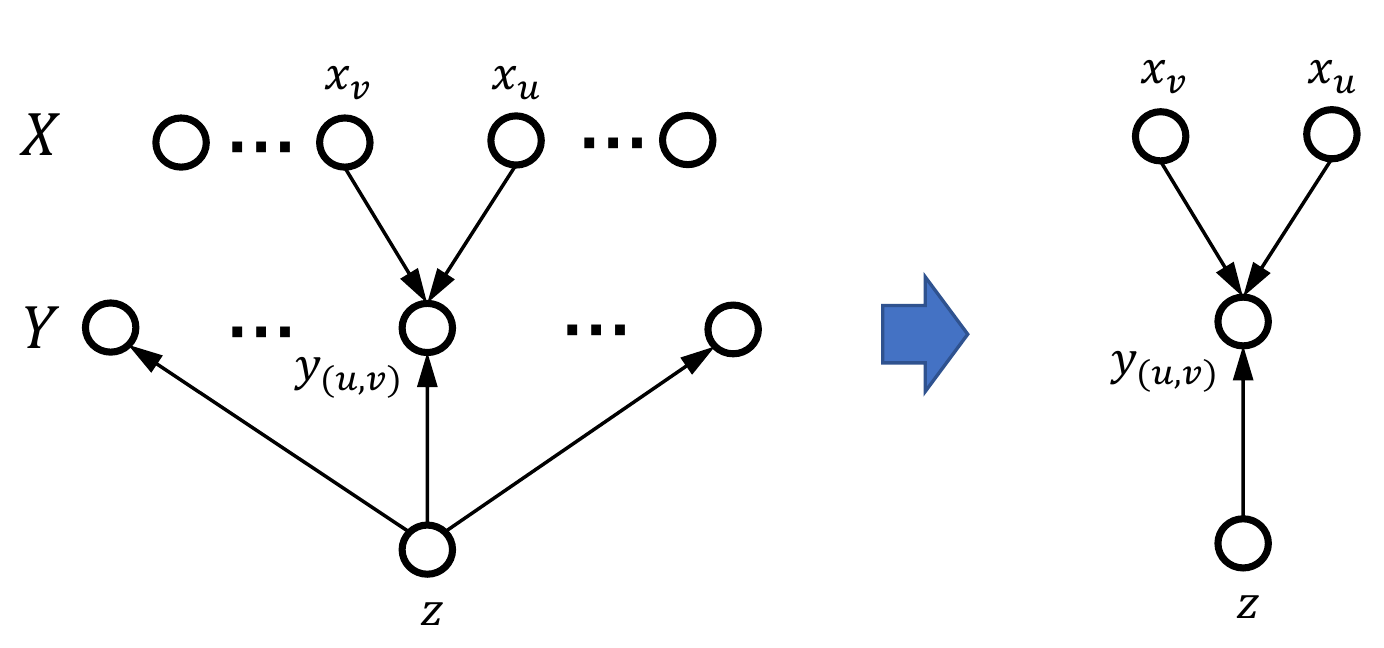} 
	\end{center} 
	\caption{Reduction from DkS to Problem \ref{problem: max_g}. An example of the RR-tuple associated with $y_{(u,v)}$.}
	\label{fig: reduction_max_g}
\end{figure}

\subsection{Proof of Lemma \ref{lemma: max_g}}
\label{proof: lemma_max_g}
Given an instance of the DkS problem, let us consider its corresponding graph Fig. \ref{fig: reduction_max} and construct an instance of Problem \ref{problem: max_g} as follows. For each node $y_{(u,v)}$, we construct an RR-tuple $R_{y_{(u,v)}}$ with subgraph $G_{y_{(u,v)}}$ consisting of the nodes and edges that are connected to $y_{(u,v)}$. An example is given in Fig. \ref{fig: reduction_max_g}. The settings of budget, $V_c$, and activation functions remain the same as that in Fig. \ref{fig: reduction_max}. For each $S \subseteq V_c$ with $|S|=k$, $g(S,R_{y_{(u,v)}})=1$ (i.e., $y_{(u,v)}$ is $c_{new}$-active under $S$ in $G_{y_{(u,v)}}$) if and only if both $x_u$ and $x_v$ are in $S$. Therefore, both instances have the same objective function value, and the rest of the analysis is similar to that of Lemma \ref{lemma: max_mim}.

\subsection{Proof of Lemma \ref{lemma: l}}
\label{proof: lemma_l}
It can be proved through the following two claims together with the union bound.
\begin{claim}
Eq. \ref{eq: c1} holds with probability at least $1-1/N$ if $l \geq l_1/f(S_{opt})$.
\end{claim}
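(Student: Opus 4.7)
The plan is to first observe that since $S^*$ is the output of Alg.~\ref{alg: framework} and therefore depends on the random sample collection $\Psi_l$, we cannot apply a concentration inequality directly to the pair $(S^*,\Psi_l)$. Instead, I would prove the stronger uniform statement
\[
\Pr\Bigl[\exists\,S\subseteq V_c,\ |S|=k:\ \G(S,\Psi_l)-f(S) > \epsilon_1\cdot f(S_{opt})\Bigr] \le \tfrac{1}{N},
\]
and then the desired bound on $\G(S^*,\Psi_l)-f(S^*)$ follows immediately. A union bound over the $\binom{n}{k}$ candidate seed sets reduces the problem to a single-$S$ tail estimate, which is then handled by the standard multiplicative Chernoff bound for sums of $\{0,1\}$-valued random variables, using the unbiasedness $\E[\G(S,\Psi_l)]=f(S)$ established in Corollary~\ref{coro: key}.

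For a fixed $S$ with $|S|=k$, let $X_i\define g(S,\R_i)\in\{0,1\}$ for $i=1,\dots,l$, so that $\mu_S\define\E[X_i]=f(S)/n$ and $\G(S,\Psi_l)=\tfrac{n}{l}\sum_i X_i$. The event $\G(S,\Psi_l)-f(S)>\epsilon_1 f(S_{opt})$ is equivalent to $\sum_i X_i - l\mu_S > \delta \cdot l\mu_S$ with the choice $\delta = \epsilon_1 f(S_{opt})/f(S)$. Applying the Chernoff bound $\Pr[\sum_i X_i - l\mu_S \ge \delta l\mu_S] \le \exp\bigl(-\tfrac{\delta^2 l\mu_S}{2+\delta}\bigr)$ and simplifying with $\mu_S=f(S)/n$, the exponent becomes
\[
\frac{\delta^2 l\mu_S}{2+\delta} = \frac{l\,\epsilon_1^2\, f(S_{opt})^2}{n\bigl(2 f(S) + \epsilon_1 f(S_{opt})\bigr)} \;\ge\; \frac{l\,\epsilon_1^2\,f(S_{opt})}{n(2+\epsilon_1)},
\]
where the inequality uses $f(S)\le f(S_{opt})$.

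Substituting $l \ge l_1/f(S_{opt}) = \tfrac{n(\ln\binom{n}{k}+\ln N)(2+\epsilon_1)}{\epsilon_1^2 f(S_{opt})}$ makes this exponent at least $\ln\binom{n}{k}+\ln N$, so the per-$S$ failure probability is at most $1/\bigl(N\binom{n}{k}\bigr)$. Taking a union bound over all $\binom{n}{k}$ sets of size $k$ yields the uniform bound, which in particular covers $S=S^*$ and proves the claim. The only subtle step is recognizing that $\delta$ must be chosen so that the additive threshold scales with $f(S_{opt})$ rather than $f(S)$, which is what allows the ``$2+\epsilon_1$'' factor in $l_1$ to absorb the worst-case ratio between $f(S_{opt})$ and $f(S)$; the remaining computations are routine and borrow directly from the analysis style of \cite{tang2014influence}.
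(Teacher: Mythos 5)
Your proof is correct and follows essentially the same route as the paper's: a union bound over the $\binom{n}{k}$ candidate sets combined with the multiplicative Chernoff bound applied to the i.i.d.\ indicators $g(S,\R_i)$, with the deviation parameter $\delta$ chosen proportional to $f(S_{opt})/f(S)$ so that $f(S)\le f(S_{opt})$ yields the $(2+\epsilon_1)$ factor in $l_1$. The only (immaterial) difference is that the paper inserts an extra $\gamma(\Psi_l)$ factor into the deviation threshold, whereas you bound the deviation by $\epsilon_1\cdot f(S_{opt})$ exactly as Eq.~(\ref{eq: c1}) states, which is if anything the cleaner formulation since $\gamma(\Psi_l)$ is itself a function of the random sample.
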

\begin{proof}

For each $S \subseteq V_c$ with $|S|=k$, by Eq. (\ref{eq: G})~and Corollary \ref{coro: key}, $\Pr[\G(S,\Psi_l)-f(S)\geq \epsilon_1 \cdot \gamma(\Psi_l)\cdot f(S_{opt})]$ is rephrased as  \[\Pr\Big[{\sum_{\R \in \Psi_l}g(S, \R)}-l\cdot\E[ g(S ,\R)]\geq l\cdot \E[ g(S ,\R)]\cdot \delta\Big] \] where $\delta \define \frac{\epsilon_1 \cdot \gamma(\Psi_l)\cdot f(S_{opt})}{n \cdot \E[g(S ,\R)] }$. According to the Chernoff Bound, the above probability is no larger than $\exp(-\frac{l \cdot \E[g(S ,\R)] \cdot \delta^2}{2+\delta})$, i.e., 
\[\exp(-\frac{l \cdot \epsilon_1^2 \cdot \gamma^2(\Psi_l) \cdot f^2(S_{opt})}{2n^2\cdot\E[g(S ,\R)]+n+\epsilon_1\cdot \gamma(\Psi_l)\cdot f(S_{opt})}).\] Because $\E[g(S ,\R)]=\frac{f(S)}{n}\leq \frac{f(S_{opt})}{n}$ and $l \geq l_1/f(S_{opt})$, this probability is no larger than $\frac{1}{N\cdot \binom{n}{k}}$.
Because there are at most $\binom{n}{k}$ subsets with size $k$, by the union bound, with probability at least $1-1/N$ we have $\G(S,\Psi_l)-f(S)\leq \epsilon_1 \cdot \gamma(\Psi_l)\cdot f(S_{opt})$ for each $S \subseteq V_c$ with $|S|=k$.

\end{proof}

\begin{claim}
Eq. (\ref{eq: c2}) holds with probability at least $1-1/N$ if $l \geq l_2/f(S_{opt})$.
\end{claim}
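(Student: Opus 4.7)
The plan is to treat this as a single-set concentration question, in contrast to the previous claim which required a union bound over all $\binom{n}{k}$ candidate seed sets. The key simplification is that we only need $\G(S_{opt},\Psi_l)$ to concentrate around $f(S_{opt})$ for the \emph{one} fixed set $S_{opt}$, so no union bound factor appears, which explains why $l_2$ lacks the $\ln \binom{n}{k}$ contribution present in $l_1$.

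First, I would invoke Corollary \ref{coro: key} to express the target quantity in a form amenable to Chernoff-style bounds. Writing $X_i = g(S_{opt},\R_i)$ for the $l$ i.i.d.\ indicator variables associated with the samples in $\Psi_l$, we have $\E[X_i] = f(S_{opt})/n$, each $X_i \in \{0,1\}$, and
\[
\G(S_{opt},\Psi_l) \;=\; \frac{n}{l} \sum_{i=1}^{l} X_i.
\]
The event in Eq.~(\ref{eq: c2}) failing is precisely $\sum_i X_i < (1-\epsilon_2)\, l \cdot f(S_{opt})/n$, i.e.\ a lower-tail deviation of a Binomial-like sum with expectation $\mu_{\mathrm{sum}} = l \cdot f(S_{opt})/n$ by a multiplicative factor of $\epsilon_2$.

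Next, I would apply the multiplicative Chernoff lower-tail bound
\[
\Pr\!\left[\sum_i X_i < (1-\epsilon_2)\,\mu_{\mathrm{sum}}\right] \;\leq\; \exp\!\Bigl(-\tfrac{\epsilon_2^{2}\,\mu_{\mathrm{sum}}}{2}\Bigr) \;=\; \exp\!\Bigl(-\tfrac{\epsilon_2^{2}\, l \, f(S_{opt})}{2n}\Bigr).
\]
Substituting the hypothesis $l \geq l_2/f(S_{opt}) = \tfrac{2n \ln N}{\epsilon_2^{2}\, f(S_{opt})}$ makes the exponent at most $-\ln N$, giving a failure probability of at most $1/N$, which is exactly what the claim asserts.

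The only non-routine step is verifying that no union bound is needed: unlike the bound on $\G(S,\Psi_l) - f(S)$ used in the previous claim, which must hold uniformly over all size-$k$ seed sets, here we fix the single (deterministic) optimizer $S_{opt}$ before drawing the samples, so $S_{opt}$ is independent of $\Psi_l$ and a single application of Chernoff suffices. I expect this to be the only subtle point; the arithmetic of plugging in $l_2$ is otherwise immediate.
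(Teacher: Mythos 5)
Your proposal is correct and follows essentially the same route as the paper: a single application of the multiplicative Chernoff lower-tail bound to the i.i.d.\ indicators $g(S_{opt},\R_i)$ with mean $f(S_{opt})/n$, yielding the failure probability $\exp\bigl(-\epsilon_2^2\, l\, f(S_{opt})/(2n)\bigr) \leq 1/N$ under the stated choice of $l$. Your observation that no union bound is needed because $S_{opt}$ is a single fixed set (which is exactly why $l_2$ lacks the $\ln\binom{n}{k}$ term) is accurate and consistent with the paper's argument.
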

\begin{proof}
According to Eq. (\ref{eq: G})~and Corollary \ref{coro: key}, $\Pr[\G(S_{opt},\Psi_l)-f(S_{opt})\leq -\epsilon_2 \cdot f(S_{opt})]$ is equal to \[\Pr[{\sum_{\R \in \Psi_l}g(S_{opt}, \R)}-l\cdot \E[ g(S_{opt} ,\R)]\leq -\epsilon_2 \cdot l\cdot \E[ g(S_{opt} ,\R)]]\] which is no larger than $\exp(-\frac{l \cdot \E[g(S_{opt} ,\R)] \cdot \epsilon_2^2}{2})$ according to the Chernoff bound. Since $l \geq l_2/f(S_{opt})$, it is further no larger than $1/N$.
\end{proof}

\subsection{Proof of Theorem \ref{theorem: main}}
\label{proof: theorem_main}
Due to Lemmas \ref{lemma: l}, \ref{lemma: lower_opt} and \ref{lemma: estimate}, the Eqs. (\ref{eq: c1}) and (\ref{eq: c2}) hold simultaneously with probability at least $1-3/N$. Eq. (\ref{eq: c3}) is guaranteed by Lemma \ref{lemma: sandwich}. With Eqs. (\ref{eq: c1})-(\ref{eq: c3}), the approximation ratio follows immediately from Lemma \ref{lemma: ratio}. Amplifying $N$ by a constant factor, we have the success probability of $1-1/N$. For those who are interested in a success probability of $1-n^{-l}$, the running time is $O(\frac{(m+n)(l+k)\ln n}{\epsilon^2})$.

\subsection{Proof of Lemma \ref{lemma: exp_upper}}
\label{proof: lemma: exp_upper}
\begin{proof}
According to Theorem \ref{theorem: key_1} and Corollary \ref{coro: key}, it suffices to prove that $\overline{g}(S, R_v)=g(S, R_v)$ for each RR-tuple $R_v=(G_v, \underline{S}_v, \overline{S}_v)$. If $\overline{g}(S, R_v)=1$, then $S \cap \overline{S}_v \neq \emptyset$, and $v$ will be $c_{new}$-active in $\M_{R_v}$ under $S$, which can be proved by inductively showing that all the nodes on the shortest path from $S \cap \overline{S}_v$ to $v$ will be $c_{new}$-active. So we have $g(S, R_v)=1$.  The other part that $g(S, R_v)=1$ implies $\overline{g}(S, R_v)=1$ is trivial.
\end{proof}





\bibliographystyle{IEEEtran}
\bibliography{bib_amo} 

\end{document}